\newtheorem{theorem}{Theorem}
\newtheorem{corollary}{Corollary}[theorem]
\newtheorem{lemma}[theorem]{Lemma}
\newtheorem{definition}{Definition}
\newcommand{\bb}[1]{\bm{#1}}
\newcommand*\circled[1]{\tikz[baseline=(char.base)]{
		\node[shape=circle,draw,inner sep=1pt] (char) {\scriptsize #1};}}
\newcommand{\NPHard}{$\mathsf{NP}$-hard}
\newcommand{\NP}{$\mathsf{NP}$}
\newcommand{\RP}{$\mathsf{RP}$}
\newcommand{\svec}{\bb{s}}
\newcommand{\sset}{\mathcal{S}}
\newcommand{\Expec}{\mathbb{E}}
\newcommand{\defeq}{\vcentcolon=}
\newcommand{\opt}{\textsc{Opt}}
\newcommand{\apx}{\textsc{Apx}}
\newcommand{\Scal}{\mathcal{S}}
\newcommand{\Hcal}{\mathcal{H}}
\newcommand{\kvec}{\bb{k}}
\newcommand{\ind}{	\mathsf{1}}
\newcommand{\Ical}{\mathcal{I}}
\newcommand{\rec}{\mathcal{R}}
\newcommand{\A}{\mathcal{A}}
\newcommand{\zvec}{\bb{z}}
\newcommand{\nState}{d}
\newcommand{\event}{\mathbb{I}}
\newcommand{\M}{\mathcal{M}}
\newcommand{\G}{\mathcal{G}}
\newcommand{\B}{\mathcal{B}}
\newcommand{\Y}{\mathcal{Y}}
\newcommand{\E}{\mathcal{E}}
\newcommand{\xvec}{\bb{x}}
\newcommand{\yvec}{\bb{y}}
\newcommand{\wvec}{\bb{w}}
\newcommand{\vvec}{\bb{v}}
\newcommand{\dvec}{\bb{d}}
\newcommand{\usend}{u^\mathsf{s}}
\newcommand{\K}{\mathcal{K}}
\newcommand{\Ocal}{\mathcal{O}}
\newcommand{\muvec}{\boldsymbol{\mu}}
\newcommand{\lambdavec}{\boldsymbol{\lambda}}
\newcommand{\X}{\mathcal{X}}
\icmltitlerunning{Multi-Receiver Online Bayesian Persuasion}
\begin{document}

\twocolumn[
\icmltitle{Multi-Receiver Online Bayesian Persuasion}



\icmlsetsymbol{equal}{*}

\begin{icmlauthorlist}
\icmlauthor{Matteo Castiglioni}{polimi}
\icmlauthor{Alberto Marchesi}{polimi}
\icmlauthor{Andrea Celli}{polimi}
\icmlauthor{Nicola Gatti}{polimi}
\end{icmlauthorlist}

\icmlaffiliation{polimi}{Politecnico di Milano, Milan, Italy}
\icmlcorrespondingauthor{Matteo Castiglioni}{matteo.castiglioni@polimi.it}

\icmlkeywords{Bayesian persuasion, multi-receiver, online learning}

\vskip 0.3in
]



\printAffiliationsAndNotice{}  

\begin{abstract}
	Bayesian persuasion studies how an informed sender should partially disclose information to influence the behavior of a self-interested receiver.
	Classical models make the stringent assumption that the sender knows the receiver's utility.
	This can be relaxed by considering an \emph{online learning} framework in which the sender repeatedly faces a receiver of an unknown, adversarially selected type.
	We study, for the first time, an online Bayesian persuasion setting with \emph{multiple receivers}.
	We focus on the case with \emph{no externalities} and binary actions, as customary in offline models.
	Our goal is to design \emph{no-regret} algorithms for the sender with polynomial per-iteration running time.
	First, we prove a negative result: for any $0 < \alpha \leq 1$, there is no polynomial-time no-$\alpha$-regret algorithm when the sender's utility function is \emph{supermodular} or \emph{anonymous}.
	%
	Then, we focus on the case of \emph{submodular} sender's utility functions and we show that, in this case, it is possible to design a polynomial-time no-$\left( 1 - \frac{1}{e} \right)$-regret algorithm.
	To do so, we introduce a general \emph{online gradient descent} scheme to handle online learning problems with a finite number of possible loss functions.
	This requires the existence of an approximate projection oracle. We show that, in our setting, there exists one such projection oracle which can be implemented in polynomial time.
\end{abstract}

\section{Introduction}\label{sec:introduction}

Bayesian persuasion was originally introduced by~\citet{kamenica2011bayesian} to model multi-agent settings where an informed \emph{sender} tries to influence the behavior of a self-interested \emph{receiver} through the strategic provision of payoff-relevant information.
Agents' payoffs are determined by the receiver's action and some exogenous parameters collectively termed the \emph{state of nature}, whose value is drawn from a common prior distribution and observed by the sender only.
Then, the sender decides how much of her/his private information has to be revealed to the receiver, according to a public randomized policy known as \emph{signaling scheme}. 
From the sender's perspective, this begets a decision-making problem that is essentially about controlling ``who gets to know what''.
This kind of problems are ubiquitous in application domains such as auctions and online advertising~\citep{bro2012send,emek2014signaling,badanidiyuru2018targeting}, voting~\citep{alonso2016persuading,cheng2015mixture,castiglioni2019persuading,castiglioni2020persuading}, traffic routing~\citep{vasserman2015implementing,bhaskar2016hardness,castiglioni2020signaling}, recommendation systems~\citep{mansour2016bayesian}, security~\citep{rabinovich2015information,xu2016signaling}, and product marketing~\citep{babichenko2017algorithmic,candogan2019persuasion}.\footnote{Persuasion was famously attributed to a quarter of the GDP in the United States by~\citet{mccloskey1995one}, with a more recent estimate placing this figure at $30\%$~\citep{antioch2013persuasion}.}

The classical Bayesian persuasion model by~\citet{kamenica2011bayesian} makes the stringent assumption that the sender knows the receiver's utility exactly.
This is unreasonable in practice.
Recently, \citet{castiglioni2020online} propose to relax the assumption by framing Bayesian persuasion into an online learning framework, focusing on the basic \emph{single-receiver} problem.\footnote{A recent work by~\citet{babichenko2021regret} relaxes the assumption in the offline setting. In that work, the goal is minimizing the sender's regret over a single iteration, and the authors provide positive results for the case in which the sender knows the ordinal preferences of the receiver over states of nature. The authors study the case of a single receiver with a binary action space, and an arbitrary (unknown) utility function.}
In their model, the sender repeatedly faces a receiver whose type during each iteration---determining her/his utility function---is unknown and adversarially selected beforehand.
In this work, we extend the model by~\citet{castiglioni2020online} to \emph{multi-receiver} settings, where the (unknown) type of each receiver is adversarially selected before each iteration of the repeated interaction.
We consider the case in which the sender has a \emph{private} communication channel towards each receiver, which is commonly studied in multi-receiver models (see, \emph{e.g.},~\citep{babichenko2016computational}). 
Dealing with multiple receivers introduces the additional challenge of correlating information disclosure across them and requires different techniques from those used in the single-receiver setting.

As customary when studying multi-receiver Bayesian persuasion problems~\citep{dughmi2017algorithmic,Xu2020tractability}, we address the case in which there are \emph{no inter-agent externalities}, where each receiver's utility does \emph{not} depend on the actions of the other receivers, but only on her/his own action and the state of nature.
Moreover, we focus on the commonly-studied setting with \emph{binary actions}~\citep{babichenko2016computational,arieli2019private}, and we analyze different scenarios depending on whether the sender's utility function is supermodular, submodular, or anonymous.
Despite its simplicity, this basic model encompasses several real-world scenarios.
For instance, think of a marketing problem in which a firm (sender) wants to persuade some potential buyers (receivers) to buy one of its products.
Each buyer has to take a binary decision as to whether to buy a unit of the product or not, while the firm's goal is to strategically disclose information about the product to the buyers, so as to maximize the number of units sold.
In this example, the sender's utility is anonymous, since it only depends on the number of buyers who decide to purchase (and not on their identities).
Moreover, submodular sender's utilities represent diminishing returns in the number of items sold, while supermodular ones encode decreasing production costs.

\subsection{Original Contributions}

Our goal is to design online algorithms for the sender that recommend a signaling scheme at each iteration of the repeated interaction, guaranteeing a sender's expected utility close to that of the best-in-hindsight signaling scheme.
In particular, we look for no-$\alpha$-regret algorithms, which collect an overall utility that is close to a fraction $\alpha$ of what can be obtained by the best-in-hindsight signaling scheme.
In this work, we assume \emph{full-information} feedback, which means that, after each iteration, the sender observes each receiver's type during that iteration.
Moreover, we are interested in no-$\alpha$-regret algorithms having a per-iteration running time polynomial in the size of the problem instance.
To this end, we assume that the number of possible types of each receiver is fixed, otherwise polynomial-time no-$\alpha$-regret algorithms cannot be obtained even in the degenerate case of only one receiver~\citep{castiglioni2020online}.

In Section~\ref{sec:hardness}, we prove a negative result: for any $0 < \alpha \leq 1$, there is no polynomial-time no-$\alpha$-regret algorithm when the sender's utility function is \emph{supermodular} or \emph{anonymous}.
Thus, in the rest of the work, we focus on the case in which the sender's utility function is \emph{submodular}, where we provide a polynomial-time no-$\left( 1 - \frac{1}{e} \right)$-regret algorithm.\footnote{Our result is tight, as there is no poly-time no-$\alpha$-regret algorithm with $\alpha > 1 - \frac{1}{e}$. Indeed, it is \NPHard~to approximate the sender's optimal utility within a factor $> 1 - \frac{1}{e} $, even in the basic (offline) multi-receiver model of~\citet{babichenko2016computational}.}

As a first step in building our algorithm, in Section~\ref{sec:learning_oracle} we introduce a general \emph{online gradient descent} (OGD) scheme to handle online learning problems with a finite number of possible loss functions.
This can be applied to our setting, as we have a sender's utility function (or, equivalently, negative loss function) for every combination of receivers' types obtained as feedback.
The OGD scheme works in a modified decision space whose dimensionality is the number of observed loss functions, and it is \emph{not} affected by the dimensionality of the original space.
This is crucial in our setting, as it avoids dealing with the set of sender's signaling schemes, whose dimensionality grows exponentially in the number of receivers.
Any OGD algorithm requires a projection oracle.
Since in our setting an exact oracle cannot be implemented in polynomial time, we build our OGD scheme so that it works having access to a suitably-defined \emph{approximate projection oracle}, which, as we show later, can be implemented in polynomial time in our model.

In Section~\ref{sec:projection_oracle}, we build a polynomial-time approximate projection oracle.
First, we formulate the projection problem as a convex linearly-constrained quadratic program, which has exponentially-many variables and polynomially-many constraints.
Next, we show how to compute in polynomial time an approximate solution to this program by applying the \emph{ellipsoid algorithm} to its dual.
%
%
Since the dual has polynomially-many variables and exponentially-many constraints, the algorithm needs access to a particular (problem-dependent) polynomial-time separation oracle.
Unfortunately, we do not have this in our setting, and, thus, our algorithm must rely on an \emph{approximate separation oracle}.
In general, running the ellipsoid method with an approximate separation oracle does \emph{not} give any guarantee on the approximation quality of the returned solution.
In order to make the ellipsoid algorithm return the desired approximate solution by only using an approximate separation oracle, we employ some \emph{ad-hoc} technical tools suggested by a non-trivial primal-dual analysis.
As a preparatory step towards our main result, at the beginning of Section~\ref{sec:projection_oracle}, we use a derivation similar to that described so far to design a polynomial-time approximation algorithm for the offline version of our multi-receiver Bayesian persuasion problem, which may be of independent interest.

In Section~\ref{sec:submodular}, we conclude the construction of the no-$\left( 1 - \frac{1}{e} \right)$-regret algorithm by showing how to implement in polynomial time an $\left( 1- \frac{1}{e} \right)$-approximate separation oracle for settings in which the sender's utility is submodular.

All the proofs omitted from the paper are in the Appendix.

\subsection{Related Works}

Most of the computational works on Bayesian persuasion study (offline) models in which the sender knowns the receiver's utility function exactly.
\citet{dughmi2016algorithmic} initiate these studies with the single receiver case, while \citet{arieli2019private} extend their work to multiple receivers without inter-agent externalities, with a  focus on private signaling.
In particular, they focus on settings with binary actions for the receivers and a binary space of states of nature. They provide a characterization of the optimal signaling scheme in the case of supermodular, anonymous submodular, and super-majority sender's utility functions.
\citet{babichenko2016computational} extend this latter work by providing tight $(1-\frac{1}{e})$-approximate signaling schemes for monotone submodular sender's utilities and showing that an optimal private signaling scheme for anonymous utility functions can be found efficiently.
\citet{dughmi2017algorithmic} generalize the previous model to settings with an arbitrary number of states of nature.
There are also some works focusing on public signaling with no inter-agent externalities, see, among others, \citep{dughmi2017algorithmic}~and~\citep{Xu2020tractability}. 
%
%
%

The only computational work on Bayesian persuasion in an online learning framework is that of~\citet{castiglioni2020online}, which, however, is restricted to the single-receiver case.
The results and techniques in~\citet{castiglioni2020online} are different from those in our paper.
In particular, they show that there are no polynomial-time no-$\alpha$-regret algorithms even in settings with a single receiver, when the number of receiver's types is arbitrary.
In contrast, we focus on settings in which the number of receivers' types is fixed.
%
%
Moreover, the main goal of \citet{castiglioni2020online} is to design a (necessarily exponential-time) no-regret algorithm in the partial-information feedback setting in which the sender only observes the actions played by the receiver (and \emph{not} her/his types).
This is accomplished by providing slightly-biased estimators of the sender's utilities for different signaling schemes.
In our work, we assume full-information feedback, and, thus, our main focus is dealing with multiple receivers.
This also introduces the additional challenge of correlating information disclosure across the receivers and working with an exponential number of possible feedbacks (tuples specifying a type for each receiver).

Our work is also related to the research line on \emph{online linear optimization with approximation oracles}.
In such setting, \citet{kakade2009playing} show how to design a no-$\alpha$-regret algorithm relying on an $\alpha$-approximate linear optimization oracle, while \citet{garber2017}~and~\citet{Hazan2018online} obtain analogous results with a better query complexity. 
The approach of these works to design approximate projection oracles is fundamentally different from ours, since they have access to a linear optimization oracle working in the learner's decision space.
On the other hand, our OGD scheme works on a modified decision space, and the approximate projection oracle can only rely on an approximate sepration oracle dealing with the original sender's decision space.


%

\section{Preliminaries}\label{sec:preliminaries}

There is a finite set $\rec \defeq \{r_i\}_{i=1}^{n}$ of $n$ receivers, and each receiver $r \in \rec$ has a type chosen from a finite set $\K_r \defeq \{k_{r, i} \}_{i=1}^{m_r}$ of $m_r$ different types (let $m \defeq \max_{r \in \rec} m_r$).
We introduce $\K \defeq \bigtimes_{r \in \rec} \K_r$ as the set of type profiles, which are tuples $\kvec \in \K$ defining a type $k_r \in \K_r$ for each receiver $r \in \rec$.\footnote{All vectors and tuples are denoted by bold symbols. For any vector (tuple) $\xvec$, the value of its $i$-th component is denoted by $x_i$.}
Each receiver $r \in \rec$ has two actions available, defined by $\A_r \defeq \{a_0,a_1\}$.
We let $\A \defeq \bigtimes_{r \in \rec} \A_r$ be the set of action profiles specifying an action for each receiver.
%
%
%
Sender and receivers' payoffs depend on a random state of nature, which is selected from a finite set $\Theta\defeq\{\theta_i\}_{i=1}^\nState$ of $d$ states.
The payoff of a receiver also depends on the action played by her/him, while it does \emph{not} depend on the actions played by the other receivers, since there are \emph{no inter-agent externalities}.
Formally, a receiver $r \in \rec$ of type $k \in \K_r$ has a utility $u^{r,k}: \A_r \times\Theta\to [0,1]$.
For the ease of notation, we let $u_\theta^{r,k} \defeq u^{r,k}(a_1, \theta)-u^{r,k}(a_0,\theta)$ be the payoff difference for a receiver $r $ of type $k $ when the state of nature is $\theta \in \Theta$.
%
%
%
The sender's utility depends on the actions played by all the receivers, and it is defined by $\usend : \A \times \Theta \to [0,1]$.
For the ease of presentation, for every state $\theta \in \Theta$, we introduce the function $f_\theta: 2^\rec \to [0,1]$ such that $f_\theta(R)$ represents the sender's utility when the state of nature is $\theta$ and all the receivers in $R \subseteq \rec$ play action $a_1$, while the others play $a_0$.

As it is customary in Bayesian persuasion, we assume that the state of nature is drawn from a common prior distribution $\muvec \in\textnormal{int}(\Delta_\Theta)$, which is explicitly known to both the sender and the receivers.\footnote{$\textnormal{int}(X)$ is the \emph{interior} of a set $X$, while $\Delta_X$ is the set of all the probability distributions over a set $X$.}
The sender can commit to a {\em signaling scheme} $\phi$, which is a randomized mapping from states of nature to signals for the receivers.
In this work, we focus on \emph{private} signaling, where each receiver has her/his own signal that is privately communicated to her/him.
Formally, there is a finite set $\sset_r$ of possible signals for each receiver $r \in \rec$.
Then, $\phi:\Theta \to \Delta_\sset$, where $\sset \defeq \bigtimes_{r \in \rec} \sset_r$ is the set of signal profiles, which are tuples $\svec \in \sset$ defining a signal $s_r \in \sset_r$ for each receiver $r \in \rec$. 
We denote with $\phi_\theta$ the probability distribution employed by $\phi$ when the state of nature is $\theta \in \Theta$, with $\phi_\theta(\svec)$ being the probability of sending a signal profile $\svec \in \sset$.
The \emph{one-shot} interaction between the sender and the receivers goes on as follows: \emph{(i)} the sender commits to a publicly known signaling scheme $\phi$; \emph{(ii)} she/he observes the realized state of nature $\theta \sim \muvec$; \emph{(iii)} she/he draws a signal profile $\svec \sim \phi_\theta$ and communicates to each receiver $r \in \rec$ signal $s_r$; and \emph{(iv)} each receiver $r \in \rec$ rationally updates her/his prior belief over $\Theta$ according to the {\em Bayes rule} and selects an action maximizing her/his expected utility.
%
%
%
We remark that, given a signaling scheme $\phi$, a receiver $r \in \rec$ of type $k \in \K_r$ observing a private signal $s \in \sset_r$ experiences an expected utility $\sum_{\theta\in\Theta}\mu_\theta \sum_{\svec \in \sset : s_r = s} \phi_\theta(\svec)  \,u^{r,k}(a, \theta)$ (up to a normalization constant) when playing action $a \in \A_r$.
Assuming the receivers' type profile is $\kvec \in \K$, the goal of the sender is to commit to an \emph{optimal} signaling scheme $\phi$, which is one maximizing her/his expected utility $f(\phi,\kvec) \defeq \sum_{\theta\in\Theta}\mu_\theta \sum_{\svec \in \sset} \phi_\theta(\svec) \,f_\theta ( R_{\svec}^{\kvec} )$, where we let $R_{\svec}^{\kvec} \subseteq \rec$ be the set of receivers who play $a_1$ after observing their private signal $s_r$ in $\svec$, under signaling scheme $\phi$.
%

\paragraph{Assumptions}
In the rest of this work, we assume that the the sender's utility is \emph{monotone non-decreasing} in the set of receivers playing $a_1$.
Formally, for each state $\theta \in \Theta$, we let $f_\theta(R) \le f_\theta(R')$ for every $R \subseteq R' \subseteq \rec$, while $f_\theta(\varnothing) = 0$ for the ease of presentation.
Moreover, we assume that the number of types $m_r$ of each receiver $r \in \rec$ is fixed; in other words, the value of $m$ cannot grow arbitrarily large.\footnote{The monotonicity assumption is w.l.o.g. for this work, since our main positive result (Theorem~\ref{thm:no_regret}) relies on it. Instead, assuming a fixed number of types is necessary, since, even in single-receiver settings, designing no-regret algorithms with running time polynomial in $m$ is intractable~\citep{castiglioni2020online}.}

\paragraph{Direct Signaling Schemes}
By well-known revelation-principle-style arguments~\citep{kamenica2011bayesian,arieli2019private}, we can restrict our attention to signaling schemes that are {direct} and {persuasive}.
In words, a signaling scheme is \emph{direct} if signals correspond to recommendations of playing actions, while it is \emph{persuasive} if the receivers do not have any incentive to deviate from the recommendations prescribed by the signals they receive.
In our setting, a direct signal sent to a receiver specifies an action recommendation for each receiver's type; thus, we let $\sset_r \defeq 2^{\K_r}$ for every $r \in \rec$.
A signal $s \in \sset_r$ for a receiver $r \in \rec$ is encoded by a subset of her/his types, namely $s \subseteq \K_r$.
Intuitively, $s$ can be interpreted as the recommendation to play action $a_1$ when the receiver has type $k \in \K_r$ such that $k \in s$, while $a_0$ otherwise.
Given a direct and persuasive signaling scheme $\phi$, for a signal profile $\svec \in \sset$ and a type profile $\kvec \in \K$, the set $R_{\svec}^{\kvec}$ appearing in the definition of the sender's expected utility $f(\phi,\kvec)$ can be formally expressed as $R_{\svec}^{\kvec} \defeq \left\{   r \in \rec \mid k_r \in s_r  \right\}$.

\paragraph{Set Functions and Matroids}
In Section~\ref{sec:submodular}, we show how to implement our approximate separation oracle by optimizing functions $f_\theta$ over suitably defined matroids (representing signals).
Next, we introduce the necessary definitions on set functions and matroids.
For the ease of presentation, we consider a generic function $f: 2^\G \to [0,1]$ for a finite set $\G$.
%
%
%
%
We say that $f$ is \emph{submodular}, respectively \emph{supermodular}, if for $I,I' \subseteq \G$: $f(I\cap I')+f(I\cup I')\le f(I)+f(I')$, respectively $f(I\cap I')+f(I\cup I')\ge  f(I)+f(I')$.
The function $f$ is \emph{anonymous} if $f(I) = f(I')$ for all $I,I' \subseteq \G: |I| = |I'|$.
A matroid $\M \defeq (\G,\Ical)$ is defined by a finite ground set $\G$ and a collection $\Ical$ of independent sets, \emph{i.e.}, subsets of $\G$ satisfying some characterizing properties (see~\citep{schrijver2003combinatorial} for a detailed formal definition).
We denote by $\B(\M)$ the set of the \emph{bases} of $\M$, which are the maximal sets in $\Ical$.

\section{Multi-Receiver Online Bayesian Persuasion}\label{sec:online_multi}

We consider a multi-receiver generalization of the online setting introduced by~\citet{castiglioni2020online}.
The sender plays a repeated game in which, at each iteration $t \in [T]$, she/he commits to a signaling scheme $\phi^t$, observes the realized state of nature $\theta^t\sim \muvec$, and privately sends signals determined by $\svec^t\sim\phi_{\theta^t}^t$ to the receivers.\footnote{Throughout the paper, the set $\{1,\dots,x\}$ is denoted by $[x]$.}
Then, each receiver (whose type is unknown to the sender) selects an action maximizing her/his expected utility given the observed signal (in the {\em one-shot} interaction at iteration $t$).

We focus on the problem of computing a sequence $\{\phi^t\}_{t \in [T]}$ of signaling schemes maximizing the sender's expected utility when the sequence of receivers' types $ \{\kvec^t\}_{t\in [T]}$, with $\kvec^t \in\K$, is adversarially selected beforehand.
After each iteration $t \in [T]$, the sender gets payoff $f(\phi^t, \kvec^t)$ and receives a {\em full-information feedback} on her/his choice at $t$, which is represented by the type profile $\kvec^t$.
Therefore, after each iteration, the sender can compute the expected utility $f(\phi,\kvec^t)$ guaranteed by any signaling scheme $\phi$ she/he could have chosen during that iteration.

We are interested in an algorithm computing $\phi^t$ at each iteration $t \in [T]$.
We measure the performance of one such algorithm using the $\alpha$-\emph{regret} $R_\alpha^T$.
%
%
Formally, for $0 < \alpha \leq 1$,
\[ 
	R_\alpha^T\defeq \alpha \max_\phi  \sum_{t \in [T]} f(\phi,\kvec^t)- \Expec\left[\sum_{t \in [T]} f(\phi^t,\kvec^t)\right],
\] 
%
%
%
where the expectation is on the randomness of the algorithm.
The classical notion of regret is obtained for $\alpha = 1$.
%
%
%
%

Ideally, we would like an algorithm that returns a sequence $\{\phi^t\}_{t \in [T]}$ with the following properties:
\begin{itemize}
	\item the $\alpha$-regret is sublinear in $T$ for some $0 < \alpha \leq 1$;
	%
	%
	%
	\item the number of computational steps it takes to compute $\phi^t$ at each iteration $t \in [T]$ is $\mathsf{poly}(T,n,d)$, that is, it is a polynomial function of the parameters $T$, $n$, and $d$
\end{itemize}
An algorithm satisfying the first property is called a \emph{no-$\alpha$-regret algorithm} (it is \emph{no-regret} if it does so for $\alpha=1$). 
In this work, we focus on the weaker notion of $\alpha$-regret since, as we discuss next, requiring no-regret is oftentimes too limiting in our setting (from a computational perspective).

\section{Hardness of Being No-$\alpha$-Regret}\label{sec:hardness}

We start with a negative result.
We show that designing no-$\alpha$-regret algorithms with polynomial per-iteration running time is an intractable problem (formally, it is impossible  unless \NP~$\subseteq$~\RP) when the sender's utility is such that functions $f_\theta$ are \emph{supermodular} or \emph{anonymous}.
%
%
This hardness result is deeply connected with the intractability of the offline version of our multi-receiver Bayesian persuasion problem that we formally define in the following Section~\ref{sec:hardness_offline_def}.
Then, Section~\ref{sec:hardness_theorems} collects all the hardness results.

\subsection{Offline Multi-Receiver Bayesian Persuasion}\label{sec:hardness_offline_def}

We consider an offline setting where the receivers' type profile $\kvec \in \K$ is drawn from a known probability distribution (rather then being selected adversarially at each iteration).
Given a subset of possible type profiles $K \subseteq \K$ and a distribution $\lambdavec \in \textnormal{int}(\Delta_{K})$, we call \textsf{BAYESIAN-OPT-SIGNAL} the problem of computing a signaling scheme that maximizes the sender's expected utility.
This can be achieved by solving the following LP of exponential size.\footnote{Constraints~\eqref{LP1:cons_pers} encode persuasiveness for the signals recommending to play $a_1$. The analogous constraints for $a_0$ can be omitted. Indeed, by assuming that each $f_\theta$ is non-decreasing in the set of receivers who play $a_1$, any signaling scheme in which the sender recommends $a_0$ when the state is $\theta$ and the receiver prefers $a_1$ over $a_0$ can be improved by recommending $a_1$ instead.}
\begin{subequations}\label{LP1}
	\begin{align}  
	\max_{\phi} & \quad \sum_{\kvec \in \K} \lambda_{\kvec} \sum_{\theta\in\Theta} \mu_\theta \sum_{\svec \in \sset}\phi_\theta(\svec) f_\theta ( R_{\svec}^{\kvec} ) \\
	\textnormal{s.t. } & \sum_{\theta\in\Theta}\mu_\theta \sum_{\svec \in \sset: s_r = s} \phi_\theta(\svec) u^{r,k}_\theta \geq 0 \nonumber \\
	&\hspace{1.6cm} \forall r \in \rec, \forall s \in \sset_r, \forall k \in \K_r : k \in s \label{LP1:cons_pers}\\
	&\sum_{\svec \in \sset}\phi_{\theta}(\svec)= 1 \hspace{3.35cm} \forall \theta \in \Theta\label{LP1:constr_simplex}\\
	&\phi_{\theta}(\svec)\ge 0 \hspace{2.8cm} \forall \theta \in \theta, \forall \svec \in \sset.
	\end{align} 
\end{subequations}
%
%
%

\subsection{Hardness Results}\label{sec:hardness_theorems}

First, we study the computational complexity of finding an approximate solution to \textsf{BAYESIAN-OPT-SIGNAL}.
In particular, given $0 < \alpha \leq 1$, we look for an $\alpha$-approximate solution in the multiplicative sense, \emph{i.e.}, a signaling scheme providing at least a fraction $\alpha$ of the sender's optimal expected utility (the optimal value of LP~\eqref{LP1}).
Theorem~\ref{thm:hardness} provides our main hardness result, which is based on a reduction from the \emph{promise-version} of \textsf{LABEL-COVER} (see Appendix~\ref{sec:app_hardness} for its definition and the proof of the theorem).

\begin{restatable}{theorem}{thmhardness}\label{thm:hardness}
	For every $0 < \alpha \leq 1$, it is \NPHard~to compute an $\alpha$-approximate solution to \textnormal{\textsf{BAYESIAN-OPT-SIGNAL}}, even when the sender's utility is such that, for every $\theta \in \Theta$, $f_\theta(R)= 1$ iff $|R| \geq 2$, while $f_\theta(R)= 0$ otherwise.
	%
	%
\end{restatable}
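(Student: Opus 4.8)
The theorem says: approximating \textsf{BAYESIAN-OPT-SIGNAL} within any constant factor $\alpha$ is \NPHard, even when each $f_\theta$ is the "at least two receivers play $a_1$" indicator. Since this $f_\theta$ is supermodular (and in fact a threshold/super-majority function), this instance is legitimate. The natural route is a gap-preserving reduction from the promise version of \textsf{LABEL-COVER}, which is \NPHard~to approximate within any constant: we are given a bipartite graph with edges carrying projection constraints over a label set, and the promise is that either there is a labeling satisfying all edges, or every labeling satisfies at most an $\epsilon$-fraction.

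First I would encode the \textsf{LABEL-COVER} instance into a persuasion instance. The idea is to make the receivers correspond to the two sides of the bipartite graph (or, more carefully, have one receiver per vertex, with the receiver's type set essentially being the label set, so $m$ is fixed because the label alphabet is fixed in the hard instances of \textsf{LABEL-COVER}). The state of nature $\theta$ would range over the edges of the graph (or edge–label pairs), and the prior $\muvec$ would put appropriate mass on each. A direct persuasive signaling scheme, recall, sends each receiver a subset $s_r\subseteq\K_r$ — a "which types play $a_1$" recommendation. The plan is to design the receivers' utilities $u^{r,k}_\theta$ so that persuasiveness forces each receiver $r$ to play $a_1$ (for the relevant type) only on the set of states/edges consistent with a single committed label for the corresponding vertex; this extracts a labeling from a signaling scheme. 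Then, because $f_\theta(R)=1$ iff $|R|\ge 2$, the sender collects utility on state $\theta$ (an edge $e=(u,v)$) exactly when both endpoint-receivers are "active" under their chosen labels in a way that is consistent with the projection constraint on $e$ — i.e., exactly when the edge is satisfied. Thus the sender's expected utility equals (a normalization of) the fraction of edges satisfied by the induced labeling, and conversely any labeling yields a signaling scheme achieving that value. This gives: YES instances $\Rightarrow$ sender optimum $=1$ (suitably normalized), NO instances $\Rightarrow$ sender optimum $\le \epsilon$. Taking $\epsilon < \alpha$ completes the gap.

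The delicate parts are two. (1) Making persuasiveness genuinely force a "single label per vertex" decoding: a receiver could in principle play $a_1$ on a signal that mixes several labels, so I need the utility gadget (the signs of $u^{r,k}_\theta$ across states) to be such that any persuasive response, after the sender has optimized, corresponds to a consistent labeling, or at least can be rounded to one without loss. Since $f_\theta$ is monotone, the footnote's observation lets us assume the sender always recommends $a_1$ whenever a receiver weakly prefers it, which helps pin things down. (2) Handling the prior $\muvec\in\textnormal{int}(\Delta_\Theta)$ and the distribution $\lambdavec$ over type profiles: \textsf{LABEL-COVER} is a constraint-satisfaction problem with a single instance, so I would likely take $K$ to be a single type profile (all receivers "present"), or a small set, and push the combinatorial structure into $\Theta$ and $\muvec$; I must check the reduction is polynomial-time and that $\muvec$ can be kept in the interior (perturb slightly if needed).

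The main obstacle I expect is step (1): designing the receiver utility gadgets so that the persuasiveness constraints \eqref{LP1:cons_pers} exactly carve out "committed label" signals and so that partial/fractional recommendations cannot be exploited by the sender to beat the \textsf{LABEL-COVER} value. This is where a careful choice of which $u^{r,k}_\theta$ are positive versus negative, together with the monotonicity-based simplification, must be orchestrated so that the only way to simultaneously activate two receivers on a state-edge $e$ is to have chosen labels consistent with $e$'s projection constraint. Once the gadget is right, the completeness direction (labeling $\Rightarrow$ signaling scheme of value $1$) is routine, and the soundness direction (signaling scheme of value $>\epsilon$ $\Rightarrow$ labeling satisfying $>\epsilon$ fraction) follows by reading off, for each receiver, the label appearing in its received signal on a random edge.
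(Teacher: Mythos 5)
You correctly identify \textsf{GAP-LABEL-COVER} as the source problem and correctly put one receiver per vertex with a constant type alphabet. But the architectural decision to ``take $K$ to be a single type profile (all receivers present), or a small set, and push the combinatorial structure into $\Theta$ and $\muvec$'' cannot work, and the obstruction is already recorded in the paper's own related-work discussion: \citet{babichenko2016computational} show that an optimal private signaling scheme can be computed in polynomial time when the sender's utility is \emph{anonymous} and each receiver's utility is known (\emph{i.e.}, $|K|=1$). The threshold function $f_\theta(R)=\event\{|R|\geq 2\}$ in the theorem statement is anonymous, so a reduction that flattens $K$ to a singleton would place the target problem in $\mathsf{P}$. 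The hardness must therefore be driven by the uncertainty over type profiles, and the paper's construction does the opposite of your plan: it keeps $\Theta$ small (one state $\theta_\sigma$ per label plus a dominant ``coordination'' state $\theta_0$) and makes the polynomially-large set $K$ encode the graph, with one type profile $\kvec^{uv,\sigma}$ per edge--label pair.

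The second missing ingredient is the gadget that makes ``only the two endpoint receivers matter'' and ``persuasiveness extracts a labeling'' true --- which you flag as the main obstacle but leave open, and which your states-over-edges choice actually makes harder. In the paper, every receiver other than $r_u, r_v$ in profile $\kvec^{uv,\sigma}$ gets an \emph{inert} type $k_0$ with $u^{r,k_0}_\theta=-1$ for all $\theta$, so $a_1$ is strictly dominated for them; this is what prevents $|R|\geq 2$ from being saturated by irrelevant receivers, and your sketch has no analogue. The labeling is then read off the coordination state $\theta_0$: the utilities are tuned so that a direct signal $s\ni k_\sigma$ is supported only on $\theta_\sigma$ and $\theta_0$, and if $k_\sigma, k_{\sigma'}\in s$ with $\sigma\neq\sigma'$ then persuasiveness for $k_\sigma$ forces the induced posterior to satisfy $\xi_{\theta_\sigma}>2\xi_{\theta_{\sigma'}}$ while persuasiveness for $k_{\sigma'}$ symmetrically forces $\xi_{\theta_{\sigma'}}>2\xi_{\theta_\sigma}$, a contradiction. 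Hence at most one label type per receiver can be recommended $a_1$ on $\theta_0$, which is exactly the labeling. Without a coordination state (and with edges as states), there is no single place in the posterior-belief space from which to decode a per-vertex label, so the decoding step you wave at in soundness does not go through. (A small additional point: the theorem's $f_\theta$ is anonymous but \emph{not} supermodular --- take $I=\{a,b\}$, $I'=\{a,c\}$, where $f(I\cap I')+f(I\cup I')=1<2=f(I)+f(I')$; the supermodular variant $f_\theta(R)=\max\{0,|R|-1\}$ is handled in the separate corollary.)
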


Notice that Theorem~\ref{thm:hardness} holds for problem instances in which functions $f_\theta$ are anonymous.
Moreover, the reduction can be easily modified so that functions $f_\theta$ are supermodular and satisfy $f_\theta(R)=\max \{0,|R|-1 \}$ for $R \subseteq \rec$.
Thus:
%
%

\begin{corollary}
	For $0 < \alpha \leq 1$, it is \NPHard~to compute an $\alpha$-approximate solution to \textnormal{\textsf{BAYESIAN-OPT-SIGNAL}}, even when the sender's utility is such that functions $f_\theta$ are \emph{supermodular} or \emph{anonymous} for every $\theta \in \Theta$.
	%
\end{corollary}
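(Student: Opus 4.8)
The plan is to prove Theorem~\ref{thm:hardness} by reducing from the promise-version of \textsf{LABEL-COVER}, leveraging the gap-hardness of that problem to rule out any constant-factor multiplicative approximation simultaneously. Given a \textsf{LABEL-COVER} instance consisting of a bipartite graph, label sets for each side, and edge constraints (partial functions between labels), I would construct a \textsf{BAYESIAN-OPT-SIGNAL} instance as follows. The receivers would correspond (roughly) to the vertices of the \textsf{LABEL-COVER} graph; each receiver's type set would encode the possible labels for that vertex, so that $m_r$ matches the label-set size. The state of nature $\Theta$ and the prior $\muvec$ would be used to encode the edges and their constraints: I would set up one state (or a small bundle of states) per edge, with the type distribution $\lambdavec$ over $K \subseteq \K$ chosen so that the only type profiles in the support are those consistent with picking one label per vertex. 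The persuasiveness constraints~\eqref{LP1:cons_pers}, together with the utilities $u^{r,k}_\theta$, would be engineered so that in any persuasive signaling scheme, the signal a receiver-vertex gets to ``play $a_1$'' effectively commits it to a single label, and two receivers end up playing $a_1$ on the same state exactly when the corresponding labels satisfy the edge constraint. The sender's utility is the anonymous function $f_\theta(R) = \ind[|R| \ge 2]$, so the sender earns $1$ on state $\theta$ (edge) precisely when at least two receivers (the edge's endpoints) simultaneously play $a_1$ under labels that honor the constraint.

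The key steps, in order, are: (i) describe the encoding precisely — receivers $\leftrightarrow$ vertices, types $\leftrightarrow$ labels, states $\leftrightarrow$ edges, and the prior $\muvec$ uniform over edge-states; (ii) choose the receiver utilities $u^{r,k}_\theta$ so that persuasiveness forces each ``$a_1$-recommendation'' signal to be consistent with committing receiver $r$ to exactly one label, and so that on edge-state $\theta=(r,r')$ both endpoints can be simultaneously recommended $a_1$ if and only if their chosen labels satisfy the edge constraint; (iii) show the completeness direction: if the \textsf{LABEL-COVER} instance has a labeling satisfying all edges, then the induced signaling scheme is persuasive and achieves sender's expected utility $1$ (every edge-state yields $f_\theta(R)=1$); (iv) show the soundness direction: if only a $\gamma$-fraction of edges can be satisfied by any labeling, then any persuasive signaling scheme achieves expected utility at most (roughly) $\gamma$, because on states corresponding to unsatisfiable-under-the-chosen-labels edges the set of $a_1$-players has size $<2$; (v) invoke the PCP-based inapproximability of \textsf{LABEL-COVER}, which gives instances where distinguishing ``all edges satisfiable'' from ``at most a $\gamma$-fraction satisfiable'' is \NPHard~for $\gamma$ arbitrarily small (sub-constant, even $n^{-c}$), so that the ratio $1/\gamma$ exceeds $1/\alpha$ for any fixed $\alpha$; conclude that an $\alpha$-approximation would decide the promise problem. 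Finally I would note the reduction is polynomial-time since the LP need not be solved explicitly — only its instance description is produced.

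The main obstacle I anticipate is step (ii): carefully designing the receiver utilities and the role of the prior so that persuasiveness does the combinatorial work of ``labeling consistency'' cleanly. The subtlety is that a direct signal to receiver $r$ is a subset $s \subseteq \K_r$ of types, and persuasiveness constraint~\eqref{LP1:cons_pers} only says that, conditioned on a given signal, each type $k \in s$ weakly prefers $a_1$. I need to set up $u^{r,k}_\theta$ (across the edge-states incident to $r$) so that the only way to satisfy persuasiveness while ever recommending $a_1$ is to use singleton-like signals that pin down one label, and so that the ``$a_1$-region'' on an edge-state $\theta=(r,r')$ for the two endpoints overlaps precisely on constraint-satisfying label pairs. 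Getting the signs and magnitudes of the payoff differences right, and verifying both that the honest labeling scheme is feasible (completeness) and that no clever correlated scheme can beat the soundness bound, is the delicate part; everything else (the choice of $f_\theta$, the anonymity observation, and the subsequent modification to the supermodular $f_\theta(R)=\max\{0,|R|-1\}$ for the Corollary) is routine once the core gadget is in place. For the Corollary itself, once Theorem~\ref{thm:hardness} is established, I would simply observe that replacing $f_\theta(R)=\ind[|R|\ge 2]$ by $f_\theta(R)=\max\{0,|R|-1\}$ changes neither the completeness value nor the soundness bound up to a factor depending only on $n$ (and in the intended reduction each state's $a_1$-set has size at most $2$, so the two functions coincide on all relevant sets), while $\max\{0,|R|-1\}$ is supermodular; hence the same gap-hardness transfers, giving \NPHard{}ness of $\alpha$-approximation for supermodular or anonymous $f_\theta$.
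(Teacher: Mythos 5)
Your plan inverts the paper's encoding in a way that creates genuine obstacles, and the difficulties you flag in step~(ii) are not incidental polish but structural problems with your scheme.

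In the paper's reduction, \emph{labels} become states of nature (plus one extra anchor state $\theta_0$), while \emph{edges} live in the type-profile distribution $\lambdavec$: for each edge $e=(u,v)$ and each label $\sigma$ there is a type profile $\kvec^{uv,\sigma}$ in which $r_u$ has type $k_\sigma$, $r_v$ has type $k_{\Pi_e(\sigma)}$, and \emph{every other receiver} has a dummy type $k_0$ that can never be persuaded to play $a_1$. This does three things at once. First, the edge constraint $\Pi_e$ is encoded directly in the type profile itself, so it never needs to be recovered from the sender's signal. Second, the dummy types cap $|R^{\kvec}_{\svec}|$ at $2$ for every type profile in the support, which is exactly what makes the later replacement of $\ind[|R|\ge 2]$ by $\max\{0,|R|-1\}$ innocuous for the Corollary. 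Third, a single signal profile $\svec^\pi$ on the heavy state $\theta_0$ simultaneously encodes the sender's proposed labeling for \emph{all} vertices; the low-probability states $\theta_\sigma$ serve only to make singleton signals persuasive. With this structure the completeness value is $\approx 1/|\Sigma|$, the soundness value is $\approx \epsilon/|\Sigma|$, and the gap $4\epsilon$ is tuned to $\alpha$; your claim that completeness should give expected utility $1$ already signals that the reductions are not aligned.

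Your construction (states $\leftrightarrow$ edges, types $\leftrightarrow$ labels, $\lambdavec$ supported on labelings) does not have a home for the edge constraint. You want ``on edge-state $\theta=(u,v)$ both endpoints play $a_1$ iff the labels satisfy $\Pi_e$,'' but a receiver of type $k$ plays $a_1$ exactly when $k\in s_r$, so the event ``$r_u$ and $r_v$ both play $a_1$'' is the rectangle $\{(k_u,k_v) : k_u \in s_u, k_v \in s_v\}$, whereas the set of constraint-satisfying pairs $\{(\sigma,\Pi_e(\sigma))\}$ is the graph of a function and in general not a rectangle; no choice of $s_u,s_v$ can make them coincide. Moreover, because the types and the state are independent and the sender sees only the state, a persuasive sender cannot tailor signals to the realized labeling, so even with a perfect labeling $\pi$ the sender cannot hope for utility $1$ unless $\lambdavec$ is degenerate (which would break soundness). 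Finally, without the dummy-type trick your $a_1$-sets are not automatically of size at most $2$, so the observation you use to pass from the anonymous $\ind[|R|\ge 2]$ to the supermodular $\max\{0,|R|-1\}$ would not be available in your own reduction, though it does hold in the paper's. The last paragraph of your proposal — that once Theorem~\ref{thm:hardness} is in hand, replacing $f_\theta$ by the supermodular $\max\{0,|R|-1\}$ gives the Corollary because the two functions agree on all sets of size $\le 2$ — is correct and is exactly what the paper does; the gap lies entirely in the reduction you sketch for the theorem itself.
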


By using arguments similar to those employed in the proof of Theorem 6.2 by~\citet{roughgarden2019minimizing}, the hardness of computing an $\alpha$-approximate solution to the offline problem can be extended to designing no-$\alpha$-regret algorithms in the online setting.
Then:
%

\begin{theorem}\label{thm:hardness_learning}
	For every $0 < \alpha \leq 1$, there is no polynomial-time no-$\alpha$-regret algorithm for the multi-receiver online Bayesian persuasion problem, unless \NP~$\subseteq$~\RP, even when functions $f_\theta$ are \emph{supermodular} or \emph{anonymous} for all $\theta \in \Theta$.
	%
\end{theorem}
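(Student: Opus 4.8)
The plan is to reduce the hardness of the offline problem (Theorem~\ref{thm:hardness} and its corollary) to the online learning problem by a standard ``single-instance adversary'' argument, following the template of Theorem 6.2 of~\citet{roughgarden2019minimizing}. Suppose, for contradiction, that there is a polynomial-time no-$\alpha$-regret algorithm $\mathfrak{A}$ for some $0 < \alpha \le 1$ in the case where the functions $f_\theta$ are supermodular (or anonymous). We want to use $\mathfrak{A}$ to build a randomized polynomial-time procedure that, given an instance of \textsf{BAYESIAN-OPT-SIGNAL} of the special form appearing in Theorem~\ref{thm:hardness}, outputs (with good probability) a signaling scheme achieving expected sender utility at least $\alpha$ times the optimum, contradicting \NP~$\not\subseteq$~\RP.

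The key steps, in order. First, given the offline instance $(K, \lambdavec)$, construct an online instance: at each iteration $t \in [T]$, draw the type profile $\kvec^t \in K$ independently from $\lambdavec$ and feed the resulting full-information feedback to $\mathfrak{A}$; let $\phi^t$ be the signaling scheme it commits to. Here $T$ is chosen to be a sufficiently large polynomial in the instance size (large enough that the sublinear $\alpha$-regret term $R_\alpha^T = o(T)$ is, say, at most $\tfrac{\epsilon}{2} T$ for a small constant gap $\epsilon$ determined by the promise gap of \textsf{LABEL-COVER} underlying Theorem~\ref{thm:hardness}). Second, take $\hat\phi$ to be a signaling scheme drawn uniformly at random from $\{\phi^t\}_{t \in [T]}$ (equivalently, one may analyze the time-averaged scheme, but since the feasible region --- persuasive direct schemes for a fixed type set --- is convex and $f(\cdot, \kvec)$ is linear in $\phi$, averaging works too). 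Third, bound the expected value of $\hat\phi$ on the offline instance: by the i.i.d.\ draws, $\Expec_{\kvec \sim \lambdavec}[f(\hat\phi, \kvec)] = \tfrac{1}{T}\Expec[\sum_t f(\phi^t, \kvec^t)]$, and the $\alpha$-regret guarantee gives $\Expec[\sum_t f(\phi^t, \kvec^t)] \ge \alpha \max_\phi \sum_t f(\phi, \kvec^t) - R_\alpha^T$. A concentration argument (Hoeffding, using that each $f(\phi,\kvec^t) \in [0,1]$ and that the best-in-hindsight value concentrates around $T \cdot \opt$, where $\opt$ is the offline LP optimum) shows $\max_\phi \sum_t f(\phi,\kvec^t) \ge T \cdot \opt - o(T)$ with high probability. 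Combining, $\Expec_{\kvec \sim \lambdavec}[f(\hat\phi,\kvec)] \ge \alpha \cdot \opt - o(1)$. Fourth, use the bicriteria structure of the reduction behind Theorem~\ref{thm:hardness}: because the ``yes'' and ``no'' cases of \textsf{LABEL-COVER} are separated by a multiplicative gap that the $f_\theta(R) = \mathbf{1}[|R| \ge 2]$ construction preserves, a scheme with value $\ge \alpha \cdot \opt - o(1)$ suffices to decide the promise problem, hence yields an \RP\ algorithm for an \NPHard\ problem. The supermodular case follows identically using the $f_\theta(R) = \max\{0, |R|-1\}$ variant.

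I would spell out one subtlety carefully: the definition of $\alpha$-regret is with respect to the \emph{fixed-in-hindsight} best signaling scheme for the realized type sequence, whereas the offline instance asks for the best scheme against the \emph{distribution} $\lambdavec$. Bridging these requires the concentration step above, and it is important that the comparator class (direct persuasive schemes over the fixed type set $\bigtimes_r \K_r$) is the same in both settings and does not grow with $T$; since $m$ is fixed, $|\sset| = \prod_r 2^{m_r}$ is a constant, so the LP~\eqref{LP1} (though of ``exponential size'' in $n$) defines a fixed-dimension-per-receiver comparator and the union bound / covering argument over schemes goes through. The main obstacle is precisely making this concentration-plus-gap bookkeeping tight enough that the surviving additive $o(1)$ slack does not eat the multiplicative promise gap --- i.e.\ choosing $T$ and handling the fact that $R_\alpha^T$ is only guaranteed sublinear (not bounded by a prescribed function) so that the contradiction is genuine. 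This is routine but needs the polynomial running time of $\mathfrak{A}$ to ensure the whole simulation is polynomial in the original instance size, which is where the assumption that $m$ (and hence the per-iteration cost) is fixed is used.
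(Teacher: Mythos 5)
Your proposal follows essentially the same route as the paper, which proves Theorem~\ref{thm:hardness_learning} by invoking the online-to-offline template of Theorem~6.2 of~\citet{roughgarden2019minimizing}: simulate the purported no-$\alpha$-regret learner on i.i.d.\ type profiles drawn from $\lambdavec$, lower-bound the best-in-hindsight benchmark by concentration, and conclude that a random iterate $\hat\phi$ has expected value $\geq \alpha\,\opt - o(1)$, which combined with the constant promise gap of Theorem~\ref{thm:hardness} places \textsf{GAP-LABEL-COVER} in \RP. Two clarifications on your last paragraph are worth making. First, the ``union bound / covering argument over schemes'' you invoke is not actually needed and, as you describe it, would not work: you only need $\max_\phi \sum_t f(\phi,\kvec^t) \geq \sum_t f(\phi^\star,\kvec^t)$ for the \emph{single} offline-optimal $\phi^\star$, and Hoeffding for that one fixed scheme already gives $\sum_t f(\phi^\star,\kvec^t) \geq T\opt - o(T)$ w.h.p.; meanwhile $|\sset| = \prod_r 2^{m_r} = 2^{\sum_r m_r}$ is \emph{not} a constant even when $m$ is fixed (it is exponential in $n$), so a union bound over all schemes would force $T$ to be super-polynomial. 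Second, when converting the bound $\Expec[g(\hat\phi)] \geq \alpha\opt - o(1)$ into a bona fide \RP\ decision procedure, you should say a word about amplification: a reverse-Markov argument on the bounded random variable $g(\hat\phi)\in[0,1]$ gives a constant success probability per run, which is then boosted by independent repetitions together with sampled estimates of $g$; and one should choose the \textsf{LABEL-COVER} parameter $\epsilon$ slightly smaller than $\alpha/4$ so the additive $o(1)$ slack does not close the multiplicative gap. Neither point changes the structure of your argument, which matches the paper's intended proof.
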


In the rest of the work, we show how to design a polynomial-time no-$(1 - \frac{1}{e})$-regret algorithm for the case in which the sender's utility is such that functions $f_\theta$ are submodular.

\section{An Online Gradient Descent Scheme with Approximate Projection Oracles} \label{sec:learning_oracle}

As a first step in building our polynomial-time algorithm, we introduce our OGD scheme with an \emph{approximate projection oracle}.
Intuitively, it works by transforming the multi-receiver online Bayesian persuasion setting into an equivalent online learning problem whose decision space does not need to explicitly deal with signaling schemes (thus avoiding the burden of having an exponential number of possible signal profiles).
The OGD algorithm is then applied on this new domain.
In our setting, we do \emph{not} have access to a polynomial-time (exact) projection oracle, and, thus, we design and analyze the algorithm assuming access to an {approximate} one only.
As we show later in Sections~\ref{sec:projection_oracle}~and~\ref{sec:submodular}, such approximate projection oracle can be implemented in polynomial time when the functions $f_\theta$ are submodular.

Let us recall that the OGD scheme that we describe in this section is general and applies to any online learning problem with a finite number of possible loss functions.

%
%
%

%
%

\subsection{A General Approach}\label{sec:learning_oracle_framework}

Consider an online learning problem in which the learner takes a decision $y^t \in \Y$ at each iteration $t \in [T]$.
Then, the learner observes a feedback $e^t \in \E$, where $\E$ is a finite set of $p$ possible feedbacks.
The reward (or negative loss) of a decision $y \in \Y$ given feedback $e \in \E$ is defined by $u(y,e)$ for a given function $u : \Y \times \E \to [0,1]$.
Thus, the learner is awarded $u(y^t,e^t)$ for decision $y^t$ at iteration $t$, while she/he would have achieved $u(y,e^t)$ for any other choice $y \in \Y$.

We transform this general online learning problem to a new one in which the learner's decision set is $\X\subseteq [0,1]^p$ with:
\begin{equation}\label{eq:def_x}
\X \defeq \bigcup_{y \in \Y} \Big\{  \xvec \in [0,1]^p \mid x_{e} \leq u(y,e) \quad \forall e \in \E \Big\}.
\end{equation}
Intuitively, the set $\X$ contains all the vectors whose components $x_e$ (one for each feedback $e \in \E$) are the learner's rewards $u(y,e)$ for some decision $y \in \Y$ in the original problem.
Moreover, the inequality ``$\leq$'' in the definition of $\X$ also includes all the reward vectors that are dominated by those corresponding to some decision in $\Y$.
At each iteration $t \in [T]$, the learner takes a decision $\xvec^t \in \X$ and observes a feedback $e^t \in \E$.
The reward of decision $\xvec \in \X$ at iteration $t$ is the $e^t$-th component of $\xvec$, namely $x_{e^t}$.
It is sometimes useful to write it as $ \ind_{e^t}^\top \xvec$, where $\ind_{e^t} \in \{0,1\}^{p}$ is a vector whose $e^t$-th component is $1$, while all the others are $0$.
Thus, the learner's reward at iteration $t$ is $x^t_{e^t}$.
%
%
Notice that the size of the decision set $\X$ of the new online learning setting does \emph{not} depend on the dimensionality of the original decision set $\Y$ (which, in our setting, would be exponential), but only on the number of feedbacks $p$.

If $\Y$ and $u$ are such that $\X$ is compact and convex, then we can minimize the $\alpha$-regret $R_\alpha^T$ in the original problem by doing that in the new setting.
%
%
Let us introduce the set $\alpha \X \defeq \{ \alpha\xvec \mid  \xvec \in \X\}$ for any $0 <\alpha \leq 1$.
Given a sequence of feedbacks $\{e^t\}_{t \in [T]}$ and a sequence of decisions $\{\xvec^t\}_{t \in [T]}$, with $e^t \in \E$ and $\xvec^t \in \X$, we have that:
\begin{align*}
	R^T_\alpha & \defeq  \max_{\xvec \in \alpha\X} \sum_{t \in [T]} \ind_{e^t}^\top \Big( \xvec - \xvec^t \Big)  \\
	& \geq  \alpha \max_{y \in \Y} \sum_{t \in [T]}  u(y,e^t) - \sum_{t \in [T]} u(y^t,e^t), 
\end{align*}
where $\{y^t\}_{t \in [T]}$ is a sequence of decisions $y^t \in \Y$ for the original problem such that $x_e^t \leq u(y^t,e)$ for $e \in \E$.
%
%
%
%
%
%
%
%

We assume to have access to an approximate projection oracle for $\alpha \X$, which we define in the following.
%
%
%
By letting $E \subseteq \E$ be a subset of feedbacks, we define $\tau_E: \X \to [0,1]^p$ as the function mapping any vector $\xvec \in \X$ to another one that is equal to $\xvec$ in all the components corresponding to feedbacks $e \in E$, while it is $0$ everywhere else.
Moreover, we let $\X_E \defeq \{ \tau_E(\xvec) \mid \xvec \in \X \}$ be the image of $\X$ through $\tau_E$, while $\alpha \X_E \defeq \{ \alpha\xvec \mid  \xvec \in \X_E\}$ for $0 < \alpha \leq 1$.
\begin{definition}[Approximate projection oracle]\label{def:apx_proj_oracle}
	%
	Consider a subset of feedbacks $E \subseteq \E$, a vector $\yvec \in [0,2]^p$ such that $y_e = 0$ for all $e \notin E$, and an approximation error $\epsilon \in \mathbb{R}_+$.
	Then, for any $0 < \alpha \leq 1$, an \emph{approximate projection oracle} $\varphi_\alpha (E,\yvec,\epsilon)$ is an algorithm returning a vector $\xvec \in \X_E$ and a decision $y \in \Y$ with $x_e \leq u(y,e)$ for all $e \in \E$, such that:
	\[
		\vert \vert \xvec'-\xvec \vert \vert^2 \le \vert \vert \xvec'-\yvec \vert \vert^2+\epsilon \quad\quad  \forall \xvec' \in \alpha \X_E.
	\]
\end{definition}
%
%
%
%
Intuitively, $\varphi_{\alpha}$ returns a vector $\xvec \in \X_E$ that is an {approximate} projection of $\yvec$ onto the subspace $\alpha \X_E$.
The vector $\xvec$ can be outside of $\alpha \X_E$.
However, it is ``better'' than a projection onto $\alpha \X_E$, since, ignoring the $\epsilon$ error, $\xvec$ is closer than $\yvec$ to any vector in $\alpha \X_E$.  
Moreover, $\varphi_{\alpha}$ also gives a decision $y \in \Y$ that corresponds to the returned vector $\xvec$.
Notice that, if $\alpha=1$ and $\epsilon=0$, this is equivalent to find an exact projection onto the subspace $\X_E$.
%
%
%
%
%
%
%
%
%
%

\subsection{A Particular Setting: Multi-Receiver Online Bayesian Persuasion}
Our setting can be easily cast into the general learning framework described so far.
The possible feedbacks are type profiles, namely $\E \defeq \K$, while the receivers' type profile $\kvec^t \in \K$ is the feedback observed at iteration $t \in [T]$, namely $e^t \defeq \kvec^t$.
Notice that the number of possible feedbacks is $p$ is $m^n$, which is exponential in the number of receivers.
The decision set of the learner (sender) $\Y$ is the set of all the possible signaling schemes $\phi$, with $y^t \defeq \phi^t$ being the one chosen at iteration $t$.
The rewards observed by the sender are the utilities $f(\phi,\kvec)$; formally, for every signaling scheme $\phi$ and type profile $\kvec \in \K$, which define a pair $y \in \Y$ and $e \in \E$ using the generic notation, we let $u(y,e) \defeq f(\phi,\kvec)$.
Then, the new decision set $\X \subseteq [0,1]^{|\K|}$ is defined as in Equation~\eqref{eq:def_x}.
Notice that $\X$ is a compact and convex set, since it can be defined by a set of linear inequalities.
In the following, we overload the notation and, for any subset $K \subseteq \K$ of types profiles, we let $\X_K \defeq \X_E$ for $E \subseteq \E : E = K$. 

\subsection{OGD with Approximate Projection Oracle}\label{sec:learning_oracle_algorithm}

Algorithm~\ref{alg:OGD} is an OGD scheme that operates in the $\X$ domain by having access to an approximate projection oracle $\varphi_\alpha$ (we call the algorithm \textsc{OGD}-\textsc{APO}).

The procedure in Algorithm~\ref{alg:OGD} keeps track of the set $E^t \subseteq \E$ of different feedbacks observed up to each iteration $t \in [T]$.
Moreover, it works on the subspace $\X_{E^t}$, whose vectors are zero in all the components corresponding to feedbacks $e \notin E^t$.
Since it is the case that $|E^t| \leq t$, the procedure in Algorithm~\ref{alg:OGD} attains a per-iteration running time that is independent of the number of possible feedbacks $p$.

%
%

\begin{algorithm}[H]
	\caption{\textsc{OGD}-\textsc{APO}}
	\label{alg:OGD}
	\begin{algorithmic}
		\STATE \begin{tabular}{ll}
			\hspace{-2.2mm}{\bfseries Input:} & $\bullet$ approximate projection oracle $\varphi_\alpha$ \\
			& $\bullet$ learning rate $\eta \in (0,1]$ \\
			& $\bullet$ approximation error $\epsilon \in [0,1]$ 
		\end{tabular}
		\STATE
		\hrule
		\STATE Initialize $y^1 \in \Y$, $E^0 \gets \varnothing$, and $\xvec^1 \gets \boldsymbol{0} \in \X_{E^1}$ 
		\FOR{$t = 1,\dots, T$}
		\STATE Take decision $y^t$
		\STATE Observe feedback $e^t \in \E$ and reward $u(y^t,e^t) = x^t_{e^t}$
		\STATE $E^{t} \gets E^{t-1}\cup \{ e^t\}$
		\STATE $\yvec^{t+1} \gets \xvec^t + \eta 1_{e^t}$
		\STATE $\left( \xvec^{t+1},y^{t+1} \right) \gets \varphi_{\alpha} \left( E^t,\yvec^{t+1},\epsilon \right)$ 
		\ENDFOR 
	\end{algorithmic}
\end{algorithm}

Next, we bound the $\alpha$-regret incurred by Algorithm~\ref{alg:OGD}.

\begin{restatable}{theorem}{thmregretogd} \label{thm:regretogd}
	Given an oracle $\varphi_{\alpha}$ (as in Definition~\ref{def:apx_proj_oracle}) for some $0 < \alpha \leq 1$, a learning rate $\eta \in (0,1]$, and an approximation error $\epsilon\in[0,1]$, Algorithm~\ref{alg:OGD} has $\alpha$-regret
	\[
		R^T_{\alpha} \leq \frac{|E^T|}{2\eta }+\frac{\eta T}{2} + \frac{\epsilon T}{2 \eta}, 
	\]
	with a per-iteration running time $\mathsf{poly}(t)$.
\end{restatable}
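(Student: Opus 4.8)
The plan is to run the standard online gradient descent potential argument, with two modifications needed to cope with (i) the projection step being only \emph{approximate}, and (ii) the ambient dimension $p$ never being materialized — Algorithm~\ref{alg:OGD} only ever manipulates the $|E^t| \le t$ coordinates observed so far.

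First I would restrict the comparator to the observed feedbacks. Since every $e^t$ lies in $E^T$, and every iterate $\xvec^t$ is supported on $E^{t-1} \subseteq E^T$, the objective $\sum_{t} \ind_{e^t}^\top(\xvec - \xvec^t)$ depends on $\xvec$ only through its components indexed by $E^T$. Using that $\X_{E^T} \subseteq \X$ and that $\tau_{E^T}$ maps $\alpha\X$ into $\alpha\X_{E^T}$, this gives $R^T_\alpha = \max_{\zvec \in \alpha\X_{E^T}} \sum_{t} \ind_{e^t}^\top(\zvec - \xvec^t)$; I then fix a maximizer $\zvec \in \alpha\X_{E^T}$ and bound the sum for this fixed $\zvec$.

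Next I would establish the one-step progress inequality $\|\zvec - \xvec^{t+1}\|^2 \le \|\zvec - \yvec^{t+1}\|^2 + \epsilon$. The subtlety is that the oracle call $\varphi_\alpha(E^t,\cdot,\epsilon)$ only guarantees a contraction against points of $\alpha\X_{E^t}$, whereas $\zvec$ may have support on $E^T \setminus E^t$. To bridge this, I would write $\zvec = \tau_{E^t}(\zvec) + \zvec^{(>t)}$ with $\zvec^{(>t)}$ supported on $E^T \setminus E^t$, observe that $\tau_{E^t}(\zvec) \in \alpha\X_{E^t}$, and note that $\yvec^{t+1} = \xvec^t + \eta \ind_{e^t}$ and the returned $\xvec^{t+1} \in \X_{E^t}$ are both supported on $E^t$, hence orthogonal to $\zvec^{(>t)}$; applying Definition~\ref{def:apx_proj_oracle} with $\xvec' = \tau_{E^t}(\zvec)$ and adding the common term $\|\zvec^{(>t)}\|^2$ to both sides yields the claimed inequality. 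Along the way I would verify the oracle's preconditions: $\eta \le 1$ and $\xvec^t \in [0,1]^p$ supported on $E^{t-1}$ give $\yvec^{t+1} \in [0,2]^p$ with $y^{t+1}_e = 0$ for $e \notin E^t$. Expanding $\|\zvec - \yvec^{t+1}\|^2 = \|\zvec - \xvec^t\|^2 - 2\eta\, \ind_{e^t}^\top(\zvec - \xvec^t) + \eta^2$ (using $\|\ind_{e^t}\|^2 = 1$) and rearranging gives
\[
\ind_{e^t}^\top(\zvec - \xvec^t) \le \frac{\|\zvec - \xvec^t\|^2 - \|\zvec - \xvec^{t+1}\|^2}{2\eta} + \frac{\eta}{2} + \frac{\epsilon}{2\eta}.
\]
Telescoping over $t \in [T]$, dropping $-\|\zvec - \xvec^{T+1}\|^2 \le 0$, and using $\xvec^1 = \boldsymbol{0}$ leaves the leading term $\|\zvec\|^2/(2\eta)$; since $\zvec \in \alpha\X_{E^T} \subseteq [0,1]^p$ is supported on $E^T$, we have $\|\zvec\|^2 \le |E^T|$, which delivers $R^T_\alpha \le \frac{|E^T|}{2\eta} + \frac{\eta T}{2} + \frac{\epsilon T}{2\eta}$. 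For the running-time claim I would note $|E^t| \le t$, so every vector the algorithm touches has at most $t$ nonzero coordinates and can be stored sparsely, all updates cost $\mathsf{poly}(t)$, and the one oracle invocation per iteration is made on an instance of size $\mathsf{poly}(t)$.

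The only genuinely non-routine step is the third one: making the oracle's contraction guarantee — stated only against $\alpha\X_{E^t}$ — apply to a comparator that may live in the strictly larger coordinate set $E^T$. The orthogonal-decomposition argument above is what resolves it, and it is precisely the place where the definition of $\X_E$ via the coordinate-zeroing map $\tau_E$ is essential.
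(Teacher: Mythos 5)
Your proof is correct and lands on the same bound, via the same one-step progress inequality from Definition~\ref{def:apx_proj_oracle} followed by a telescoping sum. The one genuine bookkeeping difference is how the two arguments handle the mismatch between the comparator's support and the oracle's set $\alpha\X_{E^t}$. The paper uses a \emph{moving} comparator $\tau_{E^t}(\xvec)$ that enlarges its support as feedbacks arrive, so the telescoping is imperfect: the increments $\|\tau_{E^t}(\xvec) - \xvec^{t+1}\|^2 - \|\tau_{E^{t-1}}(\xvec) - \xvec^t\|^2$ leak an extra $\mathbb{I}\{e^t \notin E^{t-1}\}$ term per iteration, and summing these indicators produces $|E^T|$. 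You instead fix a single comparator $\zvec \in \alpha\X_{E^T}$, invoke the oracle with $\tau_{E^t}(\zvec)$, and restore the full $\|\zvec - \cdot\|^2$ distance by adding the orthogonal piece $\|\tau_{E^T\setminus E^t}(\zvec)\|^2$ to both sides; the telescoping is then exact and the $|E^T|$ bound appears once up front as $\|\zvec\|^2 \le |E^T|$. The two are algebraically equivalent — note $\tau_{E^t}(\zvec) = \tau_{E^t}(\xvec)$ when $\zvec = \tau_{E^T}(\xvec)$ — so this is a cleaner packaging of the same idea rather than a different argument. One small inaccuracy: you write $R^T_\alpha = \max_{\zvec\in\alpha\X_{E^T}}\sum_t \ind_{e^t}^\top(\zvec - \xvec^t)$, but as in the paper this should be $\leq$ (the quantity in $\X$-space is an upper bound on the regret in $\Y$-space, not an identity); this does not affect the bound.
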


By setting $\eta = \frac{1}{\sqrt{T}}$, $\epsilon = \frac{1}{T}$, we get $R^T_{\alpha} \leq \sqrt  T \left( 1+ \frac{|E^T|}{2} \right)$.

Notice that the bound only depends on the number of observed feedbacks $|E^T|$, while it is independent of the overall number of possible feedbacks $p$.
This is crucial for the multi-receiver online Bayesian persuasion case, where $p$ is exponential in the the number of receivers $n$.
On the other hand, as $T$ goes to infinity, we have $|E^T| \leq p$, so that the regret bound is sublinear in $T$.


\section{Constructing a Poly-Time Approximate Projection Oracle}\label{sec:projection_oracle}

The crux of the OGD-APO algorithm (Algorithm~\ref{alg:OGD}) is being able to perform the approximate projection step.
In this section, we show that, in the multi-receiver Bayesian persuasion setting, the approximate projection oracle $\varphi_{\alpha}$ required by \textsc{OGD}-\textsc{APO} can be implemented in polynomial time by an appropriately-engineered ellipsoid algorithm. 
This calls for an {\em approximate separation oracle} $\Ocal_\alpha$ (see Definition~\ref{def:oracle}).
%

We proceed as follows.
In Section~\ref{sec:offline}, we define an appropriate notion of approximate separation oracle, and show how to find, in polynomial time, an $\alpha$-approximate solution to the offline problem \textsf{BAYESIAN-OPT-SIGNAL}. 
This is a preparatory step towards the understanding of our main result in this section, and it may be of independent interest.
Then, in Section~\ref{sec:sep_to_proj}, we exploit some of the techniques introduced for the offline setting in order to build $\varphi_{\alpha}$ starting from an approximate separation oracle $\Ocal_\alpha$.

\subsection{Warming Up: The Offline Setting}\label{sec:offline}

An approximate separation oracle $\Ocal_{\alpha}$ finds a signal profile $\svec \in \Scal$  that approximately maximizes a weighted sum of the $f_\theta$ functions,  plus a weight for each receiver which depends on the signal $s_r$ sent to that receiver.
Formally:

\begin{definition}[Approximate separation oracle] \label{def:oracle}	
	Consider a state $\theta \in \Theta$, a subset $K \subseteq \K$, a vector $\lambdavec \in \mathbb{R}_+^{|K|}$, weights $\wvec =(w_{r,s})_{r \in \rec, s \in \Scal_r}$ with $w_{r,s} \in \mathbb{R}$ and $w_{r,\varnothing}=0$ for all $r \in \rec$, and an approximation error $\epsilon\in\mathbb{R}_+$.
	Then, for any $0<\alpha\le 1$, an approximation oracle $\Ocal_{\alpha}(\theta,K,\lambdavec,\wvec,\epsilon)$ is an algorithm returning an $\svec\in\Scal$ such that:
	\begin{multline}\label{eq:approximation_oracle}
	\sum_{\kvec \in K} \lambda_{\kvec}  f_\theta(R^{\kvec}_{\svec}) + \sum_{r\in\rec} w_{r,s_r}  \\
	\ge  \max_{\svec^\star\in\Scal} \left\{\alpha \sum_{\kvec \in K} \lambda_{\kvec}  f_\theta(R^{\kvec}_{\svec^\star}) + \sum_{r\in\rec} w_{r,s^\star_r}\right\}-\epsilon,
	\end{multline}
	in time $\mathsf{poly}\left( n,|K|,\max_{r,s} |w_{r,s}|,\max_{\kvec} \lambda_{\kvec} ,\frac{1}{\epsilon}\right)$.
\end{definition}

As a preliminary result, we show how to use an oracle $\Ocal_{\alpha}$ to find in polynomial time an $\alpha$-approximate solution to \textsf{BAYESIAN-OPT-SIGNAL} (see Section~\ref{sec:hardness}).
This problem is interesting in its own right, and allows us to develop a line of reasoning that will be essential to prove Theorem~\ref{thm:proj_construction}.

\begin{restatable}{theorem}{thmOffline}\label{th:bayesian}
	Given $\epsilon\in\mathbb{R}_+$ and an approximate separation oracle $\Ocal_{\alpha}$, with $0<\alpha\le 1$, there exists a polynomial-time approximation algorithm for \emph{\textsf{BAYESIAN-OPT-SIGNAL}} returning a signaling scheme with sender's utility at least $\alpha \opt-\epsilon$, where $\opt$ is the value of an optimal signaling scheme.
	Moreover, the algorithm works in time $\mathsf{poly}(\frac{1}{\epsilon})$.
\end{restatable}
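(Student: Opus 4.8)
The plan is to solve the exponential-size LP~\eqref{LP1} --- call it $P$, with $\opt=\mathrm{val}(P)$ --- through its dual, using $\Ocal_\alpha$ to separate (approximately) the dual constraints, and then to recover a genuine signaling scheme by a restricted-primal argument. Since each $m_r$ is fixed, $|\sset_r|=2^{m_r}$ is constant, so $P$ has polynomially many constraints --- the persuasiveness constraints~\eqref{LP1:cons_pers}, one per triple $(r,s,k)$, and the $d$ simplex constraints~\eqref{LP1:constr_simplex} --- but exponentially many variables $\phi_\theta(\svec)$, as $|\sset|=\prod_r 2^{m_r}$ is exponential in $n$. Dualizing $P$ yields an LP $D$ with polynomially many variables --- multipliers $\beta_{r,s,k}\ge 0$ for~\eqref{LP1:cons_pers} and $\gamma_\theta\in\mathbb{R}$ for~\eqref{LP1:constr_simplex} --- and exponentially many constraints, one per pair $(\theta,\svec)$:
\[
	\gamma_\theta \ \ge\ \mu_\theta\left( \sum_{\kvec\in K}\lambda_{\kvec}\, f_\theta(R^{\kvec}_{\svec}) + \sum_{r\in\rec} w_{r,s_r} \right), \qquad w_{r,s}:=\sum_{k\in s}\beta_{r,s,k}\, u^{r,k}_\theta .
\]
Here $w_{r,\varnothing}=0$ (the sums $w_{r,s}$ refer to the state $\theta$ of the constraint), and, for a fixed $\theta$, checking whether all the $(\theta,\cdot)$ constraints hold at a point $(\beta,\gamma)$ is exactly the task carried out approximately by $\Ocal_\alpha(\theta,K,\lambdavec,\wvec,\cdot)$.

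First I would run the ellipsoid method on $D$: at each iterate $(\beta,\gamma)$ the separation routine checks $\beta\ge 0$, then for every $\theta$ calls $\Ocal_\alpha(\theta,K,\lambdavec,\wvec,\epsilon')$ with $\epsilon'$ a suitable constant fraction of $\epsilon$, returns the (genuine) violated constraint whenever the returned $\svec$ gives a value exceeding $\gamma_\theta/\mu_\theta$, and declares $(\beta,\gamma)$ feasible otherwise. The key point is what a ``feasible'' verdict certifies: by Definition~\ref{def:oracle}, if the returned $\svec$ satisfies $\sum_{\kvec}\lambda_{\kvec}f_\theta(R^{\kvec}_{\svec})+\sum_r w_{r,s_r}\le\gamma_\theta/\mu_\theta$, then for \emph{every} $\svec^\star\in\sset$ one has $\alpha\sum_{\kvec}\lambda_{\kvec}f_\theta(R^{\kvec}_{\svec^\star})+\sum_r w_{r,s^\star_r}\le \gamma_\theta/\mu_\theta+\epsilon'$. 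So a ``feasible'' verdict is a certificate of feasibility for the \emph{relaxed} dual $D_\alpha$ obtained from $D$ by replacing every $f_\theta$ with $\alpha f_\theta$ and additively relaxing the right-hand side by $\epsilon'\mu_\theta$. Crucially, this relaxation touches only the terms $\lambda_{\kvec}f_\theta(\cdot)$ --- which are exactly the objective coefficients of $P$ --- and leaves the $w_{r,s}$ terms --- i.e.\ the constraint matrix of $P$ --- unchanged; hence $D_\alpha$ is precisely the LP-dual of $P$ with objective scaled by $\alpha$ and shifted by $-\epsilon'$ (using $\sum_\svec\phi_\theta(\svec)=1$ and $\sum_\theta\mu_\theta=1$), so $\mathrm{val}(D_\alpha)=\alpha\,\opt-\epsilon'$. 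This ``partial approximation'' is the reason for the precise form of Definition~\ref{def:oracle}.

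Next I would wrap this in the standard binary search over a guess $V$ of the dual optimum. Running the feasibility test on $\{(\beta,\gamma):\beta\ge 0,\ \sum_\theta\gamma_\theta\le V\}$ intersected with the constraints of $D$ either (a) returns a point, which is then $D_\alpha$-feasible with $\sum_\theta\gamma_\theta\le V$, forcing $V\ge\alpha\opt-\epsilon'$; or (b) shrinks the region to emptiness using only genuine $D$-constraints, forcing $\opt>V$. Consequently every $V<\alpha\opt-\epsilon'$ yields outcome (b), so a binary search of precision $\delta$ comparable to $\epsilon$ locates a value $V\ge\alpha\opt-\epsilon'-\delta$ with outcome (b). Let $\mathcal{C}$ be the polynomially many pairs $(\theta,\svec)$ whose constraints were generated during that run, also adding the all-$a_0$ profile for every $\theta$; since only constraints in $\mathcal{C}$ were ever used, outcome (b) in fact certifies emptiness of the polynomial-size system $\{(\beta,\gamma):\beta\ge0,\ \sum_\theta\gamma_\theta\le V,\ (\theta,\svec)\text{-constraint holds }\forall(\theta,\svec)\in\mathcal{C}\}$. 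Dualizing this polynomial-size statement shows that the \emph{restricted primal} $P_{\mathcal{C}}$ --- LP~\eqref{LP1} with the variables $\phi_\theta(\svec)$ retained only for $(\theta,\svec)\in\mathcal{C}$ and fixed to $0$ otherwise --- has value $>V\ge\alpha\opt-\epsilon'-\delta$. Solving the polynomial-size LP $P_{\mathcal{C}}$ exactly then produces a $\phi^\star$ which, since its off-$\mathcal{C}$ components vanish, is a bona fide feasible (direct, persuasive) signaling scheme whose sender utility equals $\mathrm{val}(P_{\mathcal{C}})\ge\alpha\opt-\epsilon$, once $\epsilon'+\delta$ is set to $\epsilon$.

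For the running time, the ellipsoid method runs for polynomially many iterations and one confines its iterates to a box of polynomial bit-size, so $\max_{r,s}|w_{r,s}|$ and $\max_{\kvec}\lambda_{\kvec}$ stay polynomially bounded and every call to $\Ocal_\alpha$ costs $\mathsf{poly}(n,|K|,1/\epsilon)$; the outer binary search multiplies this by $O(\log(1/\epsilon))$, giving the stated $\mathsf{poly}(1/\epsilon)$ running time, and $P_{\mathcal{C}}$ is polynomial-size. I expect the main obstacle to be exactly the pitfall flagged in the text --- an approximate separation oracle a priori gives the ellipsoid method no guarantee --- and the two structural ingredients above are what defeat it: (i) making $\Ocal_\alpha$ approximate only the $f_\theta$ (objective) part, so that the induced relaxation $D_\alpha$ dualizes to ``scale the objective, keep the feasible region'' rather than destroying feasibility; and (ii) the restricted-primal recovery, which is what makes the returned scheme \emph{exactly} persuasive (reading a point off the ellipsoid's dual iterates would not). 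Aligning the binary-search interval, the relaxed-dual value $\alpha\opt-\epsilon'$, and the restricted-primal bound is the non-trivial primal--dual analysis the authors allude to; a minor additional point is the customary perturbation argument needed to turn the ellipsoid's ``small-volume'' verdict into genuine infeasibility of the polynomial-size system.
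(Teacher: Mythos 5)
Your proposal is correct and follows essentially the same primal--dual route as the paper: dualize LP~\eqref{LP1}, run the ellipsoid method with $\Ocal_\alpha$ serving as an approximate separator, observe that a ``feasible'' verdict certifies feasibility for a relaxed dual whose corresponding primal value is $\alpha\opt$ minus a small additive error, wrap this in a binary search on the objective threshold, and recover a genuine signaling scheme by solving the primal of LP~\eqref{LP1} restricted to the variables matching the violated constraints collected during the last infeasible ellipsoid run. The one place where you are lighter on detail than the paper is in bounding the magnitudes of the weights $w_{r,s}$ fed to $\Ocal_\alpha$: the paper handles this with an explicit case analysis inside the separation oracle (returning a violated constraint whenever some $w^\theta_{r,s}>1$ or $d_\theta\notin[0,1]$, and clipping weights below $-|\rec|$ to $-|\rec|$ since the corresponding dual constraints are then trivially satisfied), which is more delicate than the ``box of polynomial bit-size'' you allude to, but this is a technical rather than a conceptual divergence.
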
	
\begin{proof}[Proof Overview]
The dual of LP~\eqref{LP1} has a polynomial number of variables and an exponential number of constraints, and a natural way to prove polynomial-time solvability would be via the {ellipsoid method} (see, {\em e.g.},~\citep{khachiyan1980polynomial,grotschel1981ellipsoid}). 
However, in our setting, we can only rely on an approximate separation oracle, which renders the traditional ellipsoid method unsuitable for our problem.
We show that it is possible to exploit a binary search scheme on the dual problem to find a value $\gamma^\star\in[0,1]$ such that the dual problem with objective $\gamma^\star$ is feasible, while the dual with objective $\gamma^\star-\beta$, $\beta\ge 0$, is infeasible. That algorithm runs in $\log(\beta)$ steps. At each iteration of the algorithm, we solve a feasibility problem through the ellipsoid method equipped with an appropriate approximate separation oracle which we design. In order to build a poly-time separation oracle we have to carefully manage all the settings in which $\Ocal_\alpha$ would not run in polynomial time, according to~Definition \ref{def:oracle}. Specifically, we need to properly manage large values of the weights $\wvec$, since $\Ocal_\alpha$ is polynomial in $\max_{r,s} |w_{r,s}|$.
Once we do that, the approximate separation oracle is guaranteed to find a violated constraint, or to certify that all constraints are approximately satisfied.
Finally, we show that the approximately feasible solution computed via bisection allows one to recover an approximate solution to the original problem
\end{proof}

\subsection{From an Approximate Separation Oracle to an Approximate Projection Oracle}\label{sec:sep_to_proj}

Now, we show how to design a polynomial-time approximate projection oracle $\varphi_{\alpha}$ using an approximate separation oracle $\Ocal_{\alpha}$.
The proof employs a convex linearly-constrained quadratic program that computes the optimal projection on $\X$, the ellipsoid method, and a careful primal-dual analysis.

\begin{restatable}{theorem}{thmProjConstruction}\label{thm:proj_construction}
	Given a subset $K \subseteq \K$, a vector $\yvec\in [0,2]^{|\K|}$ such that $y_{\kvec} = 0$ for all $\kvec \notin K$, and an approximation error $\epsilon \in \mathbb{R}_+$, for any $0<\alpha\le 1$, the approximate projection oracle $\varphi_{\alpha}(K, \yvec,\epsilon)$ can be computed in polynomial time by querying the approximate separation oracle $\Ocal_\alpha$.
\end{restatable}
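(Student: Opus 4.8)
The plan is to mirror the primal-dual strategy used for \textsf{BAYESIAN-OPT-SIGNAL} in Theorem~\ref{th:bayesian}, but now applied to the quadratic projection program rather than the linear one. First I would write down the convex program that computes the exact projection of $\yvec$ onto $\X_K$: minimize $\|\xvec - \yvec\|^2$ subject to $\xvec \in \X_K$, where membership in $\X_K$ is encoded using the definition~\eqref{eq:def_x} of $\X$. Concretely, introduce a signaling scheme $\phi$ (with exponentially many variables $\phi_\theta(\svec)$) together with the persuasiveness and simplex constraints from LP~\eqref{LP1}, and require $x_{\kvec} \le f(\phi,\kvec)$ for every $\kvec \in K$ and $x_{\kvec} = 0$ for $\kvec \notin K$. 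This is a convex linearly-constrained quadratic program with exponentially many variables and polynomially many constraints in the dimension of $\xvec$, but exponentially many persuasiveness-type constraints once we pass to the dual.

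Next I would take the Lagrangian dual. The dual has polynomially many variables --- essentially a multiplier for each $x_{\kvec} \le f(\phi,\kvec)$ inequality plus the multipliers inherited from the persuasiveness constraints --- and exponentially many constraints indexed by signal profiles $\svec \in \Scal$. The key observation is that separating over these dual constraints is exactly the task of maximizing, over $\svec \in \Scal$, a weighted combination $\sum_{\kvec \in K}\lambda_{\kvec} f_\theta(R^{\kvec}_{\svec}) + \sum_{r}w_{r,s_r}$ of the form handled by the approximate separation oracle $\Ocal_\alpha$ in Definition~\ref{def:oracle}: the $\lambda_{\kvec}$ come from the multipliers on the $x_{\kvec}\le f(\phi,\kvec)$ constraints, and the additive receiver weights $w_{r,s_r}$ come from the persuasiveness multipliers. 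So one runs the ellipsoid method on the dual, using $\Ocal_\alpha$ as an approximate separation oracle. As in Theorem~\ref{th:bayesian}, I would manage the cases where $\Ocal_\alpha$ is not polynomial --- large weights $\wvec$, large $\lambda$ --- by truncation/scaling arguments, and wrap everything in the binary-search-over-the-objective-value scheme so that the approximate oracle either produces a genuinely violated constraint or certifies approximate feasibility of the current dual point.

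The final and most delicate step is the primal-dual accounting: I need to show that the $\epsilon$-approximate dual solution produced by the ellipsoid run, together with the associated primal candidate $(\phi, \xvec)$ recovered from the complementary-slackness/duality relations, actually satisfies the defining inequality of the approximate projection oracle, namely $\|\xvec' - \xvec\|^2 \le \|\xvec' - \yvec\|^2 + \epsilon$ for all $\xvec' \in \alpha\X_K$, and simultaneously yields a decision $y = \phi \in \Y$ with $x_{\kvec} \le u(y,\kvec) = f(\phi,\kvec)$ for all $\kvec$. Because the oracle is only $\alpha$-approximate, the returned $\xvec$ will lie in $\X_K$ but the guarantee will be with respect to the scaled set $\alpha\X_K$ --- this is precisely why Definition~\ref{def:apx_proj_oracle} is stated asymmetrically, and the quadratic (strongly convex) objective is what lets a small optimality gap in value translate into the stated geometric inequality. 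The main obstacle I expect is exactly this last translation: converting "the ellipsoid method with an $\alpha$-approximate, $\epsilon$-error separation oracle returns a point whose dual objective is within $\epsilon$ of optimal over a shrunk feasible region" into the clean projection inequality, controlling how the multiplicative $\alpha$ slack and the additive $\epsilon$ slack propagate through the KKT conditions of the quadratic program. I would handle this by the same ad-hoc primal-dual tools invoked for Theorem~\ref{th:bayesian}, adapted to the strongly convex objective, and by choosing the internal accuracy parameters of the ellipsoid run as a suitable polynomial in $\epsilon$, $n$, and $|K|$ so that the accumulated error is at most the target $\epsilon$.
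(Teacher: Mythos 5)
Your proposal is correct and follows essentially the same route as the paper: a convex QP primal for the projection onto $\X_K$, its Lagrangian dual derived via KKT stationarity, an ellipsoid loop with a decreasing objective threshold that uses $\Ocal_\alpha$ as approximate separation oracle (with case analysis bounding the weights and multipliers so the oracle runs in polynomial time), and a final primal-dual step exploiting strong convexity to convert an additive dual-value error into the projection inequality over $\alpha\X_K$. The paper's concrete resolution of the step you flag as the main obstacle is to retain the set of violated dual constraints from the last infeasible ellipsoid run, solve the primal QP restricted to the corresponding polynomially many variables to obtain $\xvec^\epsilon \in \X_K$ (together with a feasible $\phi$), and show via a Pythagorean inequality that $\|\xvec^\epsilon - \xvec^\star\|^2 \le \epsilon$ where $\xvec^\star$ is the exact projection of $\yvec$ onto a set $\bar\X_K \supseteq \alpha\X_K$, from which $\|\xvec'-\xvec^\epsilon\|^2 \le \|\xvec'-\yvec\|^2 + \epsilon$ for all $\xvec' \in \alpha\X_K$ follows.
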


\begin{proof}[Proof Overview]
We start by defining a convex minimization problem, which we denote by \circled{P}, for computing the projection of $\yvec$ on $\X_K$. Then, we work on the dual of \circled{P}, which we suitably simplify by reasoning over the KKT conditions of the problem.
As in the proof of Theorem~\ref{th:bayesian}, we proceed by repeatedly applying the ellipsoid method on a feasibility problem obtained from the dual, decreasing the required objective $\gamma^\star$ by a small additive factor $\beta$.
The ellipsoid method is equipped with the approximate separation oracle that employs the oracle in~Definition \ref{def:oracle} and carefully manages the cases in which $\Ocal_\alpha$ would not run in polynomial time. 
In this case, the problem is complicated by the fact that we have to determine an approximate projection over $\alpha\X_K$, rather than an approximate solution to \circled{P}. 
We found two dual problems such that one dual problem with objective $\gamma^\star$ is feasible, while the second one with objective $\gamma^\star+\beta$ is infeasible.
From these problems, we define a new convex optimization problem that is a modified version of \circled{P} and has value at least $\gamma^\star$. Then, we show that a solution to this problem is close to a projection on a set which includes $\alpha\X_K$.
Finally, we restrict  \circled{P} to the primal variables corresponding to the set of (polynomially-many) violated dual constraints determined during the last application of the ellipsoid method that returns unfeasible, \emph{i.e.,} where the ellipsoid method for feasibility problem is run with objective $\gamma^*+\beta$. We conclude the proof by showing that a solution to this restricted problem is precisely an approximate projection on a superset of $\alpha\X_K$.
\end{proof}

\section{A Poly-Time No-$\alpha$-Regret Algorithm for Submodular Sender's Utilities}\label{sec:submodular}

In this section, we conclude the construction of our polynomial-time no-$(1-\frac{1}{e})$-regret algorithm for settings in which sender's utilities are submodular.
The last component that we need to design is an approximate separation oracle $\Ocal_\alpha$ (see Definition~\ref{def:oracle}) running in polynomial time.
Next, we show how to obtain this by exploiting the fact that functions $f_\theta$ are submodular in the set of receivers playing action $a_1$.

First, we establish a relation between direct signals $\sset$ and matroids.
We define a matroid $\M_\sset \defeq (\G_\sset, \Ical_\sset)$ such that:
\begin{itemize}
	\item the ground set is $\G_\sset \defeq \{ (r, s) \mid r\in \rec, s \in \sset_r \}$;
	\item a subset $I \subseteq \G_\sset$ belongs to $\Ical_\sset$ if and only if $I $ contains \emph{at most one} pair for each receiver $r \in \rec$.
\end{itemize}
The elements of the ground set $\G_\sset$ represent receiver, signal pairs.
However, sets $I \in \Ical_\sset$ do \emph{not} characterize signal profiles, as they may {not} define a signal for each receiver.
Indeed, direct signal profiles are captured by the basis set $\B(\M_\sset)$ of the matroid $\M_\sset$.
Let us recall that $\B(\M_\sset)$ contains all the maximal sets in $\Ical_\sset$, and, thus, a subset $I \subseteq \Ical_\sset$ belongs to $\B(\M_\sset)$ if and only if $I$ contains \emph{exactly one} pair for each receiver $r \in \rec$.
Intuitively, a basis $I \in \B(\M_\sset)$ defines a direct signal profile $\svec \in \sset$ in which, for each receiver $r \in \rec$, all the receiver's types in $s \in \sset_r$ such that $(r, s) \in I$ are recommended to play action $a_1$, while the others are told to play $a_0$.

The following Theorem~\ref{thm:oracle_submodular} provides a polynomial-time approximation oracle $\Ocal_{1-\frac{1}{e}}$ for instances in which $f_\theta$ is submodular for each state of nature $\theta \in \Theta$.
The core idea of its proof is that $\sum_{\kvec \in K} \lambda_{\kvec}  f_\theta(R^{\kvec}_{\svec})$ (see Equation~\eqref{eq:approximation_oracle}) can be seen as a submodular function defined for the ground set $\G_\sset$ and optimizing over direct signal profiles $\svec \in \sset$ is equivalent to doing that over the bases $\B(\M_\sset)$ of the matroid $\M_\sset$.
Then, the result is readily proved by exploiting some results concerning the optimization over matroids.\footnote{The separation oracle provided in Theorem~\ref{thm:oracle_submodular} guarantees the desired approximation factor with arbitrary high probability. It is easy to see that, since the algorithm fails with arbitrary small probability, this does not modify our regret bound except for an (arbitrary small) negligible term.} 

%
%

\begin{restatable}{theorem}{thmsubmodular}\label{thm:oracle_submodular}
	If the sender's utility is such that function $f_\theta$ is \emph{submodular} for each $\theta \in \Theta$, then there exists a polynomial-time separation oracle $\Ocal_{1-\frac{1}{e}}$.
\end{restatable}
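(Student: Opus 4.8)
The plan is to reduce the optimization problem in the definition of the separation oracle (Definition~\ref{def:oracle}) to maximizing a monotone submodular function over the bases of the matroid $\M_\sset$, and then invoke a known $(1-\tfrac1e)$-approximation algorithm for that problem. First I would define, for a fixed state $\theta\in\Theta$, a subset $K\subseteq\K$, weights $\lambdavec$ and $\wvec$, a set function $g:2^{\G_\sset}\to\mathbb{R}$ on the ground set $\G_\sset=\{(r,s)\mid r\in\rec,\,s\in\sset_r\}$ as follows. For a basis $I\in\B(\M_\sset)$, which corresponds to a unique direct signal profile $\svec$ with $s_r$ being the (unique) signal such that $(r,s_r)\in I$, set $g(I)\defeq \sum_{\kvec\in K}\lambda_{\kvec}f_\theta(R^{\kvec}_{\svec})+\sum_{r\in\rec}w_{r,s_r}$; for a general independent set $I\in\Ical_\sset$ one completes the definition so that $g$ remains submodular (e.g.\ by treating a missing pair for receiver $r$ as the signal $\varnothing$, using $w_{r,\varnothing}=0$), and extends it monotonically to all of $2^{\G_\sset}$. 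The recommendation set as a function of which $(r,s)$ pairs are chosen can be written as $R^{\kvec}_{\svec}=\{r\mid k_r\in s_r\}=\bigcup_{(r,s)\in I:\,k_r\in s}\{r\}$, which is a coverage-type construction; composing a monotone submodular $f_\theta$ with this union-of-indicators map preserves submodularity, so each term $f_\theta(R^{\kvec}_{\cdot})$ is submodular in $I$, and the $\wvec$ term is modular (linear), hence $g$ is monotone submodular.

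Next I would observe two things. (i) Maximizing the left-hand side of~\eqref{eq:approximation_oracle} over $\svec\in\Scal$ is exactly maximizing $g$ over $\B(\M_\sset)$, because direct signal profiles are in bijection with the bases of $\M_\sset$ (as argued in the paragraph preceding the theorem). (ii) The matroid $\M_\sset$ is a partition matroid, so it admits a trivial polynomial-time independence oracle, and value queries to $g$ can be answered in time $\mathsf{poly}(n,|K|)$ by direct evaluation of the $f_\theta$ and $\wvec$ terms (using that $|\sset_r|=2^{m_r}$ is constant since $m$ is fixed). Then I would apply the continuous-greedy / measured-continuous-greedy algorithm of Calinescu–Chekuri–Pál–Vondrák (combined with pipage/swap rounding over the matroid polytope), which, given a value oracle for a monotone submodular function and a matroid independence oracle, computes in polynomial time a basis achieving at least $(1-\tfrac1e)$ of the optimum, with arbitrarily high probability. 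This yields an $\svec\in\Scal$ with $\sum_{\kvec\in K}\lambda_{\kvec}f_\theta(R^{\kvec}_{\svec})+\sum_{r\in\rec}w_{r,s_r}\ge(1-\tfrac1e)\max_{\svec^\star}\{\sum_{\kvec}\lambda_{\kvec}f_\theta(R^{\kvec}_{\svec^\star})+\sum_r w_{r,s^\star_r}\}$, i.e.\ the inequality~\eqref{eq:approximation_oracle} with $\alpha=1-\tfrac1e$ and (after a routine rescaling of the additive error tolerance) $\epsilon$ as required. The running time is $\mathsf{poly}(n,|K|,\max_{r,s}|w_{r,s}|,\max_{\kvec}\lambda_{\kvec},\tfrac1\epsilon)$, matching Definition~\ref{def:oracle}.

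The one technical subtlety, and the step I expect to need the most care, is the $\alpha$ on only the $f_\theta$-part versus the whole objective: Definition~\ref{def:oracle} asks to compare the true objective of the returned $\svec$ against $\alpha$ times the $f_\theta$-sum plus the \emph{full} (un-scaled) $\wvec$-sum of the optimum, whereas the off-the-shelf submodular-maximization guarantee gives $(1-\tfrac1e)$ of the whole objective $g$. Since the weights $\wvec$ may be negative, $g$ need not be nonnegative on $\B(\M_\sset)$, and the continuous-greedy guarantee in the form $g(\text{output})\ge(1-\tfrac1e)\max g$ requires monotone \emph{nonnegative} submodular functions. I would handle this by the standard shift: add a constant $C\defeq\sum_{r\in\rec}\max\{0,-\min_{s}w_{r,s}\}$ (polynomially bounded in $\max_{r,s}|w_{r,s}|$ and $n$) so that $g+C+c_0\ge0$ on all bases for a suitable constant $c_0$ absorbing any base value of the $f_\theta$ part; run the algorithm on the shifted function; and then verify by a short computation that the resulting basis satisfies the asymmetric guarantee of~\eqref{eq:approximation_oracle} — here one uses that on bases the $\wvec$-contribution is exactly $\sum_r w_{r,s_r}$ (modular), so subtracting the shift back out converts "$(1-\tfrac1e)$ of everything" into "$(1-\tfrac1e)$ of the $f_\theta$-part plus all of the $\wvec$-part minus an error controlled by $C$", and this error can be pushed below the target $\epsilon$ at the cost of the $\mathsf{poly}(\tfrac1\epsilon)$ and $\mathsf{poly}(\max_{r,s}|w_{r,s}|)$ dependence that Definition~\ref{def:oracle} already allows. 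Finally I would note the high-probability caveat (as in the theorem's footnote) coming from the randomized rounding, which is absorbed into a negligible additive term in the regret bound.
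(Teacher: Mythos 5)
Your reduction to matroid optimization is set up correctly and matches the paper: you use the same matroid $\M_\sset$, the same bijection between bases and direct signal profiles, and the same observation that $I\mapsto\sum_{\kvec}\lambda_{\kvec}f_\theta(R^{\kvec}_I)$ is monotone submodular (the paper proves this via a marginal-contribution computation; your coverage-type argument is fine). You also correctly flag the central subtlety: Definition~\ref{def:oracle} requires the factor $\alpha=1-\frac1e$ only on the submodular part while the $\wvec$-part of the optimum must appear \emph{undiscounted}, whereas off-the-shelf continuous greedy gives $1-\frac1e$ uniformly on the whole objective (and requires nonnegativity, which fails when $\wvec$ has negative entries).

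The gap is that your proposed fix (shift by a constant $C$ and subtract back) does not actually produce the asymmetric guarantee. If you add $C$ to make the objective nonnegative on bases and then apply a $(1-\frac1e)$ bound to $g'=f^{\lambdavec}_\theta+\ell^{\wvec}+C$, you get
\[
 f^{\lambdavec}_\theta(I)+\ell^{\wvec}(I)\;\geq\;\mleft(1-\tfrac1e\mright)f^{\lambdavec}_\theta(I^\star)+\mleft(1-\tfrac1e\mright)\ell^{\wvec}(I^\star)-\tfrac{C}{e},
\]
which still multiplies $\ell^{\wvec}(I^\star)$ by $1-\frac1e$ and carries an additive error of order $C/e$. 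Here $C$ is of order $n\cdot\max_{r,s}|w_{r,s}|$, which is not tunable: the definition allows the \emph{running time} to depend polynomially on $\max_{r,s}|w_{r,s}|$ and $\frac1\epsilon$, but the additive error on the output must actually be $\epsilon$, and $C/e$ cannot be pushed below $\epsilon$ by adjusting parameters. So the shift converts "$(1-\frac1e)$ of everything" into "$(1-\frac1e)$ of everything minus a large constant," not into the asymmetric form required. (There is also a secondary issue: after shifting, $g'$ need not be monotone, which the guarantee you invoke also assumes.) The paper avoids all of this by invoking a specialized result of Sviridenko, Vondr\'{a}k, and Ward (Theorem~3.1 in that work), which is tailored to maximizing $f+\ell$ over the bases of a matroid when $f$ is monotone submodular and $\ell$ is modular with possibly negative coefficients, and which yields exactly the asymmetric guarantee $f(I)+\ell(I)\geq(1-\frac1e)f(I')+\ell(I')-O(\epsilon)\,v$ needed here; replacing your final step with that theorem (and scaling $\epsilon$ against $v$, as the paper does) closes the gap.
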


In conclusion, by letting $\K^T \subseteq \K$ be the set of receivers' type profiles observed by the sender up to iteration $T$, the following Theorem~\ref{thm:no_regret} provides our polynomial-time no-$(1-\frac{1}{e})$-regret algorithm working with submodular sender's utilities.

\begin{restatable}{theorem}{thmregretsubmodular}\label{thm:no_regret}
	If the sender's utility is such that function $f_\theta$ is \emph{submodular} for each $\theta \in \Theta$, then there exists a no-$(1-\frac{1}{e})$-regret algorithm having $(1-\frac{1}{e})$-regret
	\[
	R^T_{1-\frac{1}{e}} \leq O \left( \sqrt{T} \, |\K^T| \right),
	\]
	with a per-iteration running time $\mathsf{poly}(T,n,d)$.
\end{restatable}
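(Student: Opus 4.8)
The plan is to obtain the algorithm by composing the three ingredients developed above. First, I would instantiate the general \textsc{OGD}-\textsc{APO} scheme (Algorithm~\ref{alg:OGD}) in the multi-receiver online Bayesian persuasion setting exactly as in Section~\ref{sec:learning_oracle}: the feedback set is $\E = \K$, the learner's decision set $\Y$ is the set of signaling schemes, $u(y,e) = f(\phi,\kvec)$, and the lifted decision set $\X \subseteq [0,1]^{|\K|}$ of Equation~\eqref{eq:def_x} is compact and convex. By Theorem~\ref{thm:regretogd}, running \textsc{OGD}-\textsc{APO} with an approximate projection oracle $\varphi_\alpha$ for $\alpha = 1 - \tfrac1e$, learning rate $\eta = \tfrac{1}{\sqrt T}$ and approximation error $\epsilon = \tfrac1T$ yields $\alpha$-regret $R^T_{1-1/e} \le \tfrac{|E^T|}{2\eta} + \tfrac{\eta T}{2} + \tfrac{\epsilon T}{2\eta} = \sqrt T\big(1 + \tfrac{|E^T|}{2}\big)$; since the feedbacks observed up to iteration $T$ are exactly the type profiles in $\K^T$, we have $|E^T| = |\K^T|$, so $R^T_{1-1/e} \le O(\sqrt T\,|\K^T|)$, which is sublinear in $T$ because $|\K^T| \le m^n$ is a fixed quantity. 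As shown in Section~\ref{sec:learning_oracle_framework}, this $R^T_\alpha$ upper bounds the $\alpha$-regret of the original persuasion problem, so the claimed bound follows.

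It remains to argue that $\varphi_{1-1/e}$ can be evaluated in time $\mathsf{poly}(T,n,d)$ on the queries issued by \textsc{OGD}-\textsc{APO}. By Theorem~\ref{thm:oracle_submodular}, submodularity of every $f_\theta$ yields a polynomial-time approximate separation oracle $\Ocal_{1-1/e}$, and by Theorem~\ref{thm:proj_construction} this oracle is enough to implement a polynomial-time approximate projection oracle $\varphi_{1-1/e}(K, \yvec, \epsilon)$. I would then check that the queries produced at each iteration meet the interface of Definition~\ref{def:apx_proj_oracle}: the observed feedback set $E^t = \K^t \subseteq \K$ has size at most $t \le T$, and the query point $\yvec^{t+1} = \xvec^t + \eta\,\ind_{e^t}$ lies in $[0,2]^{|\K|}$ with all components outside $E^t$ equal to zero, since $\xvec^t \in \X_{E^t} \subseteq [0,1]^{|\K|}$ and $\eta \le 1$. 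Because $\varphi_{1-1/e}$ invokes $\Ocal_{1-1/e}(\theta,K,\lambdavec,\wvec,\epsilon)$ once per state $\theta \in \Theta$ and a polynomial number of times overall through the ellipsoid method, and the running time of $\Ocal_{1-1/e}$ is polynomial in $n$, $|K| \le T$, $\max_{r,s}|w_{r,s}|$, $\max_\kvec \lambda_\kvec$ and $\tfrac1\epsilon = T$, the total per-iteration cost is $\mathsf{poly}(T,n,d)$.

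Finally, the separation oracle of Theorem~\ref{thm:oracle_submodular} is randomized and only succeeds with arbitrarily high probability; I would observe that by driving its failure probability below, say, $1/T^2$ (at a polynomial cost), the failure event contributes at most a negligible additive term to the expected $\alpha$-regret, so the bound holds in expectation as stated in the definition of $R_\alpha^T$. The main obstacle in writing this out is not any single step but the bookkeeping of parameter magnitudes across the three-layer reduction \textsc{OGD}-\textsc{APO} $\to \varphi_{1-1/e} \to \Ocal_{1-1/e}$: one must make sure that the multipliers $\lambdavec$ and weights $\wvec$ handed down to the separation oracle, together with the accuracy parameter, never blow up super-polynomially, so that ``polynomial time'' genuinely composes. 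All of this nontrivial control has already been carried out inside the proofs of Theorems~\ref{th:bayesian} and~\ref{thm:proj_construction}, so here it only needs to be invoked.
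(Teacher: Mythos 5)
Your proposal is correct and follows essentially the same route as the paper's proof: instantiate \textsc{OGD}-\textsc{APO} (Theorem~\ref{thm:regretogd}) with $\eta = 1/\sqrt T$, $\epsilon = 1/T$, $\alpha = 1-\tfrac1e$, note $E^T = \K^T$, and realize the oracle $\varphi_{1-1/e}$ by chaining Theorem~\ref{thm:proj_construction} onto the submodular separation oracle of Theorem~\ref{thm:oracle_submodular}. Your extra bookkeeping on the oracle interface, parameter magnitudes, and the randomized failure probability fleshes out points the paper leaves implicit (or relegates to a footnote), but it does not change the argument.
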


\begin{proof}
	We can run Algorithm~\ref{alg:OGD} on an instance of our multi-receiver online Bayesian persuasion problem.
	By Theorem~\ref{thm:regretogd}, if we set $\eta = \frac{1}{\sqrt{T}}$, $\epsilon = \frac{1}{T}$, and $\alpha=1-\frac{1}{e}$, we get the desired regret bound (notice that the set of observed feedbacks is $E^t = \K^t$ in our setting).
	Algorithm~\ref{alg:OGD} employs an approximate projection oracle $\varphi_{1-\frac{1}{e}}$ that we can implement in polynomial time by using the algorithm provided in Theorem~\ref{thm:proj_construction}.
	This requires access to a polynomial-time approximate separation oracle $\mathcal O_{1-\frac{1}{e}}$, which can be implemented by using Theorem~\ref{thm:oracle_submodular}, under the assumption that the sender's utility is such that functions $f_\theta$ are submodular.	
\end{proof}

Notice that the regret bound only depends on the number $|\K^T|$ of receivers' type profiles observed up to iteration $T$, while it is independent of the overall number of possible type profiles $|\K| = m^n$, which is exponential in the number of receivers.
Thus, the $(1-\frac{1}{e})$-regret is polynomial in the size of the problem instance provided that the type profiles received as feedbacks by the sender are polynomially many (though the sender does not have to know which are these type profiles in advance).
This is reasonable in many practical applications, where not all the type profiles can occur, since, \emph{e.g.}, receivers' types are highly correlated.
On the other hand, let us remark that, as $T$ goes to infinity, we have $|\K^T| \leq m^n$, so that the regret is sublinear in $T$.

%

\section*{Acknowledgments}
This work has been partially supported by the Italian MIUR PRIN 2017 Project ALGADIMAR ``Algorithms, Games, and Digital Market''.

\bibliography{biblio}

\begin{thebibliography}{35}
\providecommand{\natexlab}[1]{#1}
\providecommand{\url}[1]{\texttt{#1}}
\expandafter\ifx\csname urlstyle\endcsname\relax
  \providecommand{\doi}[1]{doi: #1}\else
  \providecommand{\doi}{doi: \begingroup \urlstyle{rm}\Url}\fi

\bibitem[Alonso \& C{\^a}mara(2016)Alonso and C{\^a}mara]{alonso2016persuading}
Alonso, R. and C{\^a}mara, O.
\newblock Persuading voters.
\newblock \emph{American Economic Review}, 106\penalty0 (11):\penalty0
  3590--3605, 2016.

\bibitem[Antioch et~al.(2013)]{antioch2013persuasion}
Antioch, G. et~al.
\newblock {Persuasion is now 30 per cent of US GDP: Revisiting McCloskey and
  Klamer after a quarter of a century}.
\newblock \emph{Economic Round-up}, 2013.

\bibitem[Arieli \& Babichenko(2019)Arieli and Babichenko]{arieli2019private}
Arieli, I. and Babichenko, Y.
\newblock Private bayesian persuasion.
\newblock \emph{Journal of Economic Theory}, 182:\penalty0 185--217, 2019.

\bibitem[Arora et~al.(1998)Arora, Lund, Motwani, Sudan, and
  Szegedy]{arora1998proof}
Arora, S., Lund, C., Motwani, R., Sudan, M., and Szegedy, M.
\newblock Proof verification and the hardness of approximation problems.
\newblock \emph{Journal of the ACM (JACM)}, 45\penalty0 (3):\penalty0 501--555,
  1998.

\bibitem[Babichenko \& Barman(2016)Babichenko and
  Barman]{babichenko2016computational}
Babichenko, Y. and Barman, S.
\newblock Computational aspects of private bayesian persuasion.
\newblock \emph{arXiv preprint arXiv:1603.01444}, 2016.

\bibitem[Babichenko \& Barman(2017)Babichenko and
  Barman]{babichenko2017algorithmic}
Babichenko, Y. and Barman, S.
\newblock Algorithmic aspects of private {Bayesian} persuasion.
\newblock In \emph{Innovations in Theoretical Computer Science Conference},
  2017.

\bibitem[Babichenko et~al.(2021)Babichenko, Talgam-Cohen, Xu, and
  Zabarnyi]{babichenko2021regret}
Babichenko, Y., Talgam-Cohen, I., Xu, H., and Zabarnyi, K.
\newblock Regret-minimizing bayesian persuasion.
\newblock \emph{arXiv preprint arXiv:2105.13870}, 2021.

\bibitem[Badanidiyuru et~al.(2018)Badanidiyuru, Bhawalkar, and
  Xu]{badanidiyuru2018targeting}
Badanidiyuru, A., Bhawalkar, K., and Xu, H.
\newblock Targeting and signaling in ad auctions.
\newblock In \emph{Proceedings of the Twenty-Ninth Annual ACM-SIAM Symposium on
  Discrete Algorithms}, pp.\  2545--2563, 2018.

\bibitem[Bhaskar et~al.(2016)Bhaskar, Cheng, Ko, and
  Swamy]{bhaskar2016hardness}
Bhaskar, U., Cheng, Y., Ko, Y.~K., and Swamy, C.
\newblock Hardness results for signaling in bayesian zero-sum and network
  routing games.
\newblock In \emph{Proceedings of the 2016 ACM Conference on Economics and
  Computation}, pp.\  479--496, 2016.

\bibitem[Bro~Miltersen \& Sheffet(2012)Bro~Miltersen and Sheffet]{bro2012send}
Bro~Miltersen, P. and Sheffet, O.
\newblock Send mixed signals: earn more, work less.
\newblock In \emph{Proceedings of the 13th ACM Conference on Electronic
  Commerce}, pp.\  234--247, 2012.

\bibitem[Candogan(2019)]{candogan2019persuasion}
Candogan, O.
\newblock Persuasion in networks: Public signals and k-cores.
\newblock In \emph{Proceedings of the 2019 ACM Conference on Economics and
  Computation}, pp.\  133--134, 2019.

\bibitem[Castiglioni \& Gatti(2021)Castiglioni and
  Gatti]{castiglioni2020persuading}
Castiglioni, M. and Gatti, N.
\newblock Persuading voters in district-based elections.
\newblock In \emph{The Thirty-Fifth AAAI Conference on Artificial
  Intelligence}, 2021.

\bibitem[Castiglioni et~al.(2020{\natexlab{a}})Castiglioni, Celli, and
  Gatti]{castiglioni2019persuading}
Castiglioni, M., Celli, A., and Gatti, N.
\newblock Persuading voters: It's easy to whisper, it's hard to speak loud.
\newblock In \emph{The Thirty-Fourth AAAI Conference on Artificial
  Intelligence}, pp.\  1870--1877, 2020{\natexlab{a}}.

\bibitem[Castiglioni et~al.(2020{\natexlab{b}})Castiglioni, Celli, Marchesi,
  and Gatti]{castiglioni2020online}
Castiglioni, M., Celli, A., Marchesi, A., and Gatti, N.
\newblock Online bayesian persuasion.
\newblock \emph{Advances in Neural Information Processing Systems}, 33,
  2020{\natexlab{b}}.

\bibitem[Castiglioni et~al.(2021)Castiglioni, Celli, Marchesi, and
  Gatti]{castiglioni2020signaling}
Castiglioni, M., Celli, A., Marchesi, A., and Gatti, N.
\newblock Signaling in bayesian network congestion games: the subtle power of
  symmetry.
\newblock In \emph{The Thirty-Fifth AAAI Conference on Artificial
  Intelligence}, 2021.

\bibitem[Cheng et~al.(2015)Cheng, Cheung, Dughmi, Emamjomeh-Zadeh, Han, and
  Teng]{cheng2015mixture}
Cheng, Y., Cheung, H.~Y., Dughmi, S., Emamjomeh-Zadeh, E., Han, L., and Teng,
  S.-H.
\newblock Mixture selection, mechanism design, and signaling.
\newblock In \emph{56th Annual Symposium on Foundations of Computer Science},
  pp.\  1426--1445, 2015.

\bibitem[Dughmi \& Xu(2016)Dughmi and Xu]{dughmi2016algorithmic}
Dughmi, S. and Xu, H.
\newblock Algorithmic bayesian persuasion.
\newblock In \emph{ACM STOC}, pp.\  412--425, 2016.

\bibitem[Dughmi \& Xu(2017)Dughmi and Xu]{dughmi2017algorithmic}
Dughmi, S. and Xu, H.
\newblock Algorithmic persuasion with no externalities.
\newblock In \emph{ACM EC}, pp.\  351--368, 2017.

\bibitem[Emek et~al.(2014)Emek, Feldman, Gamzu, PaesLeme, and
  Tennenholtz]{emek2014signaling}
Emek, Y., Feldman, M., Gamzu, I., PaesLeme, R., and Tennenholtz, M.
\newblock Signaling schemes for revenue maximization.
\newblock \emph{ACM Transactions on Economics and Computation}, 2\penalty0
  (2):\penalty0 1--19, 2014.

\bibitem[Garber(2017)]{garber2017}
Garber, D.
\newblock Efficient online linear optimization with approximation algorithms.
\newblock In Guyon, I., Luxburg, U.~V., Bengio, S., Wallach, H., Fergus, R.,
  Vishwanathan, S., and Garnett, R. (eds.), \emph{Advances in Neural
  Information Processing Systems}, volume~30, pp.\  627--635. Curran
  Associates, Inc., 2017.

\bibitem[Gr{\"o}tschel et~al.(1981)Gr{\"o}tschel, Lov{\'a}sz, and
  Schrijver]{grotschel1981ellipsoid}
Gr{\"o}tschel, M., Lov{\'a}sz, L., and Schrijver, A.
\newblock The ellipsoid method and its consequences in combinatorial
  optimization.
\newblock \emph{Combinatorica}, 1\penalty0 (2):\penalty0 169--197, 1981.

\bibitem[Hazan et~al.(2018)Hazan, Hu, Li, and Li]{Hazan2018online}
Hazan, E., Hu, W., Li, Y., and Li, Z.
\newblock Online improper learning with an approximation oracle.
\newblock In \emph{Advances in Neural Information Processing Systems},
  volume~31, pp.\  5652--5660, 2018.

\bibitem[Kakade et~al.(2009)Kakade, Kalai, and Ligett]{kakade2009playing}
Kakade, S.~M., Kalai, A.~T., and Ligett, K.
\newblock Playing games with approximation algorithms.
\newblock \emph{SIAM Journal on Computing}, 39\penalty0 (3):\penalty0
  1088--1106, 2009.
\newblock \doi{10.1137/070701704}.

\bibitem[Kamenica \& Gentzkow(2011)Kamenica and Gentzkow]{kamenica2011bayesian}
Kamenica, E. and Gentzkow, M.
\newblock Bayesian persuasion.
\newblock \emph{American Economic Review}, 101\penalty0 (6):\penalty0
  2590--2615, 2011.

\bibitem[Khachiyan(1980)]{khachiyan1980polynomial}
Khachiyan, L.~G.
\newblock Polynomial algorithms in linear programming.
\newblock \emph{USSR Computational Mathematics and Mathematical Physics},
  20\penalty0 (1):\penalty0 53--72, 1980.

\bibitem[Mansour et~al.(2016)Mansour, Slivkins, Syrgkanis, and
  Wu]{mansour2016bayesian}
Mansour, Y., Slivkins, A., Syrgkanis, V., and Wu, Z.~S.
\newblock Bayesian exploration: Incentivizing exploration in bayesian games.
\newblock In \emph{Proceedings of the 2016 ACM Conference on Economics and
  Computation}, pp.\  661--661, 2016.

\bibitem[McCloskey \& Klamer(1995)McCloskey and Klamer]{mccloskey1995one}
McCloskey, D. and Klamer, A.
\newblock One quarter of {GDP} is persuasion.
\newblock \emph{The American Economic Review}, 85\penalty0 (2):\penalty0
  191--195, 1995.

\bibitem[Rabinovich et~al.(2015)Rabinovich, Jiang, Jain, and
  Xu]{rabinovich2015information}
Rabinovich, Z., Jiang, A.~X., Jain, M., and Xu, H.
\newblock Information disclosure as a means to security.
\newblock In \emph{Proceedings of the 2015 International Conference on
  Autonomous Agents and Multiagent Systems}, pp.\  645--653, 2015.

\bibitem[Raz(1998)]{raz1998parallel}
Raz, R.
\newblock A parallel repetition theorem.
\newblock \emph{SIAM Journal on Computing}, 27\penalty0 (3):\penalty0 763--803,
  1998.

\bibitem[Roughgarden \& Wang(2019)Roughgarden and
  Wang]{roughgarden2019minimizing}
Roughgarden, T. and Wang, J.~R.
\newblock Minimizing regret with multiple reserves.
\newblock \emph{ACM Transactions on Economics and Computation (TEAC)},
  7\penalty0 (3):\penalty0 1--18, 2019.

\bibitem[Schrijver(2003)]{schrijver2003combinatorial}
Schrijver, A.
\newblock \emph{Combinatorial optimization: polyhedra and efficiency},
  volume~24.
\newblock Springer Science \& Business Media, 2003.

\bibitem[Sviridenko et~al.(2017)Sviridenko, Vondr{\'a}k, and
  Ward]{sviridenko2017optimal}
Sviridenko, M., Vondr{\'a}k, J., and Ward, J.
\newblock Optimal approximation for submodular and supermodular optimization
  with bounded curvature.
\newblock \emph{Mathematics of Operations Research}, 42\penalty0 (4):\penalty0
  1197--1218, 2017.

\bibitem[Vasserman et~al.(2015)Vasserman, Feldman, and
  Hassidim]{vasserman2015implementing}
Vasserman, S., Feldman, M., and Hassidim, A.
\newblock Implementing the wisdom of waze.
\newblock In \emph{Twenty-Fourth International Joint Conference on Artificial
  Intelligence}, pp.\  660--666, 2015.

\bibitem[Xu(2020)]{Xu2020tractability}
Xu, H.
\newblock On the tractability of public persuasion with no externalities.
\newblock In \emph{Proceedings of the 2020 ACM-SIAM Symposium on Discrete
  Algorithms (SODA)}, pp.\  2708--2727, 2020.

\bibitem[Xu et~al.(2016)Xu, Freeman, Conitzer, Dughmi, and
  Tambe]{xu2016signaling}
Xu, H., Freeman, R., Conitzer, V., Dughmi, S., and Tambe, M.
\newblock Signaling in bayesian stackelberg games.
\newblock In \emph{Proceedings of the 2016 International Conference on
  Autonomous Agents and Multiagent Systems}, pp.\  150--158, 2016.

\end{thebibliography}
\bibliographystyle{icml2021}

\clearpage

\appendix
\onecolumn

\section{Proofs Omitted from Section~\ref{sec:hardness}}\label{sec:app_hardness}

In this section, we provide the complete proof of the hardness result in Theorem~\ref{thm:hardness}.
This is based on a reduction from the promise-problem version of \textsf{LABEL-COVER}, which we define next.

The following is the formal definition of an instance of the \textsf{LABEL-COVER} problem.

\begin{definition}[\textsf{LABEL-COVER} instance]
	An instance of \textnormal{\textsf{LABEL-COVER}} consists of a tuple $( G, \Sigma, \Pi)$, where:
	\begin{itemize}
		\item $G \coloneqq (U,V,E)$ is a \emph{bipartite graph} defined by two disjoint sets of nodes $U$ and $V$, connected by the edges in $E \subseteq U \times V$, which are such that all the nodes in $U$ have the same degree;
		\item $\Sigma$ is a finite set of \emph{labels}; and
		\item $\Pi \coloneqq \left\{ \Pi_e : \Sigma \to \Sigma \mid e \in E \right\}$ is a finite set of \emph{edge constraints}.
	\end{itemize}
\end{definition}

\begin{definition}[Labeling]
	Given an instance $( G, \Sigma, \Pi)$ of \textnormal{\textsf{LABEL-COVER}}, a \emph{labeling} of the graph $G$ is a mapping $\pi: U \cup V \to \Sigma$ that assigns a label to each vertex of $G$ such that all the edge constraints are satisfied.
	Formally, a labeling $\pi$ satisfies the constraint for an edge $e =(u,v) \in E$ if $ \pi(v) = \Pi_e(\pi(u))$.
\end{definition}

The classical \textsf{LABEL-COVER} problem is the search problem of finding a valid labeling for a \textsf{LABEL-COVER} instance given as input.
In the following, we consider a different version of the problem, which is the \emph{promise problem} associated with \textsf{LABEL-COVER} instances, defined as follows.

\begin{definition}[\textsf{GAP-LABEL-COVER}$_{c,b}$]
	For any pair of numbers $0 < b < c < 1$, we define \textnormal{\textsf{GAP-LABEL-COVER}}$_{c,b}$ as the following promise problem.
	\begin{itemize}
		\item \textnormal{\texttt{Input:}} An instance $(G,\Sigma, \Pi)$ of \textnormal{\textsf{LABEL-COVER}} such that either one of the following is true:
		\begin{itemize}
			\item there exists a labeling $\pi: U \cup V \to \Sigma$ that satisfies at least a fraction $c$ of the edge constraints in $\Pi$;
			\item any labeling $\pi: U \cup V \to \Sigma$ satisfies less than a fraction $b$ of the edge constraints in $\Pi$.
		\end{itemize}
		\item \textnormal{\texttt{Output:}} Determine which of the above two cases hold.
	\end{itemize}
\end{definition}

In order to prove Theorem~\ref{thm:hardness}, we make use of the following result due to~\citet{raz1998parallel}~and~\citet{arora1998proof}.

\begin{theorem}[\citet{raz1998parallel,arora1998proof}]\label{thm:hard_gap_label}
	For any $\epsilon > 0$, there exists a constant $k_\epsilon \in \mathbb{N}$ that depends on $\epsilon$ such that the promise problem \textnormal{\textsf{GAP-LABEL-COVER}}$_{1,\epsilon}$ restricted to inputs $(G, \Sigma, \Pi)$ with $|\Sigma| = k_\epsilon$ is \textnormal{\textsf{NP}}-hard.
\end{theorem}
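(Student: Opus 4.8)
The plan is to assemble the statement from three standard ingredients: the PCP theorem, the clause--variable two-prover game, and Raz's parallel repetition theorem. The promised constant $k_\epsilon$ will simply be the (constant, once $\epsilon$ is fixed) alphabet size of a suitably repeated \textsf{LABEL-COVER} instance, and the hard inputs will be the images of \textsf{3SAT} formulas under a polynomial-time reduction.

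I would first invoke the PCP theorem of \citet{arora1998proof} in its gap form: there is an absolute constant $\rho < 1$ such that it is \NPHard{} to distinguish satisfiable \textsf{3SAT} instances from those in which no assignment satisfies more than a $\rho$ fraction of the clauses. After a trivial preprocessing we may assume every clause contains exactly three distinct variables. I would then convert this into a \textsf{LABEL-COVER} instance with a constant soundness gap and constant alphabet via the clause--variable game: let $U$ be the set of clauses, $V$ the set of variables, and put an edge $(u,v)\in E$ whenever $v$ occurs in $u$, so every node of $U$ has degree exactly $3$, as the definition requires. Take $\Sigma$ to be a fixed label set of size $7$, encoding on the $U$-side the satisfying local assignments to three Boolean variables and on the $V$-side the two Boolean values (padded into $\Sigma$ by dummy symbols); the constraint $\Pi_{(u,v)}$ sends a satisfying local assignment of $u$ to the value it induces on $v$, totalized arbitrarily on the padding symbols. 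If the formula is satisfiable, the satisfying assignment lifts to a labeling obeying \emph{all} constraints, so the optimal value is $1$. If no assignment satisfies more than a $\rho$ fraction of clauses, then for every labeling $\pi$ the induced global assignment $\beta(v)\defeq\pi(v)$ makes at least a $(1-\rho)$ fraction of clauses unsatisfied, and for each such clause $u$ the (necessarily satisfying) local label $\pi(u)$ must disagree with $\beta$ on some incident variable, hence at least one of the three edges at $u$ fails; so at least a $\frac{1-\rho}{3}$ fraction of edge constraints fails, i.e.\ the value is at most $c_0 \defeq 1-\frac{1-\rho}{3} < 1$. This already shows that $\textsf{GAP-LABEL-COVER}_{1,c_0}$ is \NPHard{} on inputs with $|\Sigma| = 7$.

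Next, I would amplify the gap from the fixed constant $c_0$ down to an arbitrary $\epsilon>0$ while keeping the alphabet of constant size, using parallel repetition. The $k$-fold parallel repetition of the instance $(G,\Sigma,\Pi)$ above is the instance $(G^{(k)},\Sigma^k,\Pi^{(k)})$ with node sets $U^k$ and $V^k$, an edge between $(u_1,\dots,u_k)$ and $(v_1,\dots,v_k)$ iff $(u_i,v_i)\in E$ for every $i$ (so every $U^k$-node has the uniform degree $3^k$), and edge constraints equal to the coordinatewise product of the $\Pi_{(u_i,v_i)}$, which is again a function $\Sigma^k\to\Sigma^k$. Completeness is preserved trivially: a labeling of value $1$ lifts coordinatewise to one of value $1$ for $G^{(k)}$. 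For soundness, Raz's parallel repetition theorem \citep{raz1998parallel} gives that the value of the $k$-fold repetition is at most $\big(1-(1-c_0)^3\big)^{\Omega(k/\log|\Sigma|)}$; since $c_0$ and $|\Sigma|=7$ are absolute constants with $c_0<1$, there is a $k=k(\epsilon)$ making this quantity smaller than $\epsilon$. Fixing this $k$, the map $\varphi\mapsto (G^{(k)},\Sigma^k,\Pi^{(k)})$ is computable in time $|\varphi|^{O(k)}=\mathsf{poly}(|\varphi|)$, sends satisfiable formulas to value-$1$ instances and $\rho$-far formulas to value-$<\epsilon$ instances, and always outputs instances with uniform $U$-degree and alphabet of size exactly $7^{k}$. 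Setting $k_\epsilon\defeq 7^{k(\epsilon)}$ yields the theorem.

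The only genuinely deep ingredient here is Raz's parallel repetition theorem, which I would use as a black box; everything else is bookkeeping. I expect the main subtlety in writing the full proof to be reconciling the ``single label set $\Sigma$ with $\Pi_e:\Sigma\to\Sigma$'' formalism of the \textsf{LABEL-COVER} definition used in this paper with the naturally two-sided clause--variable game --- handled, as above, by embedding the variable-side Boolean labels into the size-$7$ alphabet and expressing the constraints as partial projections arbitrarily totalized into functions on $\Sigma$ --- together with the (standard) verification that the perfect completeness of the PCP/clause--variable game survives parallel repetition and that the degree-uniformity required of $U$ is maintained.
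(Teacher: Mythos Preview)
Your proposal is correct and follows the standard derivation of this result. Note, however, that the paper does \emph{not} supply its own proof of this statement: it is quoted verbatim as a known theorem, attributed to \citet{raz1998parallel} and \citet{arora1998proof}, and used purely as a black box in the reduction establishing Theorem~\ref{thm:hardness}. So there is no argument in the paper to compare your approach against; what you have written is essentially the textbook way one would reconstruct the cited result (PCP theorem $\Rightarrow$ constant-gap clause--variable \textsf{LABEL-COVER} $\Rightarrow$ amplification via Raz's parallel repetition), and your handling of the single-alphabet convention and the $U$-side degree-regularity are both fine.
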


Next, we provide the complete proof of Theorem~\ref{thm:hardness}.

\thmhardness*

\begin{proof}
	We provide a reduction from \textnormal{\textsf{GAP-LABEL-COVER}}$_{1,\epsilon}$.
	Our reduction maps an instance $( G, \Sigma, \Pi)$ of \textsf{LABEL-COVER} to an instance of \textsf{BAYESIAN-OPT-SIGNAL} with the following properties:
	\begin{itemize}
		\item (\emph{completeness}) if the \textsf{LABEL-COVER} instance admits a labeling satisfying all the edge constraints (recall $c=1$), then the \textsf{BAYESIAN-OPT-SIGNAL} instance has a signaling scheme with sender's expected utility $\geq \left( 1-\frac{\epsilon}{|\Sigma|} \right) \frac{1}{|\Sigma|}\ge \frac{1}{2|\Sigma|} $;
		\item (\emph{soundness}) if the \textsf{LABEL-COVER} instance is such that any labeling satisfies at most a fraction $\epsilon$ of the edge constraints, then an optimal signaling scheme in the \textsf{BAYESIAN-OPT-SIGNAL} instance has sender's expected utility at most $\frac{2\epsilon}{|\Sigma|}$.
	\end{itemize}
	By Theorem~\ref{thm:hard_gap_label}, for any $\epsilon > 0$ there exists a constant $k_\epsilon \in \mathbb{N}$ that depends on $\epsilon$ such that \textsf{GAP-LABEL-COVER}$_{1,\epsilon}$ restricted to inputs $(G, \Sigma, \Pi)$ with $|\Sigma| = k_\epsilon$ is \textnormal{\textsf{NP}}-hard.
	Given $0 < \alpha \leq 1$, by setting $\epsilon = \frac{\alpha}{4}$ and noticing that $\frac{2\epsilon/|\Sigma|}{1/2|\Sigma|}=4\epsilon=\alpha$, we can conclude that it is \NPHard~to compute an $\alpha$-approximate solution to \textsf{BAYESIAN-OPT-SIGNAL}.

	\paragraph{Construction}
	Given an instance $( G, \Sigma, \Pi)$ of \textnormal{\textsf{LABEL-COVER}} defined over a bipartite graph $G \defeq (U,V,E)$, we build an instance of \textsf{BAYESIAN-OPT-SIGNAL} as follows.
	\begin{itemize}
		\item For each label $\sigma \in \Sigma$, there is a corresponding state of nature $\theta_\sigma \in \Theta$. Moreover, there is an additional state $\theta_0 \in \Theta$. Thus, the total number of possible states is $d = |\Sigma|+1$.
		\item The prior distribution is $\muvec \in \textnormal{int}(\Delta_\Theta)$ such that $\mu_{\theta_\sigma} = \frac{\epsilon}{|\Sigma|^2}$ for every $\theta_\sigma \in \Theta$ and $\mu_{\theta_0} = 1-\frac{\epsilon}{|\Sigma|}$.
		\item For every vertex $v \in U \cup V$ of the graph $G$, there is a receiver $r_v \in \rec$. Thus, $n = |U \cup V|$.
		\item Each receiver $r_v \in \rec$ has $m_{r_v} = |\Sigma|+1$ possible types. The set of types of receiver $r_v$ is $\K_{r_v} = \{ k_{\sigma} \mid \sigma \in \Sigma \} \cup \{ k_0 \}$.
		\item A receiver $r_v \in \rec$ of type $k_\sigma \in \K_{r_v}$ has utility such that $u^{r_v,k_\sigma}_{\theta_\sigma}=\frac{1}{2}$ and $u^{r_v,k_\sigma}_{\theta_{\sigma'}}=-1$ for all $\theta_{\sigma'}  \in \Theta :\theta_{\sigma'} \neq \theta_\sigma$, while $u^{r_v,k_\sigma}_{ \theta_0}=-\frac{\epsilon}{2 |\Sigma|^2}$. Moreover, a receiver $r_v \in \rec$ of type $k_0$ has utility such that $u^{r_v,k_0}_{\theta}=-1$ for all $\theta \in \Theta$.
		\item The sender's utility is such that, for every $\theta \in \Theta$, the function $f_\theta: 2^\rec \to [0,1]$ satisfies $f_\theta(R) =1 $ if and only if $R \subseteq \rec : |R| \geq 2$, while $f_\theta(R) =0$ otherwise.
		\item The subset $K \subseteq \K$ of type profiles that can occur with positive probability is $K \defeq \left\{  \kvec^{uv, \sigma} \mid e =(u,v) \in E , \sigma \in  \Sigma \right\}$, where, for every edge $e = (u,v) \in E$ and label $\sigma \in \Sigma$, the type profile $\kvec^{uv, \sigma} \in \K$ is such that $k^{uv, \sigma}_{r_u} = k_\sigma$, $k^{uv, \sigma}_{r_v}  = k_{\sigma'}$ with $\sigma' = \Pi_e(u)$, and $k^{uv, \sigma}_{r_{v'}}  = k_0$ for every $r_{v'} \in \rec : r_{v'} \notin \{r_u, r_v\}$.
		\item The probability distribution $\lambdavec \in \textnormal{int}(\Delta_{K})$ is such that $\lambda_{\kvec}=\frac{1}{|E||\Sigma|}$ for every $\kvec \in K$.
	\end{itemize}
	Notice that, in the \textsf{BAYESIAN-OPT-SIGNAL}  instances used for the reduction, the sender's payoff is $1$ if and only if at least two receivers play action $a_1$, while it is $0$ otherwise.
	Let us also recall that direct signals for a receiver $r_v \in \rec$ are defined by the set $\sset_{r_v} \defeq 2^{\K_{r_v}}$, with a signal being represented as the set of receiver's types that are recommended to play action $a_1$.

	\paragraph{Completeness}
	Let $\pi: U \cup V \to \Sigma$ be a labeling of the graph $G$ that satisfies all the edge constraints.
	We define a corresponding direct signaling scheme $\phi: \Theta \to \Delta_\sset$ as follows.
	For any label $\sigma \in \Sigma$, let $\svec^\sigma \in \sset$ be a signal profile such that the signal sent to receiver $r_v \in \rec$ is $s_{r_v}^\sigma = \{ k_\sigma \}$, \emph{i.e.}, only a receiver of the type $k_\sigma$ is told to play $a_1$, while all the other types are recommended to play $a_0$.
	Moreover, let $\svec^\pi \in \sset$ be a signal profile in which the signal sent to receiver $r_v \in \rec$ is $s_{r_v}^\pi = \{ k_{\sigma} \}$ with $\sigma \in \Sigma : \sigma = \pi (v)$, \emph{i.e.}, each receiver $r_v$ is told to play action $a_1$ only if her/his type is $k_{\sigma}$ for the label $\sigma$ assigned to vertex $v$ by the labeling $\pi$, otherwise she/he is recommended to play $a_0$.
	Then, we define $\phi_{\theta_\sigma} (\svec^{\sigma}) = 1$ for every state of nature $\theta_\sigma \in \Theta$, while $\phi_{\theta_0} (\svec^\pi) = 1$.
	Notice that the signaling scheme $\phi$ is deterministic, since each state of nature is mapped to only one signal profile (with probability one).
	As a first step, we prove that the signaling scheme $\phi$ is \emph{persuasive}.
	Let us fix a receiver $r_v \in \rec$.
	After receiving a signal $s = \{ k_\sigma \} \in \sset_{r_v}$ with $\sigma \in \Sigma : \sigma \neq \pi(v)$, by definition of $\phi$, the receiver's posterior belief is such that state of nature $\theta_\sigma$ is assigned probability one.
	Thus, if the receiver has type $k_\sigma$, then she/he is incentivized to play action $a_1$, since $u^{r_v,k_\sigma}_{\theta_\sigma}=\frac{1}{2} > 0$ (recall that $u^{r_v,k_\sigma}_{\theta_\sigma}$ is the utility different ``action $a_1$ minus action $a_0$'' when the state is $\theta_\sigma$).
	Instead, if the receiver has type $k \in \K_{r_v} : k \neq k_\sigma$, then she/he is incentivized to play action $a_0$, since either $k = k_0$ and $u^{r_v,k_0}_{\theta_\sigma}=-1 < 0$ or $k = k_{\sigma'}$ with $\sigma' \in \Sigma: \sigma' \neq \sigma$ and $u^{r_v,k_{\sigma'}}_{\theta_\sigma}=-1 < 0$.
	After receiving a signal $s = \{ k_{\sigma} \} \in \sset_{r_v}$ with $\sigma = \pi(v)$, the receiver's posterior belief is such that the states of nature $\theta_\sigma$ and $\theta_0$ are assigned probabilities proportional to their corresponding prior probabilities, respectively $\mu_{\theta_\sigma}$ and $\mu_{\theta_0}$ (she/he cannot tell whether $\svec^\sigma$ or $\svec^\pi$ has been selected by the sender).
	Thus, if the receiver has type $k_{\sigma}$, then she/he is incentivized to play action $a_1$, since her expected utility difference ``action $a_1$ minus action $a_0$'' is the following:
	\[
		\frac{\mu_{\theta_\sigma}}{\mu_{\theta_\sigma} + \mu_{\theta_0}} u^{r_v,k_\sigma}_{\theta_\sigma}  + \frac{\mu_{\theta_0}}{\mu_{\theta_\sigma} + \mu_{\theta_0}} u^{r_v,k_\sigma}_{ \theta_0}  = \frac{1}{\mu_{\theta_\sigma} + \mu_{\theta_0}} \left[  \frac{\epsilon}{|\Sigma|^2}\frac{1}{2}-\left( 1-\frac{\epsilon}{|\Sigma|} \right) \frac{\epsilon}{2|\Sigma|^2} \right] >\frac{1}{\mu_{\theta_\sigma} + \mu_{\theta_0}} \left[ \frac{\epsilon}{2|\Sigma|^2} -\frac{\epsilon}{2|\Sigma|^2} \right] =0. 
	\]
	If the receiver has a type different from $k_\sigma$, simple arguments show that the expected utility difference is negative, incentivizing action $a_0$.
	This proves that the signaling scheme $\phi$ is persuasive.
	Next, we bound the sender's expected utility in $\phi$.
	Notice that, when the state of nature is $\theta_0$, if the receivers' type profile is $\kvec^{uv, \sigma} \in K$ with $\sigma = \pi(u)$ for some edge $e = (u,v) \in E$, then both receivers $r_u$ and $r_v$ play action $a_1$.
	This is readily proved since $k^{uv, \sigma}_{r_u} = k_\sigma$ and $k^{uv, \sigma}_{r_v}  = k_{\sigma'}$ with $\sigma = \pi(u)$ and $\sigma' = \pi(v)$ (recall that $\pi(v) = \Pi_e(u)$ as $\phi$ satisfies all the edge constraints), and, thus, both $r_u$ and $r_v$ are recommended to play $a_1$ when the state is $\theta_0$.
	As a result, under signaling scheme $\phi$, when the receivers' type profile is $\kvec^{uv, \sigma} \in K$, then the sender's resulting payoff is one (recall the definition of functions $f_\theta$).
	By recalling that each type profile $\kvec^{uv, \sigma} \in K$ with $\sigma = \pi(u)$ (for each edge $e = (u,v) \in E$) occurs with probability $\lambda_{\kvec^{uv, \sigma}} = \frac{1}{|E||\Sigma|}$, we can lower bound the sender's expected utility (see the objective of Problem~\eqref{LP1}) as follows:
	\[
		\sum_{\kvec \in K} \lambda_{\kvec} \sum_{\theta\in\Theta} \mu_\theta \sum_{\svec \in \sset}\phi_\theta(\svec) f_\theta \left( R_{\svec}^{\kvec} \right) \geq \mu_{\theta_0} \sum_{\kvec^{uv, \sigma} \in K : \sigma = \pi(u)} \lambda_{\kvec^{uv, \sigma}} = \mu_{\theta_0} \frac{1}{|\Sigma|} = \left( 1-\frac{\epsilon}{|\Sigma|} \right) \frac{1}{|\Sigma|}.
	\]

	\paragraph{Soundness}
	By contradiction, suppose that there exists a direct and persuasive signaling scheme $\phi: \Theta \to \Delta_\sset$ that provides the sender with an expected utility greater than $\frac{2 \epsilon}{|\Sigma|}$.
	Since the sender can extract an expected utility at most of $\frac{\epsilon}{|\Sigma|}$ from states of nature $\theta \in \Theta$ with $\theta \neq \theta_0$ (as $\sum_{\theta \in \Theta : \theta \neq \theta_0} \mu_\theta = \frac{\epsilon}{|\Sigma|}$ and the maximum value of functions $f_\theta$ is one), then it must be the case that the expected utility contribution due to state $\theta_0$ is greater than $\frac{\epsilon}{|\Sigma|}$.
	Let us consider the distribution over signal profiles $\phi_{\theta_0} \in \Delta_\sset$ induced by state of nature $\theta_0$.
	We prove that, for each signal profile $\svec \in \sset$ such that $\phi_{\theta_0}(\svec) > 0$ and each receiver $r_v \in \rec$, it must hold that $|s_r| \leq 1$, \emph{i.e.}, at most one type of receiver $r_v$ is recommended to play $a_1$.
	First, notice that a receiver of type $k_0$ cannot be incentivized to play $a_1$, since $u^{r_v,k_0}_{\theta}=-1$ for all $\theta \in \Theta$.
	By contradiction, suppose that there are two receiver's types $k_{\sigma}, k_{\sigma'} \in \K_{r_v}$ with $k_{\sigma} \neq k_{\sigma'} $ such that $k_{\sigma}, k_{\sigma'}  \in s_r$ (\emph{i.e.}, they are both recommended to play $a_1$).
	By letting $\boldsymbol{\xi} \in \Delta_\Theta$ be the posterior belief of receiver $r_v$ induced by $s_r$, for type $k_\sigma$ it must be the case that:
	\[
		\xi_{\theta_\sigma} u^{r_v,k_\sigma}_{\theta_\sigma}  + \sum_ {\theta_{\sigma''} \in \Theta : \theta_{\sigma''} \neq \theta_{\sigma}  }  \xi_{\theta_{\sigma''}} u^{r_v,k_\sigma}_{\theta_{\sigma''}} + \xi_{\theta_0} u^{r_v,k_\sigma}_{\theta_0} = \frac{1}{2} \xi_{\theta_\sigma} - \sum_ {\theta_{\sigma''} \in \Theta : \theta_{\sigma''} \neq \theta_{\sigma}  }  \xi_{\theta_{\sigma''}} - \frac{\epsilon}{2 |\Sigma|^2} \xi_{\theta_0} > 0,
	\]
	since the signaling scheme is persuasive, and, thus, a receiver of type $k_\sigma$ must be incentivized to play action $a_1$.
	This implies that $\xi_{\theta_\sigma} > 2 \sum_ {\theta_{\sigma''} \in \Theta : \theta_{\sigma''} \neq \theta_{\sigma}  }   \xi_{\theta_{\sigma''}} \ge 2 \xi_{\theta_{\sigma'}}$.
	Analogous arguments for type $k_{\sigma'}$ imply that $\xi_{\theta_{\sigma'}} > 2 \xi_{\theta_{\sigma}}$, reaching a contradiction.
	This shows that, for each $\svec \in \sset$ such that $\phi_{\theta_0}(\svec) > 0$ and each $r_v \in \rec$, it must be the case that $|s_r| \leq 1$.
	Next, we provide the last contradiction proving the result.
	Let us recall that, by assumption, the sender's expected utility contribution due to $\theta_0$ is $\sum_{\kvec \in K} \lambda_{\kvec} \sum_{\svec \in \sset}\phi_{\theta_0}(\svec) f_{\theta_0} \left( R_{\svec}^{\kvec} \right) \geq \frac{\epsilon}{|\Sigma|}$.
	By an averaging argument, this implies that there must exist a signal profile $\svec \in \sset$ such that $\phi_{\theta_0}(\svec) > 0$ and $\sum_{\kvec \in K} \lambda_{\kvec}  f_{\theta_0} \left( R_{\svec}^{\kvec} \right) \geq \frac{\epsilon}{|\Sigma|}$.
	Let $\svec \in \sset$ be such signal profile.
	Let us define a corresponding labeling $\pi: U \cup V \to \Sigma$ of the graph $G$ such that, for every vertex $v \in U \cup V$, it holds $\pi(v) = \sigma$, where $\sigma \in \Sigma$ is the label corresponding to the unique type $k_\sigma$ of receiver $r_v$ that is recommended to play action $a_1$ under $\svec$ (if any, otherwise any label is fine). 
	Since $\sum_{\kvec \in K} \lambda_{\kvec}  f_{\theta_0} \left( R_{\svec}^{\kvec} \right) \geq \frac{\epsilon}{|\Sigma|}$ and it holds $\lambda_{\kvec} = \frac{1}{|E||\Sigma|}$ and $f_{\theta_0} \left( R_{\svec}^{\kvec} \right) \in \{0,1\}$ for every $\kvec \in K$, it must be the case that there are at least $\epsilon |E|$ type profiles $\kvec \in K$ such that $f_{\theta_0} \left( R_{\svec}^{\kvec} \right) = 1$.
	Since a receiver of type $k_0$ cannot be incentivized to play action $a_1$, the value of $f_{\theta_0} \left( R_{\svec}^{\kvec} \right) $ can be one only if there are at least two receivers with types different from $k_0$ that play action $a_1$.
	Thus, it must hold that $f_{\theta_0} \left( R_{\svec}^{\kvec} \right) = 0$ for all the type profiles $\kvec^{uv, \sigma} \in K$ such that $ \sigma \neq \pi(u)$ (as $k^{uv, \sigma}_{r_u}$ would be equal to $k_\sigma$ with $\sigma \neq \pi(u)$ and $k_\sigma \notin s_{r_u}$).
	For the type profiles $\kvec^{uv, \sigma} \in K$ such that $ \sigma = \pi(u)$ (one per edge $e = (u,v) \in E$ of the graph $G$), the value of $f_{\theta_0} \left( R_{\svec}^{\kvec} \right) $ is one if and only if $\pi(v) = \Pi_e(u)$, so that both receivers $r_u$ and $r_v$ are told to play action $a_1$.
	As a result, this implies that there must be at least $\epsilon |E|$ edges $e \in E$ for which the labeling $\pi$ satisfies the corresponding edge constraint $\Pi_e$, which is a contradiction.
\end{proof}

\section{Proofs Omitted from Section~\ref{sec:learning_oracle}}

\thmregretogd*

\begin{proof}
	First, we bound the per-iteration running time of Algorithm~\ref{alg:OGD}.
	For any $t \in [T]$, we have $E^{t} = \bigcup_{t' \in [t]} e^{t'}$, which represents the set of feedbacks observed up to iteration $t$.
	Thus, it holds $|E^{t}| \leq t$.
	At iteration $t \in [T]$, the algorithm works with vectors $\xvec^t$ and $\yvec^{t+1}$.
	The first one belongs to $\X_{E^{t-1}}$ (as it is returned by $\varphi_\alpha$ at iteration $t-1$), and, thus, it has at most $t-1$ non-zero components.
	Similarly, since $\yvec^{t+1} = \xvec^t+\eta \ind_{e^t} $, it holds that $\yvec^{t+1} \in [0,2]^p$ and $y^{t+1}_e = 0$ for all $e \notin E^{t}$, which implies that $\yvec^{t+1}$ has at most $t$ non-zero components.
	As a result, we can sparsely represent vectors $\xvec^t$ and $\yvec^{t+1}$ so that Algorithm~\ref{alg:OGD} has a per-iteration running time bounded by $t$ for any iteration $t \in [T]$, independently of the actual size $p$ of the vectors.
	Moreover, notice that $\yvec^{t+1}$ satisfies the conditions required by the inputs of the oracle $\varphi_\alpha$.
	%
	%

	Next, we bound the $\alpha$-regret of Algorithm~\ref{alg:OGD}.
	For the ease of notation, in the following, for any vector $\xvec \in \X$ and subset $E \subseteq \E$, we let $\xvec_{E} \defeq \tau_{E}(\xvec)$.
	Moreover, for any $t \in [T]$, we let $\mathbb{I}_t \defeq \mathbb{I} \left\{ e^{t}\notin E^{t-1} \right\}$, which is the indicator function that is equal to $1$ if and only if $e^{t}\notin E^{t-1}$, \emph{i.e.}, when the feedback $e^t$ at iteration $t$ has never been observed before.
	Fix $\xvec \in \alpha \X$.
	Then, the following relations hold:
	\begin{subequations}
		\begin{align}
		\big\vert\big\vert \xvec_{E^{t}}-\xvec^{t+1} \big\vert\big\vert^2 & \le \big\vert\big\vert \xvec_{E^t}-\yvec^{t+1}\vert\vert^2+ \epsilon \\
		& = \big\vert\big\vert \xvec_{E^t}-\xvec^{t} -\eta \ind_{e^t} \big\vert\big\vert^2+\epsilon \label{eq:x_1}\\
		& = \big\vert\big\vert \xvec_{E^{t-1}} +\mathbb{I}_t \, x_{e^t} \ind_{e^t} - \xvec^{t} -\eta \ind_{e_t} \big\vert\big\vert^2+\epsilon \\
		& = \big\vert\big\vert \xvec_{E^{t-1}} +\mathbb{I}_t \, x_{e^t} \ind_{e^t}  -\xvec^{t} \big\vert\big\vert^2 + \eta^2 - 2 \eta \ind_{e^t}^\top  \Big( \xvec_{E^{t-1}} + \mathbb{I}_t \, x_{e^t} \ind_{e^t} -\xvec^{t} \Big) +\epsilon \label{eq:x_2} \\
		& = \big\vert\big\vert \xvec_{E^{t-1}} -\xvec^t_{E^{t-1}} \big\vert\big\vert^2 + \mathbb{I}_t \, \big| x_{e^{t}}-x^t_{e^t} \big|^2+\eta^2 - 2 \eta \ind_{e^t}^\top  \Big( \xvec_{E^{t-1}} + \mathbb{I}_t \, x_{e^t} \ind_{e^t} -\xvec^{t} \Big) +\epsilon  \label{eq:x_3}\\
		& \leq \big\vert\big\vert \xvec_{E^{t-1}}-\xvec^t_{E^{t-1}} \big\vert\big\vert^2 + \mathbb{I}_t+\eta^2 - 2 \eta \ind_{e^t}^\top  \Big( \xvec_{E^{t-1}} + \mathbb{I}_t \, x_{e^t} \ind_{e^t} -\xvec^{t} \Big) +\epsilon.
		\end{align}
	\end{subequations}
	Notice that Equation~\eqref{eq:x_1} holds by definition of $\varphi_\alpha$ since $\xvec_{E^{t}} \in \alpha \X_{E^t}$, Equation~\eqref{eq:x_2} follows from $\xvec_{E^{t}}  = \xvec_{E^{t-1}} +\mathbb{I}_t \, x_{e^t} \ind_{e^t} $, while Equation~\eqref{eq:x_3} can be derived by decomposing the first squared norm in the preceding expression.
	By using the last relation above, we can write the following:
	\begin{subequations}
		\begin{align}
		\sum_{t \in [T]} \ind_{e^t} ^\top \Big( \xvec - \xvec^{t} \Big) & = \sum_{t \in [T]} \ind_{e^t}^\top \Big( \xvec_{E^{t-1}} + \mathbb{I}_t \, x_{e^t} \ind_{e^t} -\xvec^{t} \Big) \\
		& \le \frac{1}{2\eta}\sum_{t \in [T]}\Bigg(  \big\vert\big\vert \xvec_{E^{t-1}}-\xvec^t_{E^{t-1}} \big\vert\big\vert^2 - \big\vert\big\vert \xvec_{E^{t}}-\xvec^{t+1} \big\vert\big\vert^2 + \mathbb{I}_t +\eta^2 + \epsilon \Bigg) \\
		&= \frac{1}{2\eta}\sum_{t \in [T]} \Big( \mathbb{I}_t +\eta^2+\epsilon \Big)  \label{eq:x_4}\\
		&= \frac{1}{2\eta} \Big( | E^T |+T\eta^2+T\epsilon \Big), 
		\end{align}
	\end{subequations}
	where Equation~\eqref{eq:x_4} is obtained by telescoping the sum.
	Then, the following concludes the proof:
	\begin{align*}
		R^T_\alpha & \defeq \alpha \max_{y \in \Y} \sum_{t \in [T]} u(y,e^t) - \sum_{t \in [T]} u(y^t,e^t) \leq  \alpha \max_{\xvec \in \X}  \sum_{t \in [T]}  x_{e^t} - \sum_{t \in [T]} x^t_{e^t} = \alpha \max_{\xvec \in \X}  \sum_{t \in [T]}  \ind_{e^t}^\top \left( \xvec - \xvec^{t} \right)  \\
		& =  \max_{\xvec \in \alpha\X}  \sum_{t \in [T]} \ind_{e^t}^\top \left( \xvec - \xvec^{t} \right)  \leq \frac{1}{2\eta} \Big( | E^T |+T\eta^2+T\epsilon \Big).
	\end{align*}
\end{proof}

\section{Proofs Omitted from Section~\ref{sec:offline}}
\thmOffline*
\begin{proof}[Proof of~\cref{th:bayesian}]
The dual problem of LP~\eqref{LP1} reads as follows:
\begin{subequations} \label{LP:dual}
	\begin{align}\min_{\zvec,\dvec}  & \quad \sum_{\theta\in\Theta}d_\theta\\
	\textnormal{s.t. } & \mu_\theta \sum_{r \in \rec} \sum_{k\in s_r}u^{r,k}_\theta z_{r,s_r,k}+d_\theta \ge  \mu_\theta \sum_{\kvec \in  \K} \lambda_{\kvec} f_\theta(R_{\svec}^{\kvec}) & \forall \theta \in \Theta, \forall \svec \in \Scal \label{DUAL:incentive} \\
	&z_{r,s,k}\le 0 &\forall r \in \rec, \forall s \in \Scal_r, \forall k \in \K_r:k\in s,
	\end{align} 
\end{subequations}
where $\dvec\in\mathbb{R}^{|\Theta|}$ is the vector of dual variable corresponding to the primal Constraints~\eqref{LP1:constr_simplex}, and $\zvec\in\mathbb{R}_{-}^{|\rec\times\Scal_r\times \K_r|}$ is the vector of dual variable corresponding to Constraints~\eqref{LP1:cons_pers} in the primal.
We rewrite the dual LP~\eqref{LP:dual} so as to highlight the relation between an approximate separation oracle for Constraints~\eqref{DUAL:incentive} and the oracle $\Ocal_{\alpha}$. Specifically, we have
\begin{subequations} \label{LP:dual2}
	\begin{align}
	\min_{\zvec\ge 0,\dvec} & \quad \sum_{\theta\in\Theta}d_\theta\\
	\textnormal{s.t. } & d_\theta \ge \mu_\theta \mleft(\sum_{\kvec \in \K} \lambda_{\kvec} f_\theta(R_{\svec}^{\kvec}) +  \sum_{r \in \rec} \sum_{k \in s_r} u^{r,k}_\theta z_{r,s_r,k}\mright) & \forall \theta \in \Theta,\forall \svec \in \Scal. \label{eq:dual2const}
	\end{align} 
\end{subequations}

Now, we show that it is possible to build a binary search scheme to find a value $\gamma^\star\in [0,1]$ such that the dual problem with objective $\gamma^\star$ is feasible, while the dual with objective $\gamma^\star-\beta$ is infeasible. The constant $\beta \geq 0$ will be specified later in the proof. The algorithm requires $\log(\beta)$ steps and works by determining, for a given value $\bar\gamma \in [0,1]$, whether there exists a feasible pair $(\dvec,\zvec)$ for the following feasibility problem \circled{F}:
\begin{equation*}
\circled{F} \, \, \mleft\{\hspace{-1.25mm}\begin{array}{l}
\displaystyle
\sum_{\theta\in\Theta}d_\theta\le\bar\gamma\\
\displaystyle d_\theta \ge \mu_\theta \mleft(\sum_{\kvec \in \K} \lambda_{\kvec} f_\theta(R_{\svec}^{\kvec}) +  \sum_{r \in \rec} \sum_{k \in s_r} u^{r,k}_\theta z_{r,s_r,k}\mright)\hspace{1cm}\forall \theta \in \Theta,\forall \svec \in \Scal\\
\zvec\ge 0.
\end{array}\mright.
\end{equation*}
At each iteration of the bisection algorithm, the feasibility problem \circled{F} is solved via the ellipsoid method.
The algorithm is inizialized with $l=0$, $h=1$, and $\bar\gamma=\frac{1}{2}$.
If \circled{F} is infeasible for $\bar\gamma$, the algorithm sets $l \gets (l+h)/2$ and $\bar\gamma\gets (h+\bar\gamma)/2$. Otherwise, if \circled{F} is (approximately) feasible, it sets $h\gets (l+h)/2$ and $\bar \gamma\gets (l+\bar\gamma)/2$. Then, the procedure is repeated with the updated value of $\bar\gamma$. The bisection procedure terminates when it determines a value $\gamma^\star$ such that \circled{F} is feasible for $\bar\gamma=\gamma^\star$, while it is infeasible for $\bar\gamma=\gamma^\star-\beta$. 
In the following, we present the approximate separation oracle which is employed at each iteration of the ellipsoid method.

\paragraph{Separation Oracle}
Given a point $(\bar\dvec,\bar\zvec)$ in the dual space, and $\bar\gamma \in[0,1]$, we design an approximate separation oracle to determine if the point $(\bar\dvec,\bar\zvec)$ is approximately feasible, or to determine a constraint of \circled{F} that is violated by such point.
For each $\theta\in\Theta$, $r\in\rec$, and $s\in\Scal_r$, let
\[
w^\theta_{r,s}\defeq\mu_\theta \sum_{k \in s} u^{r,k}_\theta \bar z_{r,s,k}.
\]
When the magnitude of the weights $|w^\theta_{r,s}|$ is small, we show that it is enough to employ the optimization oracle $\Ocal_\alpha$ in order to find a violated constraint, or to certify that all the constraints are approximately satisfied. On the other hand, when the weights $|w^\theta_{r,s}|$ are large (in particular, when the largest weight has exponential size in the size of the problem instance), the optimization oracle $\Ocal_\alpha$ loses its polynomial time guarantees (see Definition~\ref{def:oracle}). We show how to handle those specific settings in the following case analysis:

\begin{itemize}

\item Equation~\eqref{eq:dual2const} implies that $d_\theta\ge 0$ for each $\theta \in \Theta$. Then, if there exists a $\theta \in \Theta$ such that $\bar d_{\theta}<0$, we return the violated constraint $(\theta, \varnothing)$ (that is, $d_\theta \ge 0$).

\item If there exists $\theta \in \Theta$ such that $\bar d_\theta > 1$, then the first constraint of \circled{F} must be violated as $\bar\gamma\in[0,1]$.

\item If there exists a receiver $r \in \rec$ and a signal $s\in\Scal_r$ such that $w^\theta_{r,s}>1$, then the constraint of \circled{F} corresponding to the pair $(\theta,s)$ is violated, because $d_\theta\le 1$.

\item  If no violated constraint was found in the previous steps, we proceed by checking if there exists a state $\theta' \in \Theta$, a receiver $r' \in \rec$, and a signal $s'\in \Scal_r$, such that $w^{\theta'}_{r',s'}\le - |\rec|$. If this is the case, we observe that for any pair $(\theta',\svec)$, with $\svec\in\Scal : s_r=s'$, the corresponding constraint in \circled{F} reads
\[
\mu_\theta \sum_{\kvec \in \K} \lambda_{\kvec} f_\theta(R^{\kvec}_{\svec}) +  \sum_{r \in \rec\setminus \{r'\}} w^{\theta'}_{r,s_r} +w^{\theta'}_{r',s'} \le 0,
\] 
since $\bar \dvec\ge 0$ if the current step is reached. For $w^{\theta'}_{r',s'}\le - |\rec|$ the above constraints are trivially satisfied, and therefore we can safely manage (for the current iteration of the ellipsoid method) any such constraint by setting $w^{\theta'}_{r',s'}= - |\rec|$.

\end{itemize}

If none of the previous steps returned a violated constraint, we can safely assume that $0\le d_\theta \le 1$ and $-|\rec|\le w^\theta_{r,s} \le 1$, for each $\theta\in\Theta$, $r\in\rec$, and $s\in\Scal_r$. Moreover, we observe that, by definition, for each $r \in \rec$ and $\theta \in \Theta$, it holds $w^\theta_{r,\varnothing}=0$.
Since the magnitude of the weights is guaranteed to be small (that is, weights are guaranteed to be in the range $[-|\rec|,1]$), for each $\theta \in \Theta$ we can invoke $\Ocal_{\alpha}(\theta,\K,\lambda,\wvec^\theta,\delta)$ to determine an $\svec^\theta \in \sset$ such that 
\[
\mu_\theta \sum_{\kvec \in \K} \lambda_{\kvec} f_\theta(R^{\kvec}_{\svec^\theta}) + \sum_{r \in \rec} w^\theta_{r,s^\theta_r} \ge  \max_{\svec\in \Scal} \mleft\{ \alpha \mu_\theta \sum_{\kvec\in\K} \lambda_{\kvec} f_\theta(R_{\svec}^{\kvec}) + \sum_{r \in \rec}  w^\theta_{r,s_r}\mright\} -\delta,
\]
where $\delta$ is an approximation error that will be defined in the following.
If at least one $\svec^\theta$ is such that $(\theta,\svec^\theta)$ is violated, we output that constraint, otherwise the algorithm returns that the LP is feasible.

\paragraph{Putting It All Together}
The bisection algorithm computes a $\gamma^\star\in [0,1]$ and a pair $(\bar \dvec, \bar\zvec)$ such that the approximate separation oracle does not find a violated constraint.
The following lemma defines a modified LP and shows that $(\bar \dvec, \bar\zvec)$ is a feasible solution for this problem and has value at most $\gamma^\star$.

\begin{lemma} \label{lm:violated}
	The pair $(\bar \dvec, \bar\zvec)$ is a feasible solution to the following LP and has value at most $\gamma^\star$:
	\begin{subequations} \label{LP:dual3}
		\begin{align*}
			\min_{\zvec\ge 0,\dvec} & \quad  \sum_{\theta\in\Theta}d_\theta \\
			\textnormal{s.t. } & d_\theta \ge \, \alpha \mu_\theta \sum_{\kvec \in \K} \lambda_{\kvec} f_\theta(R_{\svec}^{\kvec}) + \mu_\theta \sum_{r \in \rec} \sum_{k \in s_r} u^{r,k}_\theta z_{r,s_r,k}\ -\delta & \forall \theta \in \Theta,\forall \svec \in \Scal.
		\end{align*} 
	\end{subequations}
\end{lemma}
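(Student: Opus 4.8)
The plan is to check separately that $(\bar\dvec,\bar\zvec)$ is feasible for LP~\eqref{LP:dual3} and that its objective $\sum_{\theta}\bar d_\theta$ is at most $\gamma^\star$, using only two facts about $(\bar\dvec,\bar\zvec)$: it is the point the ellipsoid method returns on \circled{F} with $\bar\gamma=\gamma^\star$, and on it the approximate separation oracle found no violated constraint. The value bound is immediate, since $\sum_{\theta}d_\theta\le\bar\gamma$ is literally the first constraint of \circled{F}, which the oracle verifies directly; with $\bar\gamma=\gamma^\star$ not flagged, $\sum_{\theta}\bar d_\theta\le\gamma^\star$. From the fact that no constraint was flagged we also record, walking through the bullet-point case analysis, that $\bar\zvec\ge 0$, that $0\le\bar d_\theta\le 1$ for all $\theta\in\Theta$, and that the weights $w^\theta_{r,s}$ handed to the oracle after clamping (those with $w^\theta_{r,s}\le -|\rec|$ being reset to $-|\rec|$) lie in $[-|\rec|,1]$ with $w^\theta_{r,\varnothing}=0$. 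Let $\bar w^\theta_{r,s}\defeq\mu_\theta\sum_{k\in s}u^{r,k}_\theta\bar z_{r,s,k}$ denote the original (unclamped) weights, so that $w^\theta_{r,s}\ge\bar w^\theta_{r,s}$ for all $\theta,r,s$ (clamping can only raise a weight), and note that the right-hand side of the $(\theta,\svec)$ constraint of LP~\eqref{LP:dual3} evaluated at $\bar\zvec$ equals $\alpha\mu_\theta\sum_{\kvec}\lambda_{\kvec}f_\theta(R^{\kvec}_{\svec})+\sum_r\bar w^\theta_{r,s_r}-\delta$.

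For feasibility, fix $\theta\in\Theta$ and $\svec\in\Scal$. Since the oracle reported feasibility, none of the earlier bullet checks fired, so it reached the stage where it invokes $\Ocal_\alpha$, producing for this $\theta$ some $\svec^\theta=\Ocal_\alpha(\theta,\K,\lambda,\wvec^\theta,\delta)$ for which the \circled{F}-constraint $(\theta,\svec^\theta)$ is not violated. Evaluating the $\Ocal_\alpha$ guarantee at our $\svec$ gives
\[
\mu_\theta\sum_{\kvec}\lambda_{\kvec}f_\theta(R^{\kvec}_{\svec^\theta})+\sum_r w^\theta_{r,s^\theta_r}\ \ge\ \alpha\mu_\theta\sum_{\kvec}\lambda_{\kvec}f_\theta(R^{\kvec}_{\svec})+\sum_r w^\theta_{r,s_r}-\delta ,
\]
so, using $w^\theta_{r,s_r}\ge\bar w^\theta_{r,s_r}$, it suffices to prove that $\bar d_\theta\ge\mu_\theta\sum_{\kvec}\lambda_{\kvec}f_\theta(R^{\kvec}_{\svec^\theta})+\sum_r w^\theta_{r,s^\theta_r}$. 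If no coordinate of $\svec^\theta$ had its weight clamped, then $w^\theta_{r,s^\theta_r}=\bar w^\theta_{r,s^\theta_r}$ for every $r$ and this is exactly the non-violated \circled{F}-constraint $(\theta,\svec^\theta)$; otherwise one term equals $-|\rec|$, the other $\le|\rec|-1$ weights are each $\le 1$, and $\mu_\theta\sum_{\kvec}\lambda_{\kvec}f_\theta(R^{\kvec}_{\svec^\theta})\le\mu_\theta\le 1$, so the right-hand side is $\le 1+\bigl(-|\rec|+(|\rec|-1)\bigr)=0\le\bar d_\theta$. Chaining the three inequalities yields the $(\theta,\svec)$ constraint of LP~\eqref{LP:dual3}; as $\theta,\svec$ were arbitrary and $\bar\zvec\ge 0$, feasibility follows and the lemma is proved.

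The step I expect to be the main obstacle is keeping the interplay between original and clamped weights straight. Two opposing monotonicities are in play: clamping raises weights, hence it weakens the right-hand sides of the LP~\eqref{LP:dual3} constraints, so passing from $\bar w$ to $w$ there is free; but one must separately verify that the clamping is safe for \circled{F}, i.e.\ that every constraint $(\theta,\svec)$ touching a clamped coordinate is trivially satisfied by $(\bar\dvec,\bar\zvec)$ — a single $-|\rec|$ term already dominates the $\le|\rec|-1$ remaining weights (each $\le 1$) together with $\mu_\theta\sum_{\kvec}\lambda_{\kvec}f_\theta(\cdot)\le 1$ — so that the oracle's verdict of feasibility on the clamped instance genuinely certifies feasibility of the original \circled{F}, and a fortiori of LP~\eqref{LP:dual3}. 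Everything else is a routine application of the $\Ocal_\alpha$ guarantee together with the elementary bounds $\bar d_\theta\in[0,1]$, $w^\theta_{r,s}\le1$, $f_\theta\le1$, and $\sum_{\kvec}\lambda_{\kvec}=1$.
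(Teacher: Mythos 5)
Your proof is correct and rests on the same core idea as the paper's argument — namely, that since the separation oracle found no violated constraint for $(\bar\dvec,\bar\zvec)$ at objective $\gamma^\star$, the oracle's approximation guarantee (Definition~\ref{def:oracle}) forces every constraint of the modified LP~\eqref{LP:dual3} to hold. The paper phrases this as a proof by contradiction (a violated constraint of LP~\eqref{LP:dual3} would have given $\Ocal_\alpha$ a witness exceeding $\bar d_\theta$), whereas you argue directly, which is a purely stylistic difference. Where you go beyond the paper is in explicitly tracking the effect of the weight clamping: the paper's one-line invocation of the oracle guarantee is written in terms of the unclamped weights $\bar z$, silently relying on the earlier case analysis to dismiss clamped coordinates, while you make both monotonicities precise — clamping only weakens the right-hand side of LP~\eqref{LP:dual3}, and any $\svec^\theta$ the oracle returns that touches a clamped coordinate already has right-hand side $\le 0 \le \bar d_\theta$ by the $-|\rec|$-versus-$(|\rec|-1)$ count. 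This added care closes a small gap that a strict reading of the paper's proof leaves open, and is a genuine improvement in rigor.
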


\begin{proof}
The value is at most $\gamma^\star$ by assumption (that is, the separation oracle does not find a violated constraint for $(\bar \dvec, \bar\zvec)$ in \circled{F} with objective $\gamma^\star$).
Analogously, it holds that $\bar d_\theta \in [0,1]$ for each $\theta \in \Theta$, and $w^\theta_{r,s}\le1$ for each $r \in \rec$, $s \in \sset_r$, and $\theta \in \Theta$.
Suppose, by contradiction, that $(\theta,\svec')$ is a violated constraint of the modified LP above. Then, given $\bar\dvec$, oracle $\Ocal_\alpha$ would have found an $\svec \in \sset$ such that 
\[ 
\mu_\theta \sum_{\kvec \in \K} \lambda_{\kvec} f(R_{\svec}^{\kvec}) + \mu_\theta \sum_{r \in \rec} \sum_{k \in s_r} u^{r,k}_\theta \bar z_{r,s_r,k} \ge \alpha \sum_{\theta\in\Theta} \,\, \mu_\theta \sum_{\kvec \in \K} \lambda_{\kvec} f_\theta(R_{\svec'}^{\kvec}) + \mu_\theta \sum_{r \in \rec} \sum_{k \in s'_r} u^{r,k}_\theta \bar z_{r,s'_r,k} -\delta > \bar d_\theta,
\] 
where the first inequality follows by Definition~\ref{def:oracle}, and the second from the assumption that the modified dual is infeasible. Hence, $\mathcal{O_{\alpha}}$ would return a violated constraint, reaching a contradiction.
\end{proof}

The dual problem of the LP of Lemma~\ref{lm:violated} reads as follows:
\begin{subequations}\label{LP2}
	\begin{align*}  
	\max_\phi & \sum_{\svec \in \Scal}\ \sum_{\theta\in\Theta} \phi_\theta(\svec) \mleft(\alpha \mu_\theta \sum_{\kvec \in \K} \lambda_{\kvec}  \, f_\theta(R_{\svec}^{\kvec})-\delta\mright) \\
	\textnormal{s.t. } & \sum_{\theta\in\Theta}\mu_\theta \sum_{\svec: s_r=s'} \,\phi_\theta(\svec) \, u^{r,k}_\theta \geq 0 & \forall r \in R, \forall s' \in \Scal_r, \forall k \in \K_r: k\in s' \\
	&\sum_{\svec \in\Scal}\phi_{\theta}(\svec)= 1 & \forall \theta\in\Theta\\
	&\phi_\theta(\svec)\ge 0 & \forall \theta \in \Theta, \svec \in \Scal.
	\end{align*} 
\end{subequations}
By strong duality, Lemma \ref{lm:violated} implies that the value of the above problem is at most $\gamma^\star$.
Then, let $\opt$ be value of the optimal solution to LP~\eqref{LP1}. The same solution is feasible for the LP we just described, where it has value 
\begin{equation}\label{alfaopt}
\alpha \opt-|\Theta| \delta \le \gamma^\star.
\end{equation}

Now, we show how to find a solution to the original problem (LP~\eqref{LP1}) with value at least $\gamma^\star-\beta$.
Let $\Hcal$ be the set of constraints returned by the ellipsoid method run on the feasibility problem \circled{F} with objective $\gamma^\star-\beta$. 

\begin{lemma}\label{lp1restricted}
	 LP~\eqref{LP1} with variables restricted to those corresponding to dual constraints $\Hcal$ returns a signaling scheme with value at least $\gamma^\star-\beta$. Moreover, the solution can be determined in polynomial time.
\end{lemma}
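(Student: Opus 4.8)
The plan is to convert the ``infeasible'' verdict produced by the ellipsoid method on \circled{F} with objective $\gamma^\star-\beta$ into a statement about the optimal value of a restricted version of LP~\eqref{LP1}, using LP duality. Recall that the ellipsoid method on \circled{F} terminates after polynomially many iterations, each producing a constant number of cutting planes, so $\Hcal$ is a polynomially-sized set of constraints; moreover each such cut is a \emph{valid} inequality for \circled{F}, since the approximate separation oracle only ever reports genuine constraints $(\theta,\svec^\theta)$ of \circled{F}, while the bookkeeping steps ($\bar d_\theta<0$, $\bar d_\theta>1$, the large-weight cases $w^\theta_{r,s}>1$ and $w^\theta_{r,s}\le-|\rec|$) likewise only report inequalities that hold at every point of \circled{F}. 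Consequently, the system consisting of $\sum_{\theta} d_\theta\le\gamma^\star-\beta$, $\zvec\ge0$, and all constraints in $\Hcal$ is infeasible.

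First I would pass to the LP obtained from LP~\eqref{LP:dual2} by keeping only the constraints indexed by pairs $(\theta,\svec)\in\Hcal$,
\[
\min_{\zvec\ge0,\dvec}\ \sum_{\theta\in\Theta}d_\theta\quad\text{s.t.}\quad d_\theta\ge\mu_\theta\Big(\sum_{\kvec\in\K}\lambda_{\kvec}f_\theta(R^{\kvec}_{\svec})+\sum_{r\in\rec}\sum_{k\in s_r}u^{r,k}_\theta z_{r,s_r,k}\Big)\quad\forall(\theta,\svec)\in\Hcal.
\]
Since adding the budget constraint $\sum_\theta d_\theta\le\gamma^\star-\beta$ to this system makes it infeasible, the optimal value of the restricted LP above is strictly larger than $\gamma^\star-\beta$. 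Next I would observe that the dual of this restricted LP is exactly LP~\eqref{LP1} in which only the primal variables $\phi_\theta(\svec)$ with $(\theta,\svec)\in\Hcal$ are retained (each constraint of \circled{F} is indexed by a pair $(\theta,\svec)$, which is precisely the index of a primal variable); to guarantee that the simplex constraints~\eqref{LP1:constr_simplex} remain satisfiable, I would additionally keep, for every $\theta\in\Theta$, the variable associated with the trivial signal profile recommending $a_0$ to all receivers, which has $f_\theta(\varnothing)=0$ and therefore does not affect the optimal value. By strong LP duality, this restricted version of LP~\eqref{LP1} is feasible (e.g.\ the all-trivial signaling scheme), bounded, and has optimal value equal to that of the restricted dual, hence at least $\gamma^\star-\beta$; its optimal solution is a signaling scheme supported on the retained signal profiles with sender's utility at least $\gamma^\star-\beta$.

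For the running-time claim I would simply note that $|\Hcal|$ is polynomial in the instance size (polynomially many ellipsoid iterations, each invoking the separation oracle, which runs in polynomial time via $\Ocal_\alpha$ once the case analysis has confined all weights to $[-|\rec|,1]$), so the restricted version of LP~\eqref{LP1} has polynomially many variables; the persuasiveness constraints~\eqref{LP1:cons_pers} and simplex constraints~\eqref{LP1:constr_simplex} are already polynomially many in LP~\eqref{LP1}, hence the restricted LP has polynomial size and can be solved exactly in polynomial time by any standard LP algorithm.

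The step I expect to be the main obstacle is the first one: making rigorous that the ellipsoid method's ``infeasible'' output, obtained while using only the approximate separation oracle, really yields the clean conclusion that the LP restricted to $\Hcal$ has value exceeding $\gamma^\star-\beta$. This requires checking that (i) every reported cut is a valid inequality for \circled{F}, so that no feasible point is ever discarded, and (ii) the weight-capping steps are compatible with the bounding-box/volume arguments that justify the ellipsoid's termination. Once these points are established, the duality argument and the bound on $|\Hcal|$ are routine.
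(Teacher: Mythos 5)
Your proof is correct and follows the same route as the paper's: the infeasibility of \circled{F} at objective $\gamma^\star-\beta$ lower-bounds the dual LP restricted to the constraints in $\Hcal$, and strong LP duality transfers this bound to the restricted primal, which is LP~\eqref{LP1} with variables indexed by $\Hcal$; polynomiality follows from the polynomial size of $\Hcal$ guaranteed by the ellipsoid method. Your extra care with the all-$\varnothing$ signal-profile variables is a useful safety net the paper leaves implicit, though in fact if $\Hcal$ contained no constraint for some state $\theta$ the restricted dual would be unbounded below, contradicting the infeasibility certificate, so the situation cannot actually arise.
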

\begin{proof}
By construction of the bisection algorithm, \circled{F} is infeasible for value $\gamma^\star-\beta$.
Hence, the following LP has value at least $\gamma^\star-\beta$:
\begin{subequations}
	\begin{align*}
	\min_{\zvec\ge 0,\dvec} & \quad \sum_{\theta\in\Theta}d_\theta\\
	\textnormal{s.t. } & \,\,d_\theta \ge \mu_\theta \mleft(\sum_{\kvec \in \K} \lambda_{\kvec} f_\theta(R_{\svec}^{\kvec}) +  \sum_{r \in \rec} \sum_{k \in s_r} u^{r,k}_\theta z_{r,s_r,k}\mright)& \forall(\theta,\svec) \in \Hcal.
	\end{align*} 
\end{subequations}
%
Notice that the primal of the above LP is exactly LP \eqref{LP1} with variables restricted to those corresponding to dual constraints in $\Hcal$, and that the former (restricted) LP has value at least $\gamma^\star- \beta$ by strong duality. To conclude the proof, the ellipsoid method guarantees that $\Hcal$ is of polynomial size. Hence, the LP can be solved in polynomial time.
\end{proof}

Let $\apx$ be the value of an optimal solution to LP~\eqref{LP1} restricted to variables corresponding to dual constraints in $\Hcal$. Then, 
\begin{align*}
\apx & \ge \gamma^\star - \beta \\
& \ge \alpha \opt - |\Theta|\delta -\beta\\
& \ge \alpha \opt - \epsilon,
\end{align*}
where the first inequality holds by Lemma~\ref{lp1restricted}, the second inequality follows from Equation~\eqref{alfaopt}, and the last inequality is obtained by setting $\delta=\frac{\epsilon}{2|\Theta|}$ and $\beta=\frac{\epsilon}{2}$.
%
\end{proof}

\section{Proofs Omitted from Section~\ref{sec:sep_to_proj}}

\thmProjConstruction*

\begin{proof}

The problem of computing the projection of point $\yvec$ on $\X_{K}$ (see Equation~\eqref{eq:def_x}) can be formulated via the following convex programming problem, which we denote by \circled{P}:
\begin{equation*}
\circled{P}\mleft\{ \,\, \begin{array}{l}
\displaystyle \min_{\phi, \xvec} \hspace{.1cm} \sum_{\kvec \in K} (x_{\kvec}-y_{\kvec})^2\\[3mm]
\displaystyle \textnormal{s.t. }\hspace{.2cm} \sum_{\theta\in\Theta}\mu_\theta \mleft(\sum_{\substack{\svec\in\Scal:\\ s_r=s'}} \,\phi_\theta(\svec) u^{r,k}_\theta\mright) \geq 0  \hspace{1cm} \forall r \in \rec,\forall s'\in \sset_r ,\forall k \in \K_r:k\in s'\\[7mm]
\displaystyle \hspace{.8cm} \sum_{\svec\in  \Scal}\phi_\theta(\svec)= 1 \hspace{3.4cm} \forall\theta\in\Theta \\[5mm]
\displaystyle \hspace{.8cm} \phi_{\theta}(\svec)\ge 0 \hspace{4cm} \forall \theta \in \Theta, \forall \svec \in \Scal\\[2mm]
\displaystyle \hspace{.8cm} x_{\kvec} \le  \sum_{\theta \in \Theta} \sum_{\svec \in \Scal}\,\mu_\theta\,\phi_\theta(\svec) f_\theta(R_{\svec}^{\kvec}) \hspace{1.1cm} \forall \kvec \in K.
\end{array}\mright. .
\end{equation*}

Then, we compute the Lagrangian of~\circled{P} by introducing dual variables $z_{r,s,k} \le 0$ for each $r\in\rec,s\in\Scal_r,$ and $k\in s$, $d_\theta \in \mathbb{R}$ for each $\theta\in\Theta$,  $v_{\theta,\svec}\le 0$  for each $\theta\in\Theta$, $\svec\in\Scal$, and $\nu_{\kvec} \ge 0$ for each $\kvec \in K$. Specifically, the Lagrangian of \circled{P} reads as follows
\begin{equation*}
\begin{array}{l}
\displaystyle 	L(\phi,\xvec,\zvec, \vvec, \bb{\nu}, \dvec)\defeq \sum_{\kvec \in K} (x_{\kvec} -y_{\kvec})^2 +
\sum_{ r \in \rec}\sum_{s' \in \sset_r}\sum_{k\in s'} z_{r,s,k}\mleft( \sum_{\theta\in\Theta}\mu_\theta \sum_{\svec: s_r=s'} \,\phi_\theta(\svec) \,u^{r,k}_\theta \mright)\\[6mm]
\displaystyle\hspace{4.3cm}+\sum_{\theta\in\Theta, \svec \in \Scal} v_{\theta,\svec} \phi_\theta(\svec)+ \sum_{\theta \in \Theta} d_\theta \mleft(\sum_{\svec\in  \Scal}\phi_\theta(\svec)-1\mright)\\[6mm]
\displaystyle\hspace{4.3cm}+ \sum_{\kvec \in K} \nu_{\kvec} \mleft(x_{\kvec}- \sum_{\theta\in\Theta,\svec \in \Scal}   \mu_\theta \phi_\theta(\svec) f_\theta(R_{\svec}^{\kvec})\mright).
\end{array}
\end{equation*}
We observe that Slater's condition holds for \circled{P} (all constraints are linear, and by setting $\xvec=0$ any signaling scheme $\phi$ constitutes a feasible solution). Therefore, by strong duality, an optimal dual solution must satisfy the KKT conditions. In particular, in order for stationarity to hold, it must be $\boldsymbol{0}\in\partial_{\phi_\theta(\svec)}(L)$ for each $\svec$ and $\theta$. Then, for each $\theta\in\Theta$ and $\svec\in\Scal$, we have
\begin{equation*}
	\partial_{\phi_\theta(\svec)}(L)=\sum_{ r \in \rec }\sum_{k\in s_r} \mu_\theta z_{r,s_r,k}  u^{r,k}_\theta+  v_{\theta,\svec}+
	d_\theta -\sum_{\kvec \in K}  \nu_{\kvec} \mu_\theta  f_\theta(R_{\svec}^{\kvec})=0.
\end{equation*}
Then, for each $\theta \in \Theta$ and $\svec \in \sset$, we obtain
\begin{equation}\label{constr1}
	\sum_{ r \in \rec}\sum_{k\in s_r } \mu_\theta  z_{r,s_r,k} u^{r,k}_\theta+	d_\theta -\sum_{\kvec \in K}  \nu_{\kvec} \mu_\theta  f_\theta(R_{\svec}^{\kvec}))\ge0.
\end{equation}
Moreover, stationarity has to hold with respect to variables $\xvec$. Formally, for each $\kvec\in K$, 
\[
	\partial_{x_{\kvec}}(L) = 2(x_{\kvec} -y_{\kvec}) \nu_{\kvec} =0.
\]
Therefore, for each $\kvec\in K$,
\begin{equation}\label{constr2}
	 x_{\kvec} = y_{\kvec} -\frac{\nu_{\kvec}}{2}.
\end{equation}

By Equations~\eqref{constr1} and~\eqref{constr2}, we obtain the following dual quadratic program
\begin{equation*}
\circled{D} \,\, \mleft\{\hspace{-1.25mm}\begin{array}{l}
\displaystyle \max_{\zvec,\vvec,\bb{\nu},\dvec}  \,\,\sum_{\kvec \in K} \mleft(\nu_{\kvec} y_{\kvec}-\frac{\nu_{\kvec}^2}{4}\mright)	-\sum_{\theta\in\Theta} d_\theta \\[5mm]
\textnormal{ s.t. }  \displaystyle \hspace{.5cm} d_\theta \ge \sum_{\kvec \in K}  \nu_{\kvec} \mu_\theta  f_\theta(R_{\svec}^{\kvec})+ \sum_{ r \in \rec}\sum_{k\in s_r} \mu_\theta z_{r,s_r,k} u^{r,k}_\theta \hspace{1cm} \forall \theta \in \Theta, \forall \svec \in \Scal\\[5mm]
\displaystyle \hspace{1.1cm} z_{r,s,k} \ge 0 \hspace{6.5cm} \forall r \in \rec, \forall s \in \sset_r, \forall k \in \mathcal{K}_r: k \in s \\[3mm]
\displaystyle \hspace{1.15cm} \nu_{\kvec} \ge 0 \hspace{6.85cm} \forall \kvec \in  K,
\end{array}\mright.
\end{equation*} 
in which the objective function is obtained by observing that each term $\phi_\theta(\svec)$ in the definition of $L$ is multiplied by $\partial_{\phi_\theta(\svec)}(L)$, which has to be equal to zero by stationarity.
Similarly to what we did in the proof of Theorem~\ref{th:bayesian}, we repeatedly apply the ellipsoid method equipped with an approximate separation oracle to problem \circled{D}.
In this case, the analysis is more involved than what happens in Theorem~\ref{th:bayesian}, because we are interested in computing an approximate projection on $\alpha \X_{K}$ rather than an approximate solution of \circled{P}.
We proceed by casting \circled{D} as a feasibility problem with a certain objective (analogously to \circled{F} in Theorem~\ref{th:bayesian}). 
In particular, given objective $\gamma\in[0,1]$, the objective function of \circled{D} becomes the following constraint in the feasibility problem
\begin{equation}\label{eq:feasibility_proj}
\sum_{\kvec \in K} \mleft(\nu_{\kvec} y_{\kvec}-\frac{\nu_{\kvec}^2}{4}\mright)	-\sum_{\theta\in\Theta} d_\theta\ge \gamma.
\end{equation}
Then, given an approximation oracle $\Ocal_\alpha$ which will be specified later, we apply to the feasibility problem the search algorithm described in Algorithm~\ref{alg:projection}.
\begin{algorithm}[H]\caption{\textsc{Search Algorithm}}
	\textbf{Input:} Error $\epsilon$, $\yvec \in \mathbb{R}_+^{|\K|}$, subspace $K \subseteq \K$.
	\begin{algorithmic}[1]
		\STATE \textbf{Initialization}: $\beta \gets \frac{\epsilon}{2}$, $\delta \gets \frac{\epsilon}{2|\Theta|}$, $\gamma \gets |K|+\beta$, and $\Hcal\gets\varnothing$.
		\REPEAT
		\STATE $\gamma \gets \gamma-\beta$ 
		\STATE $\Hcal_{\textsc{Unf}}\gets\Hcal$
		\STATE $\Hcal\gets\{\text{\normalfont violated constraints returned by the ellipsoid method on } \circled{D} \,\, \text{\normalfont with objective } \gamma  \,\, \text{\normalfont and constraints } \Hcal_{\textsc{Unf}} \}$ 
		\UNTIL{\circled{D} is feasible with objective $\gamma$ (see Equation~\eqref{eq:feasibility_proj})}
		\STATE return $\Hcal_{\textsc{Unf}}$
	\end{algorithmic}	\label{alg:projection}
\end{algorithm}
At each iteration of the main loop, given an objective value $\gamma$, Algorithm~\ref{alg:projection} checks whether the problem \circled{D} is approximately feasible or unfeasible, by applying the ellipsoid algorithm with separation oracle $\Ocal_\alpha$. Let $\mathcal{H}$ be the set of constraints returned by the separation oracle (the separating hyperplanes due to the linear inequalities). At each iteration, the ellipsoid method is applied on the problem with explicit constraints in the current set $\Hcal_{\textsc{Unf}}$ (that is, each constraint in $\Hcal_{\textsc{Unf}}$ is explicitly checked for feasibility), while the other constraints are checked  through the approximate separation oracle. Algorithm~\ref{alg:projection} returns the set of violated constraints $\Hcal_{\textsc{Unf}}$ corresponding to the last value of $\gamma$ for which the problem was unfeasible.
Now, we describe how to implement the approximate separation oracle employed in Algorithm~\ref{alg:projection}. Then, we conclude the proof by showing how to build an approximate projection starting from the set $\Hcal_{\textsc{Unf}}$ computed as we just described.

\paragraph{Approximate Separation Oracle}
Let $(\bar \zvec, \bar \vvec, \bb{\bar \nu}, \bar \dvec)$ be a point in the space of dual variables. 
Then, let, for each $\theta\in\Theta$, $r\in \rec$, and $s\in\Scal_r$,  
\[
w^\theta_{r,s}\defeq \sum_{k\in s } \bar z_{r,s,k} \mu_\theta u^{r,k}_\theta.
\]

First, we can check in polynomial time if one of the constraint in $\Hcal$ is violated. If at least one of those constraint is violated, we output that constraint.
Moreover, if the constraint corresponding to the objective is violated, we can output a separation hyperplane in polynomial time since the constraint has a polynomial number of variables.
Then, by following the same rationale of the proof of Theorem~\ref{th:bayesian} (offline setting), we proceed with a case analysis in which we ensure it is possible to output a violated constraint when $|\nu_{\kvec}|$ or $|w^\theta_{r,s}|$ are too large to guarantee polynomial-time sovability by Definition~\ref{def:oracle}. 

\begin{itemize}
\item First, it has to hold $d_\theta \in [0,4 |K|]$ for each $\theta \in \Theta$.
Indeed, if $d_\theta<0$, then the constraint relative to $(\theta,\varnothing)$ would be violated. 
Otherwise, suppose that  there exists a $\theta$ with $\bar d_\theta > 4 |K|$. 
Two cases are possible: (i) the constraint corresponding to the objective is violated, which allows us to output a separation hyperplane; (ii) it holds 
\[
\sum_{\kvec \in K} \mleft(\bar \nu_{\kvec} y_{\kvec} - \frac{\bar \nu_{\kvec}^2}{4}\mright) > 4 |K|,
\] 
which implies that there exists a $\kvec \in K$ such that $\bar \nu_{\kvec} y_{\kvec}-\bar \nu_{\kvec}^2/4>4$. However, we reach a contradiction since, by assumption, $y_{\kvec}\le 2$ for each $\kvec \in K$, and therefore it must hold $\bar \nu_{\kvec} y_{\kvec}-\bar \nu_{\kvec}^2/4\le 2\bar\nu_{\kvec}-\bar\nu_{\kvec}^2/4\le 4$. 

\item Second, we show how to determine a violated constraint when  $\bar \nu_{\kvec} \notin [0,|K|+10]$.
Specifically, if there exists a $\kvec \in K$ for which $ \bar \nu_{\kvec} < 0$, then the objective is negative, and we can return a separation hyperplane (corresponding to Equation~\eqref{eq:feasibility_proj}). If there exists a $ \nu_{\kvec}>|K|+10$, then 
\begin{align*}
	\sum_{\kvec' \in K}  \mleft(\bar \nu_{\kvec'} y_{\kvec'}- \frac{\bar \nu_{\kvec'}^2}{4}\mright) & 
	\le	 2\nu_{\kvec} - \frac{\bar \nu_{\kvec}^2}{4}+\sum_{\kvec' \in K \setminus \{\kvec\}}  \mleft(2\bar \nu_{\kvec'}- \frac{\bar \nu_{\kvec'}^2}{4}\mright) \\
	& \le 2|K| + 20-\frac{|K|^2}{4}-5|K|-25+4|K|\\
	& =-\frac{|K|^2}{4}+|K|-5\\
	&< 0,
\end{align*}
where the first inequality follows by the assumption that $y_{\kvec}\le 2$ for each $\kvec \in K$, and the second inequality follows from the fact that  $2\nu_{\kvec} - \bar \nu_{\kvec}^2/4$ has its maximum in $\bar\nu_{\kvec}=4$ and, when $\bar\nu_{\kvec}\ge|K|+10$, the maximum is at $\bar\nu_{\kvec}=|K|+10$ since the function in concave.
Hence, we obtain that Constraint~\eqref{eq:feasibility_proj} is violated.

\item Finally, suppose that there exists a $\theta \in \Theta$, $r \in \rec$, $s \in \sset_r$ such that $w^\theta_{r,s} > 4 |K| $. Then, the constraint corresponding to $(\theta,s)$ is violated (because $d_\theta\le 4|K|$, otherwise we would have already determined a violated constraint in the first case of our analysis).
If, instead, there exists a $\theta \in \Theta$, $r \in \rec$, $s \in \sset_r$ such that $w^\theta_{r,s} < -4 |K| |\rec|-10$, then, for all the inequalities $(\theta,\svec')$ with $s'_r=s$, it holds $\bar d_\theta\ge 0$ and 
\[
\mu_\theta \sum_{\kvec \in K} \bar\nu_{\kvec} f_\theta(R_{\svec}^{\kvec}) +  \sum_{r' \in \rec\setminus \{r\}} w^\theta_{r',s'_{r'}} +w^\theta_{r,s'_r} \le 0.
\]
In this last case, all the inequalities corresponding to $(\theta,\svec')$ with $s'_r=s$ are guaranteed to be satisfied. Then, we can safely manage all the inequalities comprising of $w^\theta_{r,s}\le-4 |K| |\rec|-10$ by setting $w^\theta_{r,s}=-4 |K| |\rec|-10$.
\end{itemize}
After the previous steps, it is guaranteed that $|w^\theta_{r,s}| \le 4 |K| |\rec|+10$ for each $\theta,r,s$, and $\nu_{\kvec} \in [0,|K| +10]$ for each $\kvec$. 
Hence, we can employ an oracle $\Ocal_\alpha$ with $|w^\theta_{r,s}|$ and $\lambda^\theta_{\kvec}=\nu_{\kvec} \mu_\theta$, which is guaranteed to be polynomial in the size of the instance by Definition~\ref{def:oracle}. Let $\delta$ be an error parameter which will be defined in the remainder of the proof. For each $\theta \in \Theta$, we call the oracle $\Ocal_{\alpha}(\theta,K,\{\nu_{\kvec}\}_{\kvec \in K},\wvec^\theta,\delta)$. Each query to the oracle returns an $\svec^\theta$. If at least one of the constraints corresponding to a pair $(\theta,\svec^\theta)$ is violated, we output that constraint. Otherwise, if for each $\theta \in \Theta$ the constraint $(\theta,\svec^\theta)$  is satisfied, we conclude that the point is in the feasible region.

 \paragraph{Putting It All Together}
 Algorithm~\ref{alg:projection} terminates at objective $\gamma^\star$. It is easy to see that the algorithm terminates in polynomial time because it must return \emph{feasible} when $\gamma=0$.
 Our proof proceeds in two steps. 
 First, we prove that a particular problem obtained from \circled{P} has value at least $\gamma^\star$. Then, we prove that the solution of \circled{P} with only variables in $\Hcal_{\textsc{Unf}}$ has value close to $\gamma^\star$.
 Finally, we show that the two solutions are, respectively, the projection and an approximate projection on a set that includes $\alpha \X_{K}$. This will complete the proof.

If  the algorithm terminates at objective $\gamma^*$, the following convex optimization problem is feasible (see Theorem~\ref{th:bayesian}).\footnote{In the following, we will refer to the proof of Theorem~\ref{th:bayesian} when the steps of the two proofs are analogous.}
\begin{equation*}
\mleft\{\hspace{-1.25mm}\begin{array}{l}	 
\displaystyle \sum_{\kvec \in K} \mleft(\nu_{\kvec} y_{\kvec}- \nu_{\kvec}^2/4\mright) 	-\sum_{\theta\in\Theta} d_\theta \ge \gamma^\star \\[5mm]
\displaystyle d_\theta \ge \sum_{\kvec \in K} \nu_{\kvec} \mu_\theta  f_\theta(R_{\svec}^{\kvec})- \sum_{ r \in \rec,k\in s_r } z_{r,s_r,k} \,\mu_\theta u^{r,k}_\theta \hspace{1.5cm} \forall (\theta,\svec) \in \Hcal_{\textsc{Unf}}\\[5mm]
\displaystyle d_\theta \ge \sum_{\kvec \in K}  \alpha \nu_{\kvec} \mu_\theta  f_\theta(R_{\svec}^{\kvec})- \sum_{ r \in \rec,k\in s_r } z_{r,s_r,k}\, \mu_\theta u^{r,k}_\theta -\delta \hspace{.7cm}  \forall (\theta,\svec) \notin \Hcal_{\textsc{Unf}}.
\end{array}\mright.
\end{equation*} 
By strong duality, the following convex optimization problem has value at least $\gamma^\star$ 
\begin{equation*}
\circled{Pf}\mleft\{\hspace{-1.25mm}\begin{array}{l}	 
\displaystyle \min_{\phi,\xvec}\hspace{.2cm} \sum_{\kvec\in K} (x_{\kvec} -y_{\kvec})^2+ \delta \hspace{-.2cm}\sum_{(\theta,\svec) \notin \Hcal_{\textsc{Unf}}} \phi_{\theta}(\svec)\\ [6mm]
\textnormal{s.t. } \hspace{.5cm}\displaystyle \sum_{\theta\in\Theta}\mu_\theta \mleft(\sum_{\svec': s'_r=s} \,\phi_\theta(\svec') u^{r,k}_\theta\mright) \geq 0  \hspace{0.95cm} \forall r \in \rec,\forall s\in \sset_r ,\forall k \in \K_r:k\in s\\ [6mm]
\displaystyle \hspace{1cm}\sum_{\svec \in \sset}\phi_\theta(\svec)= 1 \hspace{3.8cm}\forall\theta\in\Theta \\ [5mm]
\displaystyle \hspace{1cm}\phi_{\theta}(\svec)\ge 0 \hspace{4.4cm}\forall \theta \in \Theta, \forall \svec \in \sset\\ [5mm]
\displaystyle \hspace{1cm}x_{\kvec} \le \sum_{\theta \in \Theta}\mleft( \sum_{\svec:(\theta,\svec) \in \Hcal_{\textsc{Unf}}}\hspace{-.2cm}\mu_\theta\,\phi_\theta(\svec) f_\theta(R_{\svec}^{\kvec})+ \alpha \hspace{-.3cm}\sum_{\svec:(\theta,\svec) \notin \Hcal_{\textsc{Unf}}}\hspace{-.3cm}\mu_\theta\,\phi_\theta(\svec) f_\theta(R_{\svec}^{\kvec})\mright)\hspace{.3cm}  \forall \kvec \in K.
\end{array}\mright.
\end{equation*} 
Moreover, since the algorithm did not terminate at value $\gamma^\star+\beta$, problem \circled{D} with value $\gamma^\star+\beta$ is unfeasible when restricting the set of constraints to $\Hcal_{\textsc{Unf}}$.
The primal problem \circled{P} restricted to primal variables corresponding to dual constraints in $\Hcal_{\textsc{Unf}}$ reads as follows
\begin{equation*}
\mleft\{\hspace{-1.25mm}\begin{array}{l}
\displaystyle \min_{\phi, \xvec} \hspace{.1cm} \sum_{\kvec \in K} (x_{\kvec}-y_{\kvec})^2\\[3mm]
\displaystyle \textnormal{s.t. }\hspace{.3cm} \sum_{\theta\in\Theta}\mu_\theta \mleft(\sum_{\substack{\svec:(\theta,\svec)\in \Hcal_{\textsc{Unf}},\\ s_r=s'}} \,\phi_\theta(\svec) u^{r,k}_\theta\mright) \geq 0  \hspace{1cm} \forall r \in \rec, s'\in \sset_r ,\forall k \in \K_r:k\in s'\\[8mm]
\displaystyle \hspace{.8cm} \sum_{\svec:(\theta,\svec)\in \Hcal_{\textsc{Unf}}}\phi_\theta(\svec)= 1 \hspace{3.4cm} \forall\theta\in\Theta \\[6mm]
\displaystyle \hspace{.8cm} \phi_{\theta}(\svec)\ge 0 \hspace{5.1cm} \forall (\theta,\svec)\in \Hcal_{\textsc{Unf}}\\[3mm]
\displaystyle \hspace{.8cm} x_{\kvec} \le  \sum_{\theta \in \Theta} \sum_{\svec:(\theta,\svec)\in \Hcal_{\textsc{Unf}}}\,\mu_\theta\,\phi_\theta(\svec) f_\theta(R_{\svec}^{\kvec}) \hspace{1.1cm} \forall \kvec \in K.
\end{array}\mright. 
\end{equation*}
By strong duality, the above problem has value at most $\gamma^\star+\beta$. Moreover, it has a polynomial number of variables and constraints because the ellipsoid method returns a set of constraints $\Hcal_{\textsc{Unf}}$ of polynomial size. Therefore, the above problem can be solved in polynomial time. 

%
A solution to the above problem is a feasible signaling scheme.
Let $(\xvec^\epsilon,\phi)$ be its solution. We have that $\xvec^\epsilon\in \bar \X_{K}$, with 
\[
\bar \X_{K}=\mleft\{\xvec: x_{\kvec}\le \sum_{\theta \in \Theta}\mleft( \sum_{\svec:(\theta,\svec) \in \Hcal_{\textsc{Unf}}}\hspace{-.2cm}\mu_\theta\,\phi_\theta(\svec) f_\theta(R_{\svec}^{\kvec})+ \alpha \hspace{-.3cm}\sum_{\svec:(\theta,\svec) \notin \Hcal_{\textsc{Unf}}}\hspace{-.3cm}\mu_\theta\,\phi_\theta(\svec) f_\theta(R_{\svec}^{\kvec})\mright)\hspace{.3cm}  \forall \kvec \in K, \phi \in \Phi\mright\}.
\]
It holds $\alpha \X_{K} \subseteq \bar \X_{K}$.
Now, we show that $\xvec^\epsilon$ is \emph{close} to $\xvec^\star$, where $\xvec^\star$ is the projection of $\yvec$ on $\bar \X_{K}$ (that is the solution of \circled{Pf} with $\delta=0$).
Since $x^\star$ is a feasible solution of \circled{Pf} and the minimum of  \circled{Pf} is at least $\gamma^\star$, it holds $||\xvec^\star-\yvec||^2+ \delta |\Theta|\ge \gamma^\star$.
Then, 
\begin{subequations}
	\begin{align*}
	||\xvec^\star-\yvec||^2+\delta|\Theta| +\beta & \ge
   \gamma^\star +\beta  \\	
	 &\ge ||\xvec^\epsilon-\yvec||^2 \\
	&= ||\xvec^\epsilon-\xvec^\star+\xvec^\star-\yvec||^2\\
	& =||\xvec^\epsilon-\xvec^\star||^2+||\xvec^\star-\yvec||^2 +2\langle	\xvec^\epsilon-\xvec^\star,\xvec^\star-\yvec\rangle\\ 
	 &\ge ||\xvec^\epsilon-\xvec^\star||^2 +||\xvec^\star-\yvec||^2,
	\end{align*}
	\end{subequations}
where the last inequality follows from $\langle	\xvec^\epsilon-\xvec^\star,\xvec^\star-\yvec\rangle\ge 0$, because $\xvec^\star$ is the projection of $\yvec$ on $\bar \X_{K}$ and $\xvec^\epsilon \in \bar \X_{K}$.
Hence, $||\xvec^\epsilon-\xvec^\star||^2 \le \delta |\Theta|+\beta$.
Finally, let $\xvec$ be a point in $\alpha \X_{K}$.
Then, 
\begin{align*}
||\xvec^\epsilon-\xvec||^2 & \le  ||\xvec^\epsilon-\xvec^\star||^2+ ||\xvec^\star-\xvec||^2\\
& \le ||\xvec^\epsilon-\xvec^\star||^2+  ||\yvec-\xvec||^2\\
& \le ||\yvec-\xvec||^2+\delta |\Theta|+\beta,
\end{align*}
 where the second inequality follow from the fact that $\xvec^\star$ is the projection of $\yvec$ on a superset of $\alpha \X_{K}$.
Setting $\delta=\frac{\epsilon}{2|\Theta|}$ and $\beta= \frac{\epsilon}{2}$ concludes the proof.
\end{proof}

\section{Proofs Omitted from Section~\ref{sec:submodular}}

In this section, we provide the complete proof of Theorem~\ref{thm:submodular}.

Firs, we introduce some preliminary, known results concerning the optimization over matroids.
Given a \emph{non-decreasing submodular} set function $f: 2^\G \to \mathbb{R}_+$ and a \emph{linear} set function $\ell: 2^\G \ni I \mapsto \sum_{i \in I} w_i$ defined for finite ground set $\G$ and weights $\wvec = (w_i)_{i \in \G}$ with $w_i \in \mathbb{R}$ for each $i \in \G$, let us consider the problem of maximizing the sum $f(I) + \ell(I)$ over the bases $I \in \B(\M)$ of a given matroid $\M \defeq (\G, \Ical)$.
We make use of a theorem due to~\citet{sviridenko2017optimal}, which, by letting $v_f \defeq \max_{I\in 2^\G} f(I)$, $v_\ell \defeq \max_{I\in 2^\G} | \ell(I)|$, and $v \defeq \max \{v_f , v_\ell \}$, reads as follows:
\begin{theorem}[Essentially Theorem~3.1 by~\citet{sviridenko2017optimal}]\label{thm:submodular}
	For every $\epsilon > 0$, there exists an algorithm running in time $\textnormal{poly}\left( |\G|,\frac{1}{\epsilon} \right)$ that produces a basis $I \in \B(\M)$ satisfying $f ( I ) + \ell ( I ) \ge \left( 1 - \frac{1}{e}  \right) f ( I' ) + \ell ( I' ) - O ( \epsilon ) v$ for every $I' \in \B(\M)$ with high probability.
\end{theorem}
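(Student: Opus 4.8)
The plan is to prove the statement via the standard \emph{continuous-greedy plus rounding} machinery, arranged so that the $\left(1-\frac1e\right)$ loss is incurred only on the submodular part $f$ and never on the linear part $\ell$. Let $F:[0,1]^{\G}\to\mathbb{R}_+$ be the multilinear extension of $f$ and let $L(\xvec)\defeq\sum_{i\in\G}w_i x_i$ be the (linear) extension of $\ell$; recall that $F$ is non-decreasing in every coordinate and concave along every non-negative direction, and that the base polytope $\co\{\mathbf{1}_I:I\in\B(\M)\}$ admits polynomial-time linear optimization (via the matroid greedy algorithm). Fix $I^\star\in\argmax_{I\in\B(\M)}\left[\left(1-\frac1e\right)f(I)+\ell(I)\right]$; since the right-hand side in the theorem is largest for $I'=I^\star$, it suffices to output a basis $I$ with $f(I)+\ell(I)\ge\left(1-\frac1e\right)f(I^\star)+\ell(I^\star)-O(\epsilon)v$.

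First I would \emph{guess} the value $\ell(I^\star)$: since $|\ell(I^\star)|\le v_\ell\le v$, among the $O(1/\epsilon)$ candidates $q\in\{-v,-v+\epsilon v,\dots,v\}$ there is one with $q\le\ell(I^\star)\le q+\epsilon v$, and the algorithm runs the following procedure for every candidate $q$ and returns the basis of maximum $f+\ell$ value. For a fixed $q$, consider the solvable polytope $P_q\defeq\co\{\mathbf{1}_I:I\in\B(\M)\}\cap\{\xvec:L(\xvec)\ge q\}$ (linear optimization over $P_q$ is a linear program, hence polynomial-time), on which $\mathbf{1}_{I^\star}$ is feasible whenever $q\le\ell(I^\star)$. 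I would then run the continuous greedy algorithm for $F$ over $P_q$, with a $\mathsf{poly}(|\G|,1/\epsilon)$-step time discretization and each gradient $\nabla F$ estimated by $\mathsf{poly}(|\G|,1/\epsilon)$ random samples; the standard analysis yields, with high probability, a point $\xvec^\diamond\in P_q$ with $F(\xvec^\diamond)\ge\left(1-\frac1e\right)f(I^\star)-O(\epsilon)v$, because at each step the chosen direction $\yvec(t)\in P_q$ satisfies $\langle\nabla F(\xvec(t)),\yvec(t)\rangle\ge\langle\nabla F(\xvec(t)),\mathbf{1}_{I^\star}\rangle\ge F(\xvec(t)\vee\mathbf{1}_{I^\star})-F(\xvec(t))\ge f(I^\star)-F(\xvec(t))$ by monotonicity. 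Moreover $\xvec^\diamond$ lies in the base polytope and $L(\xvec^\diamond)\ge q\ge\ell(I^\star)-\epsilon v$, so $F(\xvec^\diamond)+L(\xvec^\diamond)\ge\left(1-\frac1e\right)f(I^\star)+\ell(I^\star)-O(\epsilon)v$.

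It remains to round $\xvec^\diamond$, which lies in the \emph{base} polytope of $\M$, to an integral basis. Applying randomized swap rounding (or deterministic pipage rounding) over the base polytope produces $I\in\B(\M)$ with $\Expec[\mathbf{1}_I]=\xvec^\diamond$, so $\ell(I)$ has mean $L(\xvec^\diamond)$ and $\Expec[f(I)]\ge F(\xvec^\diamond)$; Chernoff/Hoeffding-type concentration under swap rounding, for both the submodular and the linear objective, then gives with high probability $f(I)\ge F(\xvec^\diamond)-O(\epsilon)v$ and $\ell(I)\ge L(\xvec^\diamond)-O(\epsilon)v$. Combining with the previous paragraph and taking a union bound over the $O(1/\epsilon)$ guesses, the returned basis satisfies $f(I)+\ell(I)\ge\left(1-\frac1e\right)f(I^\star)+\ell(I^\star)-O(\epsilon)v\ge\left(1-\frac1e\right)f(I')+\ell(I')-O(\epsilon)v$ for every $I'\in\B(\M)$ with high probability, and the overall running time is $\mathsf{poly}(|\G|,1/\epsilon)$.

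The step I expect to be most delicate is the error accounting for the continuous greedy over the restricted polytope $P_q$: one must verify that the time discretization, the Monte-Carlo estimation of the multilinear gradients, and the fact that $F$ ranges in $[0,v_f]$ together contribute only $O(\epsilon)v$, and that continuous greedy's guarantee degrades gracefully when the internal linear optimization is carried out over $P_q$ rather than over the whole base polytope (this is fine, since the analysis only uses that $\mathbf{1}_{I^\star}\in P_q$). A secondary technical point is invoking the appropriate concentration inequality for the possibly-negative linear function $\ell$ under swap rounding, so that its deviation is also $O(\epsilon)v$; alternatively, using deterministic pipage rounding on the base polytope removes this issue and leaves gradient sampling as the only source of randomness, which still matches the ``with high probability'' claim.
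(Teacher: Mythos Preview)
The paper does not prove this statement at all; it is quoted as ``Essentially Theorem~3.1 by~\citet{sviridenko2017optimal}'' and invoked as a black box in the proof of Theorem~\ref{thm:oracle_submodular}. There is therefore no in-paper argument to compare your proposal against.

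That said, your sketch is a faithful outline of the methodology behind the cited result: guess $\ell(I^\star)$ to additive accuracy $\epsilon v$, run continuous greedy for $F$ over the base polytope intersected with the halfspace $\{L(\xvec)\ge q\}$ so that the $(1-\tfrac1e)$ loss lands only on $f$, and then round within the matroid base polytope via swap (or pipage) rounding. One small inaccuracy: the claim that ``linear optimization over $P_q$ is a linear program, hence polynomial-time'' is not literally correct, since the base polytope has exponentially many facets; what you actually need (and have) is a polynomial-time separation oracle for the base polytope, which extends to $P_q$ by adding the single constraint $L(\xvec)\ge q$, and then optimization follows from the equivalence of separation and optimization. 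Your own closing caveats already flag the two genuinely delicate points---controlling the cumulative $O(\epsilon)v$ error from time discretization and gradient sampling in continuous greedy, and obtaining additive $O(\epsilon)v$ concentration for the signed linear part under swap rounding (handled by splitting into positive and negative weights and using the negative-correlation Chernoff bounds). With those details filled in, the plan is correct and matches how the cited reference proceeds.
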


Next, we provide the complete proof of Theorem~\ref{thm:submodular}.

\thmsubmodular*

\begin{proof}
	We show how to implement an approximation oracle $\Ocal_{\alpha}(\theta,K,\lambdavec,\wvec,\epsilon)$ (see Definition~\ref{def:oracle}) running in time $\text{\normalfont poly}\left( n,|K|,\max_{r,s} |w_{r,s}|,\max_{\kvec} \lambda_{\kvec} , \frac{1}{\epsilon} \right)$ for $\alpha = 1 - \frac{1}{e}$.
	Let $\M_\sset \defeq (\G_\sset, \Ical_\sset)$ be a matroid defined as in Section~\ref{sec:submodular} for direct signal profiles $\sset$.
	Let us recall that, given the relation between the bases of $\M_\sset$ and direct signals, each direct signal profiles $\svec \in \sset$ corresponds to a basis $I \in \B(\M_\sset)$, which is defined as $I \defeq \left\{ (r,s_r) \mid r \in \rec \right\}$.
	In the following, given a subset $I \subseteq \G_\sset$ and a type profile $\kvec \in K$, we let $R_I^{\kvec} \subseteq \rec$ be the set of receivers $r \in \rec$ such that there exits a pair $(r,s) \in I$ (for some signal $s \in \sset_r$) with the receiver's type $k_r$ being recommended to play $a_1$ under signal $s$; formally,
	\[
		R_I^{\kvec} \defeq \left\{ r \in \rec \mid  \exists (r,s) \in I : k_r \in s \right\}.
	\]

	First, we show that, when using matroid notation, the left-hand side of Equation~\eqref{eq:approximation_oracle} can be expressed as the sum of a non-decreasing submodular set function and a linear set function.
	To this end, let $f^{\lambdavec}_\theta:2^{\G_\sset} \to \mathbb{R}_+$ be defined as $ f^{\lambdavec}_\theta(I) = \sum_{\kvec \in K} \lambda_{\kvec}  f_\theta(R_I^{\kvec} )$ for every subset $I \subseteq \G_\sset$.
	We prove that $f^{\lambdavec}_\theta$ is submodular.
	Since $f^{\lambdavec}_\theta$ is a suitably defined weighted sum of the functions $f_\theta$, it is sufficient to prove that, for each type profile $\kvec \in K$, the function $f_\theta: 2^\rec \to [0,1]$ is submodular in the sets $R_I^{\kvec}$.
	For every pair of subsets $I \subseteq I' \subseteq \G_\sset$, and for every receiver $r \in \rec$ and signal $s \in \sset_r$, the marginal contribution to the value of function $f_\theta$ due to the addition of element $(r,s)$ to the set $I$ is:
	\begin{align*}
		f_\theta( R_{I \cup (r,s)}^{\kvec}   )-f_\theta( R_I^{\kvec}   ) & = \event \left\{  k_r \in s \wedge \nexists (r,s') \in I :  k_r \in s' \right\} \Big( f_\theta( R_I^{\kvec} \cup \{ r\} )-f_\theta( R_I^{\kvec}   ) \Big)  \ge \\
		& \geq \event \left\{  k_r \in s \wedge \nexists (r,s') \in I' :  k_r \in s' \right\} \Big( f_\theta( R_I^{\kvec} \cup \{ r\} )-f_\theta( R_I^{\kvec}   ) \Big) \ge\\
		& \ge	\event \left\{  k_r \in s \wedge \nexists (r,s') \in I' :  k_r \in s' \right\}  \Big( f_\theta( R_{I'}^{\kvec} \cup \{ r\} )-f_\theta( R_{I'}^{\kvec}   ) \Big) =\\
		& =f_\theta( R_{I' \cup (r,s)}^{\kvec}   )-f_\theta( R_{I'}^{\kvec}   ),
	\end{align*}
	where the last inequality holds since the functions $f_\theta$ are submodular by assumption.
	Since the last expression is the marginal contribution to the value of function $f_\theta$ due to the addition of element $(r,s)$ to the set $I'$, the relations above prove that the function $f^{\lambdavec}_\theta$ is submodular.
	Let $\ell^{\wvec} : 2^{\G_\sset} \to \mathbb{R}_+$ be a linear function such that $\ell^{\wvec}(I) = \sum_{r\in\rec} w_{r,s_r}$ for every basis $I \subseteq \B(\M_\sset)$, with each $s_r \in \sset_r$ being the signal of receiver $r \in \rec$ specified by the signal profile corresponding to the basis, namely $(r,s_r) \in I$.
	Then, we have that finding a signal profile $\svec \in \sset$ satisfying Equation~\eqref{eq:approximation_oracle} is equivalent to finding a basis $I \in \B(\M_\sset)$ of the matroid $\M_\sset$ (representing a direct signal profile) such that:
	\[
		f^{\lambdavec}_\theta(I)+ \ell^{\wvec}(I) \ge  \max_{I^\star \in \B(\M_\sset)} \left\{\alpha \sum_{\kvec \in K} f^{\lambdavec}_\theta(I^\star) + \ell^{\wvec}(I^\star) \right\}-\epsilon.
	\]
	Notice that, for $\epsilon' > 0$, the algorithm of Theorem~\ref{thm:submodular} by~\citet{sviridenko2017optimal} can be employed to find a basis $I \in \B(\M_\sset)$ such that $f^{\lambdavec}_\theta ( I ) + \ell^{\wvec} ( I ) \ge \left( 1 - \frac{1}{e}  \right) f^{\lambdavec}_\theta ( I' ) + \ell^{\wvec} ( I' ) - O ( \epsilon' ) v$ for every $I' \in \B(\M)$ with high probability, employing time polynomial in $|\G_\sset|$ and $\frac{1}{\epsilon}$.
	Since $|\G_\sset|$ is polynomial in $n$ and $v$ is  polynomial in $|K| ,\max_{r,s} |w_{r,s}|$ and $\max_{\kvec} \lambda_{\kvec}$, by setting $\epsilon' = O(\frac{\epsilon}{v})$ and $\alpha = 1 - \frac{1}{e}$, we get the result.
\end{proof}

\end{document}